%% file: main.tex
\documentclass[a4paper,UKenglish,cleveref, autoref, thm-restate,dvipsnames]{lipics-v2021}

\usepackage{todonotes}
\usepackage{xspace}
\usepackage{mathtools}
\usepackage{multicol}
\usepackage{stmaryrd}
\usepackage{amsthm}
\usepackage{amssymb}
\usepackage{nicefrac}
\usepackage{xparse}
\usepackage{array}
\usepackage{makecell}
\usepackage{wasysym}
\usepackage{cleveref}

\hideLIPIcs  

\title{Reaching as Cheap as Possible in 1-clock Robust Weighted Timed Games} 

\author{Nathalie Bertrand}{Univ Rennes, Inria, CNRS, IRISA}{nathalie.bertrand@inria.fr}{http://orcid.org/0000-0002-9957-5394}{}
\author{Maëlle Gautrin}{ENS Paris-Saclay}{maelle.gautrin@ens-paris-saclay.fr}{}{}
\author{Julie Parreaux}{Univ Rennes, Inria, CNRS, IRISA}{julie.parreaux@irisa.fr}{https://orcid.org/0009-0009-2744-780X}{}

\authorrunning{N. Bertrand, M. Gautrin, J. Parreaux} 

\Copyright{Nathalie Bertrand, Maëlle Gautrin and Julie Parreaux}

\ccsdesc[500]{Theory of computation~Timed and hybrid models}
\ccsdesc[500]{Theory of computation~Automata over infinite objects}
\ccsdesc[500]{Theory of computation~Logic and verification}

\keywords{timed automata, weighted timed games, robustness, games on graphs} 

\category{}
\funding{This work has been partially funded by the project ANR BisoUS
  (ANR-22-CE48-0012).}

\nolinenumbers

\EventEditors{~}
\EventNoEds{1}
\EventLongTitle{}
\EventShortTitle{}
\EventAcronym{}
\EventYear{2026}
\EventDate{September 2--5, 2026}
\EventLocation{Liverpool, United Kingdom}
\EventLogo{}
\SeriesVolume{42}
\ArticleNo{1}

\input{macros}

\begin{document}

\maketitle

\begin{abstract}
  The value problem for 2-player games on graph generally consists in
  determining the minimal value $\MinPl$ can ensure against any
  possible strategy for $\MaxPl$. We consider here the value problem
  for reachability objectives in weighted timed games (WTGs) under a
  robust semantics. WTGs are a modelling formalism combining real-time
  constraints and integer weights on transitions and locations in an
  adversarial setting. Robustness allows for representing timing
  imprecisions in the measurement of delays and clock values. Robust
  weighted timed games have been introduced more than a decade ago:
  they are undecidable in general, and were quite recently shown
  decidable for the subclasses of acyclic or divergent robust
  WTGs. This paper pursues the goal of identifying decidable
  subclasses and establishes the decidability of the robust value
  problem for 1-clock WTGs.
\end{abstract}

\section{Introduction}
\label{sec:intro}
\input{introduction}

\section{Robust Weighted Timed Games}
\label{sec:prelim}
\input{prelim}

\section{Characterization of Robust Deadlocks}
\label{sec:deadlocks}
\input{deadlocks}

\section{Transformation into the Copy Game}
\label{sec:copy}
\input{copy}

\section{Recovering the Original Robust Value in the Copy Game}
\label{sec:value}
\input{values}

\section{Conclusion}\label{sec:conc}
\input{conclusion}

\bibliography{bibliography}

	\newpage
	\appendix
        The following appendices contain all proofs of our results.
	\input{app_asap}
\newpage
\input{app_blocage}

\newpage
\input{app_copy}
\newpage
\input{app_copy-rval}
\newpage
\input{app_copy-well-formed}
\newpage
\input{app_copy_e-to-wf}
\newpage
\input{app_copy-rob-eq-val}

\end{document}

%% file: macros.tex
\usepackage{tikz}
\usetikzlibrary{automata,positioning,shapes,arrows,decorations,decorations.pathmorphing,decorations.markings,bending}
\usetikzlibrary{shapes.misc}
\usetikzlibrary{backgrounds}

\tikzset{every loop/.style={looseness=7}, >=latex}
\tikzset{every picture/.style={>=latex}}
\tikzset{min/.style={circle, draw, minimum size=7mm,inner sep=1.5pt,font=\small},
	max/.style={rectangle, draw, minimum size=7mm,inner sep=1.5pt,font=\small},
	trans/.style={->, >=stealth, shorten >=1pt,thick},
	guard/.style={font=\small, above, sloped},
	reset/.style={font=\small, below, sloped},
	weightlabel/.style={font=\small, below, sloped},
	target/.style={circle, font=\LARGE, minimum size=1mm,inner sep=-2pt}}

\tikzstyle{strat} =[minimum width=0.1cm,line width=0.01mm,draw=none]
\tikzstyle{vecArrow} = [decoration={markings,mark=at position
1 with {\arrow[scale=.5,>=latex]{>}}}]

\DeclareMathOperator*{\arginf}{\arg\!\inf}
\DeclareMathOperator*{\argsup}{\arg\!\sup}

\newcommand{\arginfepsilon}[1][]{\ifthenelse{\equal{#1}{}}{{\arginf}^{\varepsilon}}{{\arginf}^{#1}}}
\newcommand{\argsupepsilon}[1][]{\ifthenelse{\equal{#1}{}}{{\argsup}^{\varepsilon}}{{\argsup}^{#1}}}
\newcommand{\tuple}[1]{\ensuremath{\langle #1 \rangle}\xspace}

\newcommand{\last}{\textsf{last}}

\newcommand{\convert}{\ensuremath{\mathsf{convert}}\xspace}
\newcommand{\wt}{\ensuremath{\mathsf{wt}}\xspace}
\newcommand{\wf}{\ensuremath{\mathrm{wf}}\xspace}
\newcommand{\emath}{\ensuremath{\mathrm{e}}\xspace}
\newcommand{\rmath}{\ensuremath{\mathrm{r}}\xspace}

\newcommand{\reals}{\ensuremath{\mathbb{R}}\xspace}
\newcommand{\nats}{\ensuremath{\mathbb{N}}\xspace}
\newcommand{\R}{\ensuremath{\mathbb{R}}\xspace}
\newcommand{\Rplus}{\ensuremath{\R_{\geq 0}}\xspace}

\newcommand{\Z}{\ensuremath{\mathbb{Z}}\xspace}
\newcommand{\N}{\ensuremath{\mathbb{N}}\xspace}
\newcommand{\Q}{\ensuremath{\mathbb{Q}}\xspace}

\newcommand{\C}{\mathcal C}

\newcommand{\MinPl}{\ensuremath{\mathsf{Min}}\xspace}
\newcommand{\MaxPl}{\ensuremath{\mathsf{Max}}\xspace}

\newcommand{\WTG}{WTG\xspace}

\newcommand{\game}{\ensuremath{\mathcal G}\xspace}
\newcommand{\gameEx}{\ensuremath{\tuple{\LocsMin, \allowbreak\LocsMax, \allowbreak\LocsT,
	 \allowbreak\Trans,\weight}}\xspace}

\newcommand{\clockx}{x}

\newcommand{\delay}{\ensuremath{t}\xspace}

\newcommand{\clockbound}{\ensuremath{\mathsf{M}}\xspace}
\newcommand{\val}{\ensuremath{\nu}\xspace}
\newcommand{\guard}{\ensuremath{I}\xspace}
\newcommand{\leftg}[1]{\textsf{left}(#1)}
\newcommand{\rightg}[1]{\textsf{right}(#1)}
\newcommand{\closureg}[1]{\overline{#1}}
\newcommand{\Guards}{\ensuremath{\mathsf{Guards}}\xspace}
\newcommand{\reset}{\ensuremath{Y}\xspace}

\newcommand{\loc}{\ensuremath{\ell}\xspace}
\newcommand{\Locs}{\ensuremath{L}\xspace}

\newcommand{\LocsMin}{\ensuremath{\Locs_{\MinPl}}\xspace}
\newcommand{\LocsMax}{\ensuremath{\Locs_{\MaxPl}}\xspace}

\newcommand{\LocsT}{\ensuremath{\Locs_T}\xspace}
\newcommand{\Trans}{\ensuremath{\Delta}\xspace}
\newcommand{\trans}{\ensuremath{\delta}\xspace}

\newcommand{\maxBound}{\ensuremath{M}}

\newcommand{\RTgame}{\ensuremath{\mathcal{A}}\xspace}

\newcommand{\rRTgame}[1][]{\ifthenelse{\equal{#1}{}}
{\ensuremath{\mathcal{R}(\RTgame)}\xspace}
{\ensuremath{\mathcal{R}^{#1}(\RTgame)}\xspace}}

\newcommand{\weight}{\ensuremath{\mathsf{wt}}\xspace}

\newcommand{\weightC}{\ensuremath{\weight_\Sigma}\xspace}

\newcommand{\maxWeightLoc}{W_{\mathsf{loc}}}

\newcommand{\minstrategy}{\ensuremath{\sigma}\xspace}
\newcommand{\maxstrategy}{\ensuremath{\tau}\xspace}

\newcommand{\robminstrategy}[1][]{\ifthenelse{\equal{#1}{}}
{\ensuremath{\minstrategy^{\perturbation}\xspace}}{\ensuremath{\minstrategy^{\perturbation,#1}\xspace}}}
\newcommand{\robmaxstrategy}[1][]{\ifthenelse{\equal{#1}{}}
{\ensuremath{\maxstrategy^{\perturbation}\xspace}}{\ensuremath{\maxstrategy^{\perturbation,#1}\xspace}}}

\newcommand{\Strat}[1][]{\ifthenelse{\equal{#1}{}}
{\ensuremath{\ensuremath{\mathsf{Strat^0}}\xspace}}{\ensuremath{\ensuremath{\mathsf{Strat}^0}_{#1}}}\xspace}
\newcommand{\StratMin}[1][]{\ifthenelse{\equal{#1}{}}
{\ensuremath{\Strat_{\MinPl}}}{\Strat_{#1, \MinPl}}\xspace}
\newcommand{\StratMax}[1][]{\ifthenelse{\equal{#1}{}}
{\ensuremath{\Strat_{\MaxPl}}}{\Strat_{#1, \MaxPl}}\xspace}

\newcommand{\rStrat}[1][]{\ifthenelse{\equal{#1}{}}
{\ensuremath{\ensuremath{\mathsf{Strat}^\perturbation}\xspace}}
{\ensuremath{\ensuremath{\mathsf{Strat}}^\perturbation_{#1}}}\xspace}
\newcommand{\rStratMin}[1][]{\ifthenelse{\equal{#1}{}}
{\ensuremath{\rStrat_{\MinPl}}}{\rStrat_{#1, \MinPl}\xspace}}
\newcommand{\rStratMax}[1][]{\ifthenelse{\equal{#1}{}}
{\ensuremath{\rStrat_{\MaxPl}}}{\rStrat_{#1, \MaxPl}\xspace}}

\newcommand{\Value}{\ensuremath{\mathsf{Val}}\xspace}
\newcommand{\dValue}{\ensuremath{\mathsf{rVal}^0}\xspace}

\newcommand{\rValue}[1][]{\ifthenelse{\equal{#1}{}}
{\ensuremath{\mathsf{rVal}^\perturbation}}
{\ensuremath{\mathsf{rVal}^{#1}}}}
\newcommand{\rupperValue}[1][]{\ifthenelse{\equal{#1}{}}
{\ensuremath{\overline{\mathsf{rVal}}^{\perturbation}}}
{\ensuremath{\overline{\mathsf{rVal}}^{\perturbation,#1}}}}
\newcommand{\rlowerValue}[1][]{\ifthenelse{\equal{#1}{}}
{\ensuremath{\underline{\mathsf{rVal}}^{\perturbation}}}
{\ensuremath{\underline{\mathsf{rVal}}^{\perturbation,#1}}}}

\newcommand{\rValueL}[1][]{\ifthenelse{\equal{#1}{}}
{\ensuremath{\overline{\mathsf{rVal}}}}
{\ensuremath{\overline{\mathsf{rVal}}^{#1}}}}
\newcommand{\bValue}[1][]{\ifthenelse{\equal{#1}{}}
{\ensuremath{\overline{\frac{1}{2}\mathsf{rVal}}^{\perturbation}}}
{\ensuremath{\overline{\frac{1}{2}\mathsf{rVal}}^{\perturbation,#1}}}}
\newcommand{\bValueL}[1][]{\ifthenelse{\equal{#1}{}}
{\ensuremath{\overline{\frac{1}{2}\mathsf{rVal}}}}
{\ensuremath{\overline{\frac{1}{2}\mathsf{rVal}}^{#1}}}}
\newcommand{\rValueS}[1][]{\ifthenelse{\equal{#1}{}}
{\ensuremath{\mathsf{rVal}^{\perturbation}}}
{\ensuremath{\mathsf{rVal}^{\perturbation,#1}}}}

\newcommand{\rgame}[1][]{\ifthenelse{\equal{#1}{}}
{\ensuremath{\mathcal{R}(\game)}\xspace}
{\ensuremath{\mathcal{R}^{#1}(\game)}\xspace}}

\newcommand{\stateI}{\ensuremath{(\loc_0, 0)}\xspace}

\newcommand{\moveto}[1]{\ensuremath{\xrightarrow{#1}}\xspace}
\newcommand{\move}[1][]{\ifthenelse{\equal{#1}{}}
{\ensuremath{\trans, \delay}\xspace}
{\ensuremath{\trans_{#1}, \delay_{#1}}\xspace}}
\newcommand{\moveP}[1][]{\ifthenelse{\equal{#1}{}}
{\ensuremath{\trans', \delay'}\xspace}
{\ensuremath{\trans'_{#1}, \delay'_{#1}}\xspace}}
\newcommand{\umove}[1][]{\ifthenelse{\equal{#1}{}}
{\ensuremath{\utrans, \delay}\xspace}
{\ensuremath{\utrans_{#1}, \delay_{#1}}\xspace}}

\newcommand{\play}{\ensuremath{\rho}\xspace}

\newcommand{\Play}{\ensuremath{\mathsf{Play}}\xspace}
\newcommand{\rPlay}[1][]{\ifthenelse{\equal{#1}{}}
{\ensuremath{\mathsf{rPlay}^{\perturbation}}\xspace}{\ensuremath{\mathsf{rPlay}^{#1}}\xspace}}
\newcommand{\rPlays}[1][]{\ifthenelse{\equal{#1}{}}
		{\ensuremath{\mathsf{rPlays}^{\perturbation}}\xspace}{\ensuremath{\mathsf{rPlays}^{#1}}\xspace}}

\newcommand{\FPlays}[1][]{\ifthenelse{\equal{#1}{}}
{\ensuremath{\mathsf{FPlays}}\xspace}{\ensuremath{\mathsf{FPlays}^{#1}}\xspace}}
\newcommand{\FPlaysMin}[1][]{\ensuremath{\FPlays[#1]_\MinPl}\xspace}
\newcommand{\FPlaysMax}[1][]{\ensuremath{\FPlays[#1]_\MaxPl}\xspace}

\newcommand{\outcomes}{\ensuremath{\mathsf{Play}}\xspace}
\newcommand{\rPlaysMin}[1][]{\ensuremath{\rPlays[#1]_\MinPl}\xspace}

\newcommand{\FPlaysG}[1][]{\ensuremath{\FPlays[#1]_{\game}}}
\newcommand{\FPlaysGMin}[1][]{\FPlays[#1]_{\game, \MinPl}}
\newcommand{\FPlaysGMax}[1][]{\FPlays[#1]_{\game, \MaxPl}}

\newcommand{\rFPlays}[1][]{\ifthenelse{\equal{#1}{}}
{\ensuremath{\mathsf{FPlays}^{\perturbation}}\xspace}{\ensuremath{\mathsf{FPlays}^{\perturbation, #1}}\xspace}}

\newcommand{\ppath}{\ensuremath{\pi}\xspace}

\newcommand{\Paths}{\ensuremath{\mathsf{Paths}}\xspace}
\newcommand{\FPaths}{\ensuremath{\mathsf{FPaths}}\xspace}

\newcommand{\PPaths}{\ensuremath{\mathsf{FPaths}}\xspace}
\newcommand{\PPathsMin}{\PPaths_\MinPl}
\newcommand{\PPathsMax}{\PPaths_\MaxPl}

\newcommand{\PSPACE}{\ensuremath{\mathsf{PSPACE}}\xspace}
\newcommand{\EXPTIME}{\ensuremath{\mathsf{EXPTIME}}\xspace}

\newcommand{\perturbation}{\ensuremath{p}\xspace}
\newcommand{\perturbationBound}{\ensuremath{\eta}\xspace}

\newcommand{\pertInt}[1][]{\ifthenelse{\equal{#1}{}}
{\ensuremath{[0, 2\perturbationBound]}}
{\ensuremath{[0, 2#1]}\xspace}}
\newcommand{\sem}[2][]{\ifthenelse{\equal{#1}{}}
{\ensuremath{\llbracket#2\rrbracket}}
{\ensuremath{\llbracket#2\rrbracket^{#1}}}\xspace}

\newcommand{\States}{\ensuremath{S}}
\newcommand{\StatesMin}{\ensuremath{\States_{\MinPl}}\xspace}
\newcommand{\StatesMax}{\ensuremath{\States_{\MaxPl}}\xspace}
\newcommand{\StatesT}{\ensuremath{\States_{T}}\xspace}
\newcommand{\states}{\ensuremath{s}}
\newcommand{\Edges}[1][]{\ifthenelse{\equal{#1}{}}
{\ensuremath{E}^{\perturbation}\xspace}
{\ensuremath{\mathsf{Conf}}^{#1}\xspace}}
\newcommand{\EdgesMin}[1][]{\ensuremath{\Edges[#1]_{\MinPl}}\xspace}
\newcommand{\EdgesMax}[1][]{\ensuremath{\Edges[#1]_{\MaxPl}}\xspace}
\newcommand{\EdgesRob}[1][]{\ensuremath{\Edges[#1]_{\textsf{prtb}}}\xspace}
\newcommand{\edge}{\ensuremath{e}}

\newcommand{\Conf}[1][]{\ifthenelse{\equal{#1}{}}
{\ensuremath{\mathsf{Conf}}\xspace}
{\ensuremath{\mathsf{Conf}}^{#1}\xspace}}
\newcommand{\ConfMin}[1][]{\ensuremath{\Conf[#1]_{\MinPl}}\xspace}
\newcommand{\ConfT}[1][]{\ensuremath{\Conf[#1]_{T}}\xspace}
\newcommand{\ConfMax}[1][]{\ensuremath{\Conf[#1]_{\MaxPl}}\xspace}

%% file: introduction.tex
\paragraph*{Timed Automata and Weighted Timed Games}
imed automata form an elegant and well-studied model for real-time systems~\cite{AD-tcs94}.
To check for the existence of an error execution, which can be phrased as a reachability question,
efficient algorithms and tools have been developed since then.

To represent real-time controller synthesis problems, extensions of timed automata with weights
and two antagonistic players have been introduced under the terminology of \emph{weighted timed games}
(WTGs, for short)~\cite{Alur2004, Bouyer2004}.
In WTGs, locations and transitions are equipped with integer weights.
During an execution,transition weights are summed and added to the location weights scaled by the elapsed time.
The objective of Player \MinPl is to reach a target set of locations while minimising the cumulative weight.
The value problem asks whether \MinPl can ensure the cumulative weight is below a given threshold.

Interestingly, the presence of both negative and positive weights
makes the value problem for WTGs challenging: it is undecidable in
general~\cite{Brihaye2014}. This negative result for WTGs with at
least three clocks has been recently improved to show the
undecidability of the value problem for WTGs with two
clocks~\cite{GOV-fossacs26}.  When restricting to 1-clock WTGs, i.e.\
WTGs for which the underlying timed automaton has a single clock, it
has been shown that the value can be computed by various
algorithms~\cite{bouyer2006almost,HIM-concur13}, yet the value problem
is \PSPACE-hard~\cite{Fearnley2020}.

The decidability frontier of the related problem of approximating the
value in WTGs has also been studied. It is undecidable in
general~\cite{GO-concur24} and decidable for restricted classes for
non-negative weights only~\cite{bouyer2015value} and also including
negative weights~\cite{BusattoGaston2018}.

\paragraph*{Robustness}
Timed automata are an elegant model, but they expose some undesirable
artefacts from mathematical modelling.  Specifically, the precision in
time measurement is supposed to be arbitrary, which is unrealistic for
practical real-time systems.  Towards implementability, robustness has
been considered in the literature.  A parametric semantics that allows
some perturbation of delays (up to a parameter bound) was introduced
in~\cite{Puri-deds00}.  The early works consider a robust semantics
for timed automata, in which perturbations are controlled by an
opponent, while the controller has a reachability
objective~\cite{Sankur2013, Wulf2004,bouyer2015robust,
bouyer2013robust}.  It thus amounts to deciding whether there exists a
robust execution reaching a target set of locations.  To solve this
problem, the data structure of Difference Bound Matrix (DBM), which is
standard in algorithms for timed automata, has been refined to shrunk
DBMs~\cite{Sankur2014}.  While several variants of robustness have
been considered in the literature, we will focus on the conservative
semantics~\cite{Sankur2013,Oualhadj2014, BusattoGaston2019}.  There,
the delays proposed by the controller before firing a transition must
remain feasible under any perturbation (within the allowed
perturbation bound) proposed by its opponent.

In the context of weighted timed games, the perturbations are naturally in the hands of the opponent \MaxPl,
who is antagonistic to \MinPl.
The robust value problem asks whether \MinPl can guarantee reaching the target set of locations
with the cumulative weight bounded by a threshold.
Robust strategies must therefore anticipate any delay perturbation \MaxPl could make.
The robust value is then defined for infinitesimal perturbations,
as in the timed automata’s case~\cite{Puri-deds00, Sankur2013}.
The decidability landscape for the robust value problem is currently as follows.
Perhaps unsurprisingly, like for non-robust WTGs, it was shown a decade ago that
the robust value problem is undecidable in general~\cite{bouyer2015robust, Guha2015}.
More recently, classes of WTGs for which the robust value problem is decidable have been identified,
notably divergent WTGs (introduced in~\cite{bouyer2015value, BusattoGaston2017})
and acyclic WTGs~\cite{monmege2024synthesis}.
For acyclic WTGs, the decidability proof can be seen as an involved parametric extension of the fixpoint
computation of the (exact) value~\cite{Alur2004}.
For divergent WTGs, a finite unfolding of the game is performed to return to acyclic WTGs.
In the special case of 1-clock WTGs, to our knowledge, the only decidability result concerns
restricted 1-clock WTGs in which all weights are non-negative, and with an assumption about the cycles~\cite{Guha2015}.
The proof technique generalises the construction for 1-clock WTGs under the classical (i.e.\
non-robust) semantics~\cite{bouyer2013robust}.

\paragraph*{Contribution}
In this paper, we establish the decidability of the robust value
problem for a wide class of 1-clock WTGs with arbitrary integer
weights. The only assumptions we make are common when studying robust
WTGs, see~\cite{monmege2024synthesis} for instance: first the absence
of states with exact value $-\infty$, and second the clock valuation
is bounded by a maximal constant $\clockbound$.

To prove the decidability of the robust value problem for 1-clock
WTGs, we transform the original WTG $\game$ into another WTG $\C$
formed of several copies of $\game$. Intuitively, a copy indexed by
some integer $a$ represents that the clock valuation is bounded below
by $a$. This accounts for the fact that \MaxPl can use delay
perturbations to enforce lower-bounds on the clock valuation (until
the potential next reset). We prove that exact value in the copy game
$\C$ coincides with the robust value in the original game
$\game$. Since the value problem for 1-clock WTGs is
decidable~\cite{monmege2025decidability}, we obtain the desired
decidability result. A fundamental property of the copy game is that
it exhibits no deadlocks caused by robustness. Therefore the exact
value and robust value coincide in $\C$. A key in our development is
thus to identify deadlocks that are not present in the classical exact
semantics and appear in the robust semantics.

%% file: prelim.tex
We consider weighted timed games with a single \emph{clock}, denoted by $\clockx$.
The \emph{valuation} of $\clockx$ is a non-negative real number $\val \in \Rplus$.
For a valuation $\val$, a delay $t \in \Rplus$ and a set $Y \subseteq \{x\}$, we uniformly 
write $\val[Y \coloneqq 0]$ for the valuation obtained by a reset of $Y$, that is
$\val[Y \coloneqq 0] = \val$ if $Y = \emptyset$ and $\val[Y \coloneqq 0] = 0$ if $Y = \{x\}$.
A \emph{guard} is denoted by an interval $I \subseteq \reals^+$ of non-negative real values 
with bounds in $\nats \cup \{+\infty\}$.
It constrains the valuations of the clock:
a valuation $\val$ satisfies a guard $\guard$, denoted $\nu \models \guard$, if  $\val \in \guard$.
Given a guard $\guard$, we denote by $\closureg{\guard}$ (resp. $\leftg{\guard}$, and $\rightg{\guard}$)
the closure (resp. lower, and upper bounds) of $\guard$.
We let $\Guards$ denote the set of guards over the clock $\clockx$.

We now introduce the central notion of this paper, namely weighted timed games. 
\begin{definition}
  A \emph{weighted timed game} (\WTG) is a tuple $\game=\gameEx$ where
  $\Locs=\LocsMin\uplus\LocsMax\uplus\LocsT$ is a finite set of
  \emph{locations}, partitioned into \MinPl locations, \MaxPl
  locations, and \emph{target} locations,
  $\Trans\subseteq \Locs\times\Guards\times \{\emptyset, \{x\}\}
  \times \Locs$ is a finite set of \emph{transitions}, and
  $\weight\colon \Trans\uplus\Locs \to \Z$ is a
  \emph{weight~function}.
\end{definition}
An example of \WTG is depicted in Figure~\ref{fig:rob-valIt_ex-3}.

\begin{figure}[htbp]
	\centering
	\begin{tikzpicture}

		\node[min, label={left:$\loc_0$}] (l0) {$0$};
		\node[max, right=3.75cm of l0, label={above:$\loc_2$}] (l2) {$0$};
		\node[min, below=1cm of l2, label={below:$\loc_1$}] (l1) {$1$};
		\node[min, right=8cm of l0, label={[xshift=-.6cm, yshift=-0.3cm]$\loc_3$}] (l3) {$1$};
		\node[target, right=3cm of l3] (target) {\smiley};

		\node[above=1.25cm of l0] (p1) {};
		\node[right=8cm of p1] (p2) {};

		\draw[trans,Brown] (l0) |- node[guard,xshift=2.25cm] {$\delta^2 \colon \clockx \leq 2$} node[weightlabel, below,xshift=2.25cm] {1} (l1);
		\draw[trans,RoyalBlue] (l0) to[bend right=20] node[guard] {$\delta^0 \colon \clockx \leq 1$} node[weightlabel, below] {1} (l2);

		\draw (l0) -- (p1);
		\draw (l0) |- (p2);
		\draw[trans,BrickRed] (p1)  node[guard,xshift=4.5cm] {$\delta^3 \colon \clockx \geq 2$} node[weightlabel,xshift=4.5cm] {1} -| (l3);

		\draw[trans,OrangeRed] (l1) node[guard,xshift=2.25cm] {$\delta^4 \colon \clockx \leq 1$} -| (l3);

		\draw[trans,Purple] (l2) to[bend right=20] node[guard] {$\delta^1 \colon \clockx \leq 1$} node[weightlabel, below] {1} (l0);

		\draw[trans,ForestGreen] (l2) to node[guard] {$\delta^5 \colon \clockx \leq 2$} node[reset]  {$\clockx \coloneqq 0$} (l3);

		\draw[trans,CadetBlue] (l3) to[bend right=20] node[guard] {$\delta^6 \colon \clockx \leq 2$} node[weightlabel, below] {1} (target);

		\draw[trans] (l3) to[bend left=20] node[guard] {$\delta^7 \colon \clockx \leq 1$} (target);

	\end{tikzpicture}
	\caption{A \WTG where circle locations are controlled by
          \MinPl, the square location is controlled by \MaxPl and the
          target set is $T = \{\smiley\}$. Interval guards are denoted
          by constraints, for instance $x \geq 2$ represents
          $[2,+\infty]$. A reset of $Y = \{x\}$ is indicated by
          $x \coloneqq 0$, and when $Y = \emptyset$ we ommit the reset
          on the transition.  The weight function is indicated by
          labels on locations and transitions, \emph{e.g.}
          $\weight(\ell_3) =1$, $\weight(\delta^2) =1$, and omitted
          weights on transitions are $0$, \emph{e.g.}
          $\weight(\delta^4) =0$.}
	\label{fig:rob-valIt_ex-3}
\end{figure}
Before formalizing definitions, let us first give intuitions on how
weighted timed games are played.  We consider exact and robust
semantics for weighted timed games, each given by a 2-player game
played on an infinite transition system.  In these games, each player
chooses delays and transitions from configurations (i.e.\ a pair from
a location and a valuation) associated with their respective
locations.  Of particular interest to us is the robust semantics,
which intuitively, in contrast to the exact semantics, allows player
\MaxPl to perturb the delays chosen by \MinPl.  More precisely, in the
$\perturbation$-robust semantics, for a fixed parameter
$\perturbation >0$, \MaxPl will be allowed to perturb \MinPl's delays
by at most $\perturbation$.  We adopt here the conservative notion of
robustness (see~\emph{e.g.}~\cite{Oualhadj2014}) in which the delays
proposed by \MinPl must remain feasible after \MaxPl has applied any
possible perturbation. The exact semantics can be seen as the
instantiation of the $p$-robust semantics with $\perturbation=0$, and
for $\perturbation > 0$, the $\perturbation$-robust conservative
semantics may only reduce the choices of \MinPl.

To define the exact and robust semantics, we introduce the following notations.
\emph{Configurations} of $\game$ are pairs $(\loc,\val) \in \Locs \times \Rplus$
formed of a location and a valuation.
$\Conf = \ConfMin \uplus \ConfMax \uplus \ConfT$ denotes the set of all configurations, 
partitioned into configurations belonging to $\MinPl$, $\MaxPl$
and final configurations, according to the type of locations.

\begin{definition}
	\label{def:sem-squelette}
	A \WTG $\game = \gameEx$ and $p\geq 0$ induce the \emph{$p$-robust conservative semantics}
	(for $p >0$) and \emph{exact semantics} (for $p=0$)
        $\sem[\perturbation]{\game} = \tuple{\States, \Edges,
          \weight}$ where
\begin{itemize}
	\item $\States = \StatesMin \uplus \StatesMax \uplus \StatesT$ is the
        set of \emph{states} with $\StatesMin = \ConfMin$,
        $\StatesMax = \ConfMax \cup (\ConfMin \times \Trans \times \Rplus )$ and
		$\StatesT = \ConfT$;
		\item 
        $\Edges = \EdgesMin \uplus \EdgesMax \uplus \EdgesRob$ is the set
        of \emph{edges} with
	\begin{align*}
		\EdgesMax &=
		\big\{\big((\loc, \val) \moveto{\move}
		(\loc', (\val + \delay)[\reset \coloneqq 0] )\big) ~|~ \loc \in \LocsMax, 
		\trans = (\loc, \guard, \reset, \loc') \in \Trans, 	
		\val + \delay \models \guard
		\big\} \\
		\EdgesMin &= \big\{\big((\loc, \val) \moveto{\move}
		((\loc, \val), \move)\big) ~|~ \loc \in \LocsMin,
		\trans = (\loc, \guard, \reset, \loc') \in \Trans,
		[\val + \delay, \val + \delay + \perturbation] \subseteq I
		\big\} \\
		\EdgesRob &= \big\{\big(((\loc, \val), \move)
		\moveto{\trans, \varepsilon}
		(\loc', (\val + t +\varepsilon)[\reset \coloneqq 0])\big) ~|~
		\trans = (\loc, \guard, \reset, \loc') \in \Trans, 
		\varepsilon \in [0, \perturbation]
		\big\}
	\end{align*}
	\item 
        $\weight \colon \States \cup \Edges \to \Z$ the weight
        function such that for all states $\states \in \States$ with
        $\states = (\loc, \val)$ or $\states = ((\loc, \val), \move)$,
        $\weight(\states) = \weight(\loc)$, and all edges
        $\edge \in \Edges$, $\weight(\edge) = \weight(\trans)$ if
        $\edge = (\states \moveto{\move} \states')$ with
        $\states \in \Conf$, or $\weight(\edge) = 0$, otherwise.
\end{itemize}
\end{definition}

\begin{remark}
  In~\cite{bouyer2013robust,bouyer2015robust, Guha2015}, the set of
  possible perturbations for \MaxPl forms the symmetric interval 
  $[-\perturbation', \perturbation']$.
  For technical reasons, we use here an equivalent formalisation with an
  increase of at most $\perturbation =2 \perturbation'$ of the delays
  proposed by $\MinPl$.
\end{remark}

Note that states in $\StatesMax$ are of two forms: either a configuration of \MaxPl 
in which they choose a delay and an action, or a state in $\ConfMin \times \Trans \times \Rplus$, 
from which \MaxPl only chooses the delay perturbation via an edge in  $\EdgesRob$.
The latter will be referred to as \emph{virtual states}, as they do not 
correspond to configurations of $\game$; all other states are \emph{real states}.

In the special case of $p=0$, the infinite transition system $\sem[0]{\game}$
describes the exact semantics of the game.
Perturbation edges in $\EdgesRob$ are trivial: 
each step of $\MinPl$ with its choice in $\EdgesMin$ of delay and transition $(\delta,t)$,
is followed by a dummy edge $((\loc, \val), \delta, t) \xrightarrow{\delta, 0} (\loc', \val') \in \EdgesRob$
corresponding to $\MaxPl$ having no possibility to perturb the chosen delay $t$.

For $s$ a state of $\sem[\perturbation]{\game}$, we write $\Edges(s)$
for the set of \emph{enabled edges} in $\sem[\perturbation]{\game}$
from $s$, that is, the edges with source state $s$.  A state $s$ is a
\emph{deadlock} when $\Edges(s) = \emptyset$.  We extend the notation
$\Edges$ for the set of \emph{enabled transitions} and terminology of
deadlock to locations in~$\game$.

Notably, the $\perturbation$-robust semantics may turn real states of \MinPl into 
deadlocks even if the exact semantics contains an enabled edge.
Thus, unlike in the literature~\cite{Alur2004}, we allow state \emph{and} location deadlocks.

\smallskip
\noindent\textbf{Reaching as cheap as possible.} In the game $\game$
under $\perturbation$-robust (or exact) semantics, the objective of \MinPl is to
reach a target configuration, while minimising the cumulated weight of the
finite play built with \MaxPl.

A \emph{finite play} of $\game$ w.r.t. the $\perturbation$-robust semantics
is a sequence of edges in the transition system $\sem[\perturbation]{\game}$ 
that starts in a real state of $\game$.
Said differently, finite plays cannot start in a virtual state.
We denote by $|\play|$ the number of edges from $\EdgesMax \cup \EdgesMin$ of $\play$
(i.e. it is not exactly the length of the play), and
by $\last(\play)$ its last state
A \emph{maximal play} is a maximal sequence of consecutive edges: 
it is either a finite play reaching a deadlock state of $\game$,
or an infinite sequence such that all its prefixes are finite plays.
We use $\FPlaysG[\perturbation]$ to denote the set of finite plays on
$\sem[\perturbation]{\game}$, and write $\FPlaysGMin[\perturbation]$
(resp.~$\FPlaysGMax[\perturbation]$) for the subset of finite
plays ending in a state of \MinPl (resp.~\MaxPl).

To deal with the objective of \MinPl, we associate to every finite play
$\play= \states_0 \moveto{\move[0]} \states_1
\moveto{\move[1]} \cdots \states_k$
its \emph{cumulated weight}, taking into account both discrete and continuous weights:
$\weightC(\play)=\sum_{i=0}^{k-1} [t_i\times \weight(\states_i) + \weight(\trans_i)]$.
For a maximal play $\play$, its \emph{weight} $\weight(\play)$ is set to $\weightC(\play)$
if it ends in a target real state  $(\loc,\val)$
with $\loc\in \LocsT$, and  $\weight(\play) = +\infty$ if \play does not reach $\LocsT$
(either \play is infinite or it ends in a deadlock state that is not a target state).

A \emph{path} is a finite or infinite sequence $\ppath$ of transitions 
of~$\game$ and we write $\FPaths_\game$ for the set of finite paths.  
A \emph{target path} is a finite path ending in the target set $\LocsT$.
We denote by $\ppath_1 \xrightarrow{\trans} \ppath_2$ the concatenation
of two paths $\ppath_1 \xrightarrow{\trans}$ and $\ppath_2$, and by
$|\ppath|$ the length of $\ppath$.
Note that each play~$\play$ of $\sem[\perturbation]{\game}$
projects to a unique path $\ppath$ (by keeping only the transitions
of edges from real states of $\game$):
we say that $\play$ \emph{follows} path $\ppath$.

A \emph{strategy} for \MinPl (resp.~\MaxPl) is a mapping from finite plays
ending in a state of \MinPl (resp.~\MaxPl) to a decision in $(\move)$
labelling an edge of $\sem[\perturbation]{\game}$ from the last state
of the~play.
As argued earlier, states of \MinPl may contain deadlock states, so that
we consider strategies of \MinPl to be partial mappings.
For instance, in the \WTG depicted in \figurename{~\ref{fig:rob-valIt_ex-3}} and a
perturbation $\perturbation$, a strategy for \MinPl in all plays
ending in $(\loc_3, \val)$ can be defined only when
$\val(x) \leq 2 - \perturbation$ since, otherwise, there are no
outgoing edges in $\sem[\perturbation]{\game}$ from this state.
Symmetrically, we require  \MaxPl to always propose a move unless the
play is in a deadlock state - since otherwise \MaxPl can create deadlock by
not playing.
More formally, a strategy for \MinPl,
denoted $\robminstrategy$, is a (possibly partial) mapping
$\robminstrategy \colon \FPlaysMin[\perturbation] \to \Edges$ such
that $\robminstrategy(\play) \in \Edges(\last(\play))$.
A strategy for \MaxPl, denoted $\robmaxstrategy$, is a (possibly partial) mapping
$\robmaxstrategy \colon \FPlaysMax[\perturbation] \to \Edges$ such
that for all $\play$, if $\Edges(\last(\play))\neq \emptyset$, then
$\robminstrategy(\play)$ is defined, and in this case $\robminstrategy(\play) 
\in \Edges(\last(\play))$.
The set of strategies of \MinPl (resp.~\MaxPl) with the perturbation $\perturbation$
is denoted by $\rStratMin[\game]$ (resp.~$\rStratMax[\game]$).
If the game is clear from the context, we remove $\game$ from (all) the notations.

A play or finite play $\play= \states_0 \moveto{\move[0]} \states_1
\moveto{\move[1]} \cdots$ \emph{conforms} to a strategy $\robminstrategy$ 
of $\MinPl$ if for all $k$ such that $\states_k$ belongs to $\MinPl$, we have that
$(\move[k]) = \robminstrategy(\states_0 \moveto{\move[0]} \cdots \states_k)$.
Similarly, we define when plays conform to strategies $\robmaxstrategy$ of \MaxPl.
For every pair of strategies $(\robminstrategy, \robmaxstrategy)$ of players \MinPl and \MaxPl, 
and for every real state~$(\loc_0,\val_0)$, we let
$\outcomes((\loc_0,\val_0),\robminstrategy,\robmaxstrategy)$ be
the \emph{outcome} of $\robminstrategy$ and $\robmaxstrategy$ from $(\loc_0,\val_0)$, 
defined as the unique maximal play that starts in $(\loc_0,\val_0)$  
and conforms to $\robminstrategy$ and $\robmaxstrategy$. 

\smallskip
\noindent\textbf{Robust value problem.}
We are interested in the minimal cumulated weight that \MinPl can
guarantee while reaching the target, whatever the choices of actions
and perturbations \MaxPl makes.  To formalize this, we introduce
\emph{robust values}.  For $p > 0$, $(\loc, \val)$ a real state of
$\game$ and $\robmaxstrategy \in \rStratMax$ a fixed strategy for
\MaxPl, we let
$\rValue[\perturbation,\robmaxstrategy]_\game(\loc,\val) =
\inf_{\robminstrategy \in \rStratMin} \weight(\outcomes((\loc, \val),
\robminstrategy, \robmaxstrategy))$ be the minimal cumulated weight
\MinPl can ensure assuming \MaxPl plays $\robmaxstrategy$. Then the
\emph{$\perturbation$-robust \MaxPl value} is
$\rlowerValue_\game(\loc, \val) = \sup_{\robmaxstrategy \in
  \rStratMax}
\rValue[\perturbation,\robmaxstrategy]_\game(\loc,\val)$.  (Note
however that this value is not defined for virtual states of the
robust semantics.)  Since the $\perturbation$-robust conservative
semantics defines a \emph{quantitative reachability
  game}~\cite{BrihayeGeeraertHaddadLefaucheuxMonmege-15},
the general determinacy result of~\cite[Theorem 2.2]{Brihaye2021}
applies and the \emph{$\perturbation$-robust value} can be defined as
follows:
$\rValue[\perturbation]_\game(\loc,\val) = \rlowerValue_\game(\loc,
\val) = ~\inf_{\robminstrategy \in \rStratMin}~ \sup_{\robmaxstrategy
  \in \rStratMax} \weight(\outcomes((\loc, \val), \robminstrategy,
\robmaxstrategy))$.

For $\perturbation = 0$, $\dValue$ coincides with the \emph{(exact) value} at the
core of the well-studied value problem for
\WTG{s}~\cite{bouyer2006almost,bouyer2015value,monmege2025decidability,GO-concur24,GOV-fossacs26}. 
We sometimes write $\Value$ for $\rValue[0]$ for simplicity. 
Moreover, the values enjoy the following monotony property:
\begin{lemma}[\cite{monmege2024synthesis}]
	\label{lem:rVal-monotony}
	Let $\game$ be a \WTG, and  $\perturbation > \perturbation' \geq 0$ be two perturbations.
	Then, for every real state $(\loc, \val)$,
	$\rValue[\perturbation](\loc, \val) \geq \rValue[\perturbation'](\loc, \val)$.
\end{lemma}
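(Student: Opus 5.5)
The approach is a strategy-transfer argument: every strategy of \MinPl that is legal in the $\perturbation$-robust semantics yields a strategy in the $\perturbation'$-robust semantics whose guarantee against every \MaxPl strategy is at least as good. The point is that for $\perturbation > \perturbation'$, every edge of $\EdgesMin$ in $\sem[\perturbation]{\game}$ is also an edge of $\EdgesMin$ in $\sem[\perturbation']{\game}$. Indeed, $[\val+\delay, \val+\delay+\perturbation'] \subseteq [\val+\delay, \val+\delay+\perturbation] \subseteq \guard$. Likewise, the perturbation edges available to \MaxPl in $\sem[\perturbation']{\game}$, with $\varepsilon \in [0,\perturbation']$, form a subset of those in $\sem[\perturbation]{\game}$. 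The edges of $\EdgesMax$ are identical in both semantics. Hence every finite play of $\sem[\perturbation']{\game}$ in which \MinPl only uses $\perturbation$-legal moves is also a finite play of $\sem[\perturbation]{\game}$, with the same weight.

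Fix a real state $(\loc,\val)$ and a strategy $\robminstrategy \in \rStratMin$ in the $\perturbation$-semantics. I define $\sigma'$ in the $\perturbation'$-semantics on finite plays as follows. If the play is also a finite play of $\sem[\perturbation]{\game}$ conforming to $\robminstrategy$, then $\sigma'$ plays $\robminstrategy$ on it. This is legal by the inclusion of \MinPl edges. On any other play, $\sigma'$ is arbitrary, or undefined if the state is a deadlock; such plays never arise against $\sigma'$ from $(\loc,\val)$. Now let $\tau'$ be any \MaxPl strategy for $\perturbation'$. The outcome $\play = \outcomes((\loc,\val),\sigma',\tau')$ is, by induction on its prefixes, a play of $\sem[\perturbation]{\game}$ conforming to $\robminstrategy$. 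I then build a \MaxPl strategy $\tau$ in the $\perturbation$-semantics that copies $\tau'$ along $\play$ and plays arbitrarily elsewhere. This is legal because every \MaxPl move of $\tau'$ along $\play$ is available in $\sem[\perturbation]{\game}$. It follows that $\outcomes((\loc,\val),\robminstrategy,\tau) = \play$, so $\weight(\play) \le \sup_{\tau}\weight(\outcomes((\loc,\val),\robminstrategy,\tau))$.

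Taking the supremum over $\tau'$ and then the infimum over $\robminstrategy$, the inf-sup formulation of $\rValue$ (valid by the determinacy result cited above, and directly by definition when $\perturbation'=0$) yields $\rValue[\perturbation'](\loc,\val) \le \rValue[\perturbation](\loc,\val)$.

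The main obstacle is maximality, that is, deadlocks. I must check that $\play$ is maximal in $\sem[\perturbation]{\game}$ exactly when it is maximal in $\sem[\perturbation']{\game}$, or at least that it gets weight $+\infty$ in the $\perturbation$-semantics whenever it does in the $\perturbation'$-semantics. If $\play$ ends in a \MinPl state where $\sigma'$ is undefined, then $\robminstrategy$ is undefined there, so the $\perturbation$-play also stops; it is non-target in both semantics, with weight $+\infty$ on both sides, so the inequality holds. \MaxPl states are never deadlocks in either semantics unless they have no enabled transition, which is the same condition in both. Infinite plays have weight $+\infty$ in both. The target case gives equal weights.

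A subtlety is that the $\perturbation$-play cannot stop earlier than the $\perturbation'$-play: a \MinPl state that is a deadlock for $\perturbation$ but not for $\perturbation'$ only lowers \MinPl's options in the $\perturbation$-game, where its weight is $+\infty$. So the inequality $\weight_{\perturbation'}(\play) \le \weight_{\perturbation}(\play)$ holds in all cases. I will treat this case analysis carefully, since the partiality of \MinPl strategies is the only non-routine point.
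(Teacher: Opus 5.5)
The paper gives no proof of this lemma: it is quoted from \cite{monmege2024synthesis} and used as a black box, so there is no in-paper argument to compare against. Your strategy-transfer argument is a correct, self-contained proof of it.

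The facts you isolate are exactly what is needed:
\begin{itemize}
\item the inclusion $E^{p}_{\MinPl}\subseteq E^{p'}_{\MinPl}$;
\item perturbation edges for $p'$ form a subset of those for $p$;
\item the edges $E_{\MaxPl}$ are identical in both semantics;
\item the weight of a play depends only on the play itself.
\end{itemize}
Replaying the outcome of $(\sigma',\tau')$ in the $p$-semantics, with a \MaxPl strategy that copies $\tau'$ along it, is legitimate. Your case analysis on how the play ends is also complete: a target state, a real \MaxPl state (same enabled edges in both semantics), an infinite play, or a non-target \MinPl state with weight $+\infty$ on the $p$ side.

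The sentence claiming the $p$-play ``cannot stop earlier'' is muddled. With your definition of $\sigma'$, both plays stop at the same state. If one instead requires \MinPl strategies to be defined at every non-deadlock state, the $p$-play can stop earlier, but only at a non-target \MinPl state. That is what your next clause actually uses.

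As a side remark, you can swap the quantifiers: fix $\tau'$, let $\tau$ follow $\tau'$ on plays that are also $p'$-plays, and let $\sigma'$ follow $\sigma$. This yields the same inequality for $\sup_\tau\inf_\sigma$ directly. So both the upper and the lower values are monotone, and the appeal to determinacy is not needed.
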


In this paper, we are interested in the value that \MinPl can
guarantee if \MaxPl is restricted to arbitrarily small
perturbations.
Rather than a fixed perturbation $p>0$, the
\emph{robust value problem} is defined for infinitesimal
perturbations.
Given a \WTG \game, the \emph{robust value} is defined, for every
real state $(\loc, \val)$ of $\game$, by
$\rValue[0^+](\loc, \val) = \lim_{\perturbation \to 0, \perturbation > 0}
\rValue(\loc, \val)$. 
The robust value function is defined as the
limit of the $\perturbation$-robust value functions, and it is
well-defined since it is the limit of a non-increasing sequence of
functions (see Lemma~\ref{lem:rVal-monotony}).

\begin{figure}
	\centering
	\begin{tikzpicture}[xscale=.8]
		\node[max, label={above:{\scriptsize$\loc_{0}$}}] (l0) {\small$-1$};
		\node[min, right=1.5cm of l0, label={above:{\scriptsize$\loc_1$}}] (l1) {\small$0$};
		\node[target, right=2cm of l1] (target) {$\smiley$};

		\draw[trans] (l0) to node[guard] {$\trans_0 \colon x \leq 1$}  (l1);
		\draw[trans] (l1) to[bend left] node[guard] {$\trans_1 \colon x \leq 1$} node[weightlabel] {$1$} (target);
		\draw[trans] (l1) to[bend right] node[guard] {$\trans_2 \colon x \leq 2$} node[weightlabel] {$3$} (target);

		\begin{scope}[xshift=8cm]
			\node[max, label={above:{\scriptsize$\loc_{0}$}}] (l0) {$-10$};
			\node[min, right=2cm of l0, label={above:{\scriptsize$\loc_{1}$}}] (l1) {$-1$};
			\node[target, right=2cm of l1] (target) {\smiley};

			\draw[trans] (l0) to node[guard] {$\trans_0 \colon x \leq 1$} (l1);
			\draw[trans] (l1) to node[guard] {$\trans_1 \colon x \leq 1$} node[weightlabel] {1} (target);

		\end{scope}

	\end{tikzpicture}
	\caption{Left: a WTG where \MaxPl can increase the value by  perturbing the delays of \MinPl.	
	Right: a WTG in which the exact and robust values coincide.
	}
	\label{fig:ex-values}
\end{figure}

\begin{example}
	\label{ex:value}
	To illustrate the notions of values, consider the \WTG depicted in~\cref{fig:ex-values} (left).
	From the initial real state $(\ell_0,0)$, this is a typical example
	where the exact and robust values do not coincide. %
	For the exact value, we observe that \MaxPl has no interest in staying in location $\loc_0$.
	Indeed, whatever the delay chosen by \MaxPl in $\ell_0$, \MinPl
	will always choose $\trans_1$ from $\loc_1$, and waiting in $\loc_0$ decreases 
	the accumulated weight, which is profitable to \MinPl.
	Thus $\Value(\loc_0,0) = 1$. %
	For the robust value however, intuitively \MaxPl can force \MinPl to take
	$\trans_2$ by reaching $\loc_1$ with clock value $1$.
	If we fix $\perturbation>0$, from $(\loc_1,1)$, \MinPl cannot choose $\trans_1$
	even with $0$ delay, as $[1,1+\perturbation]$ is not included in where the guard of $\trans_1$.
	Spending $1$ time unit in $\loc_0$, \MaxPl ensures $\rValue[\perturbation](\loc_0,0) \geq 2$. 
	In fact $\rValue[\perturbation](\loc_0,0) =2 + \perturbation$ and thus $\rValue[0+](\loc_0,0) =2$.

	For the WTG in~\cref{fig:ex-values} (right), the gap between
        the exact and robust values is even infinite!  The exact value
        from $(\loc_0,0)$ corresponds to the behaviour where \MaxPl
        leaves $\loc_0$ immediately, and \MinPl stays 1 time unit in
        $\loc_1$.  Thus $\Value(\loc_0,0) =-1$.  Now, fixing a
        perturbation $\perturbation >0$, it is in \MaxPl's interest to
        reach $\loc_1$ with clock value $1-\perturbation+\varepsilon$,
        for any $\varepsilon \in ]0,p]$.  Doing so, \MaxPl blocks
        \MinPl in $\loc_1$: indeed, \MinPl cannot propose any delay,
        as even for delay $0$, the interval
        $[1 - \perturbation + \varepsilon, 1+ \varepsilon]$ is not
        included in the guard of the unique outgoing transition
        $\trans_1$.
	Therefore, \MaxPl can prevent \MinPl to reach the target and $\rValue[0^+](\ell_0,0) = +\infty$.
\end{example}

The decision problem associated with the robust value is called the
\emph{robust value problem}.  It is naturally defined as follows:
given a \WTG $\game$, an initial location $\loc_0$, and a threshold
$\lambda \in \Q$, does
$\rValue[0^+](\loc_0, 0) \leq \lambda$?

Unsurprisingly, this problem is undecidable when generalized to
\WTG{s} with multiple clocks~\cite[Theorem 4]{bouyer2013robust}.  Some
recent results~\cite{monmege2024synthesis} prove the decidability for
restricted classes of \WTG{s}, for instance \WTG{s} that are
acyclic~\cite{Alur2004} or
divergent~\cite{bouyer2006almost,BusattoGaston2017}.  We focus here on
another class of \WTG{s} for which the exact value problem is known to
be decidable~\cite{monmege2025decidability}, namely one-clock
\WTG{s}. As in~\cite{monmege2024synthesis}, we consider WTGs in which
no state has (exact) value $-\infty$, and such that the clock
valuation is bounded by a maximal constant $\clockbound \in
\nats$. Under these two common assumptions, our main contribution is
the following:
\begin{theorem}
	\label{thm:main}
	The robust value problem is decidable for one-clock \WTG{s}.
      \end{theorem}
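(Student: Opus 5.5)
The plan is to reduce the robust value problem to the exact value problem for one-clock \WTG{s}, which is decidable~\cite{monmege2025decidability}. Given $\game$ with clock bound $\clockbound$, we build a \emph{copy game} $\C$ made of copies $\game_a$ of $\game$ indexed by $a \in \{0,\dots,\clockbound\}$. Being in copy $a$ records that \MaxPl, through accumulated perturbations, has forced the clock to be at least $a$ (or strictly above $a-1$, up to infinitesimals) until the next reset. Inside copy $a$, each guard is restricted by intersecting with this lower bound. A transition of $\game$ whose guard has an upper bound $b$ reachable by \MinPl only ``at the limit'' is turned into a gadget: a \MaxPl location in which \MaxPl chooses either to let the move happen normally or to push the clock past the integer boundary, jumping to copy $a+1$. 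Resets send the play back to copy $0$. Transitions that are robustly unfireable are removed, and the resulting deadlocks are sent to a sink of weight $+\infty$, i.e.\ a non-target deadlock.

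The first step is to characterise the deadlocks created by the robust semantics. For a single clock, a real state $(\loc,\val)$ of \MinPl is a $\perturbation$-deadlock exactly when no outgoing guard $\guard$ satisfies $\val \leq \rightg{\guard}-\perturbation$ with $\val$ admissible in $\guard$. Hence, as $\perturbation \to 0$, the robust deadlocks are configurations where $\val$ lies in a right-closed punctual or boundary region, namely $\val=\rightg{\guard}$ for every available guard, or $\val$ just below that bound. I would prove that \MaxPl can steer the clock into such a boundary region precisely when it can make the clock exceed an integer. This happens through delays of \MaxPl or through perturbations of \MinPl delays chosen so that the target valuation is $\rightg{\guard}$, as in Example~\ref{ex:value}. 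The copy index records exactly this information.

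Second, I would show that $\C$ has no robustness-induced deadlocks, so that in $\C$ the robust value equals the exact value. The argument is that every \MinPl move retained in $\C$ has a guard with some slack, and a standard shrinking argument in the style of~\cite{monmege2024synthesis} converts an $\epsilon$-optimal exact strategy into a $\perturbation$-robust one with cost at most $\epsilon + O(\perturbation)$ times the maximal location weight. Third, and hardest, is the equality $\rValue[0^+]_\game(\loc,0) = \Value_\C(\loc_{(0)},0)$. For the inequality $\leq$, I would transfer \MinPl strategies from $\C$ to $\game$ while maintaining the invariant ``clock in copy $a$ implies $\val \geq a$'', and charge the perturbation cost as above. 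For $\geq$, I would build \MaxPl strategies in $\game$ that simulate copy jumps by choosing perturbations $\varepsilon$ that reach the required boundary, taking care of the $\perturbation$-dependent error terms.

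The main obstacle is the $\geq$ direction when cycles without reset allow \MaxPl to accumulate many small perturbations. The error terms must be bounded uniformly in the number of steps. To handle this, I would use the assumption that no state has exact value $-\infty$ together with the bound $\clockbound$ on the clock. With these, I would argue that $\epsilon$-optimal strategies in $\C$ can be taken to reach the target within a bounded number of steps, or with a bounded total time of $\clockbound$ between resets. This keeps the total perturbation-induced weight at $O(\perturbation)$, so it vanishes in the limit. Theorem~\ref{thm:main} then follows, since $\C$ is a one-clock \WTG of polynomial size.
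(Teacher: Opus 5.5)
Your overall architecture matches the paper's: copies indexed by integer lower bounds on the clock, resets back to copy~$0$, deletion in copy~$a$ of \MinPl transitions whose upper bound is not above $a$, and a reduction to the decidable exact value problem for one-clock WTGs. The gap is your Step~2, which is false as stated, and both your $\leq$ direction and your conclusion rest on it.

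The copy game \emph{does} have robustness-induced deadlocks. Guards have integer bounds, so inside copy~$a$ the non-jumping copies of \MaxPl's transitions must still allow every valuation of the original guard. \MaxPl can therefore stop at $b-\perturbation/2$, with $b=\rightg{\guard}$ for the next \MinPl guard $\guard$, and block \MinPl. For instance, in the copy game of the right WTG of Figure~\ref{fig:ex-values}, the non-jumping copy of $\trans_0$ leads \MaxPl to $(\tuple{\loc_1,0},1-\perturbation/2)$, where \MinPl is stuck. The paper only excludes type~1 patterns in $\C$ (Lemma~\ref{lem:copy}); type~2 patterns survive. Symmetrically, \MinPl's $\varepsilon$-optimal exact strategies may reach a valuation at or just below an integer $i$. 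From there \MaxPl can robustly push past $i$ without this being reflected in the copy index.

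Shrinking \MinPl's delays can at best handle the second phenomenon, and only together with continuity of the exact value. It says nothing about the first, for which you must show that \MaxPl loses at most $\varepsilon$ by jumping instead of approaching a bound. That is the missing idea: both players can be restricted, at cost $\varepsilon$, to $\perturbation$-well-formed strategies that avoid the intervals $(i-N\perturbation,i)$.
\begin{itemize}
\item \MaxPl does so by jumping to copy~$i$ (Lemma~\ref{lem:rval-eq-val_restrictstrat}). This needs jumps to every $b\in\closureg{\guard}$ above $a$, not only to $a+1$. Likewise, a \MinPl transition with $\leftg{\guard}>a$ must lead to copy $\leftg{\guard}$, which your gadget does not provide.
\item \MinPl does so by stopping at $i-N\perturbation$ (Lemma~\ref{lem:copy_Min_rstrict}).
\end{itemize}
Here $N$ comes from robust reachability via shrunk DBMs, and the resulting equality $\rValue[0^+]_\C=\Value_\C$ holds only for valuations outside $\N^*$.

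You also place the main obstacle on the wrong side, and your remedy would not work there. In the $\geq$ direction \MinPl is the opponent and controls the length of the play, so bounded-length $\varepsilon$-optimal strategies \emph{of \MinPl} are irrelevant. The paper instead shifts the delay of the \MaxPl move at step $k$ that simulates a jump by $\varepsilon/(\maxWeightLoc\, 2^{k+1})$. The errors then form a geometric series over plays of arbitrary length (proof of Proposition~\ref{prop:preservation_rval}).

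The length bound $\kappa$ from~\cite{monmege2025decidability} is needed on the other side, in the well-formedness and projection arguments for $\rValue[0^+]_\C\leq\Value_\C$. There each step may shift the valuation by $\perturbation$ or $2N\perturbation$ with no geometric decay, which is why the paper takes $\perturbation\leq\min\{\varepsilon/(\maxWeightLoc\kappa^2),\varepsilon/(8N\maxWeightLoc\kappa)\}$.

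Finally, $\C$ has $\clockbound+1$ copies with $\clockbound$ written in binary, so it is exponential, not polynomial, in the size of $\game$. This does not affect decidability; it yields doubly exponential time.
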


The high-level proof idea to establish this decidability result is as follows.
\begin{enumerate} 
\item Since we saw in~\cref{ex:value} that robustness may introduce
  deadlocks, our first step is to discriminate paths\footnote{Recall
    that paths are sequence of transitions in the game description
    $\game$.}  that are feasible or unfeasible
	in the
  $\perturbation$-robust semantics.  Typically, the path
  $\loc_0 \xrightarrow{\trans_0} \loc_1 \xrightarrow{\trans_1}$ on the
  WTG right of~\cref{fig:ex-values} is $\perturbation$-unfeasible, for
  any $\perturbation >0$.
\item Building on the characterization of $r$-feasible paths (paths
  that are $\perturbation$-feasible for small values of
  $\perturbation$), from a game $\game$, we construct another 1-clock
  \WTG $\game'$ with same robust value, and in which every path is
  either $r$-feasible or unfeasible (that is, $0$-unfeasible).  A
  naive construction of $\game'$ that would keep only $r$-feasible
  paths is however not correct.  To illustrate this, consider the \WTG
  left of~\cref{fig:naive-copies}.  This example only differs from the
  one left of~\cref{fig:ex-values} in the weights in $\loc_0$ and on
  $\trans_2$.  However, because of the smaller weight $-3$ in
  $\loc_0$, it is not profitable to \MaxPl to block $\trans_1$, and
  one can show that $\rValue[0^+](\loc_0,0) = \Value(\loc_0,0) = 1$.
  Therefore, simply removing $\trans_1$ would not preserve the robust
  value: it would increase to $\rValue[0^+](\loc_0,0) = 2$.  Instead,
  in our construction, $\game'$ consists in several copies of $\game$
  indexed by integer clock values (up to the maximal constant) and
  \MaxPl can choose to jump to a copy with larger index, say
  $i \in \nats$, to mimick the fact that they want to use robustness
  and block transitions with guards $I$ such that $\rightg{I} \leq i$.
  We call it the \emph{copy game}. %
  \cref{fig:naive-copies} illustrates the construction on the
  above-mentioned \WTG. From $(\loc_0,0)$ \MaxPl may choose to jump to
  copy~$1$ to prevent \MinPl to play $\trans_1$, but they may as well
  stay in copy~$0$.
\item Finally, in the copy game, we prove that the robust value
  coincides with the exact value.  We therefore provided a polynomial
  reduction from the robust value problem in $\game$ to the exact
  value problem in $\game'$, which is decidable.
\end{enumerate}

\begin{figure}
	\centering
	\begin{tikzpicture}[xscale=.8]
		\node[max, label={above:{\scriptsize$\loc_{0}$}}] (l0) {\small$-3$};
		\node[min, right=1.5cm of l0, label={above:{\scriptsize$\loc_1$}}] (l1) {\small$0$};
		\node[target, right=2cm of l1] (target) {$\smiley$};

		\draw[trans] (l0) to node[guard] {$\trans_0 \colon x \leq 1$}  (l1);
		\draw[trans] (l1) to[bend left] node[guard] {$\trans_1 \colon x \leq 1$} node[weightlabel] {$1$} (target);
		\draw[trans] (l1) to[bend right] node[guard] {$\trans_2 \colon x \leq 2$} node[weightlabel] {$2$} (target);

 		\begin{scope}[xshift=9cm]
			\node[max, label={above:{\scriptsize$\loc_{0}$}}] (l0) {\small$-3$};
			\node[min, right=1.5cm of l0, label={above:{\scriptsize$\loc_1$}}] (l1) {\small$0$};
			\node[min, above=1cm of l1, label={above:{\scriptsize$\loc_1^1$}}] (l2) {\small$0$};
			\node[target, right=2cm of l1] (target) {$\smiley$};
			\node[target, above=1.25cm of target] (target2) {$\smiley$};

			\draw[trans] (l0) to node[guard, below] {$\trans_0 \colon x \leq 1$}  (l1);
			\draw[trans] (l0) to node[guard] {$\trans_0^j \colon x = 1$}  (l2);
			\draw[trans] (l1) to[bend left] node[guard] {$\trans_1 \colon x \leq 1$} node[weightlabel] {$1$} (target);
			\draw[trans] (l1) to[bend right] node[guard] {$\trans_2 \colon x \leq 2$} node[weightlabel] {$2$} (target);
			\draw[trans] (l2) to node[guard] {$\trans_2^1 \colon x \leq 2$} node[weightlabel] {$2$} (target2);

                        \begin{pgfonlayer}{background}  
                          \draw[dashed, rounded corners=5pt,gray]
		([shift={(-0.5,-1)}]l0.south west) rectangle
		([shift={(0.5,.9)}]target.north east);

		\draw[dashed, rounded corners=5pt,gray]
                ([shift={(-0.5,-.25)}]l2.south west) rectangle
                ([shift={(0.5,1)}]target2.north east);
	\end{pgfonlayer}
        \node[above=.35cm of target2,xshift=-.2cm] {Copy 1};
	\node[below=.5cm of target,xshift=-.2cm] {Copy 0};
		\end{scope}

	\end{tikzpicture}
	\caption{A WTG (left) where \MaxPl has no incentive to block
          \MinPl to play optimally and the corresponding copy game
          (right).
          \MaxPl has a new jumping transition $\delta_o^j$ from
          $\ell_0$ to $\ell_1^1$ they can use to prevent \MinPl to
          play $\delta_1$ in the next move: $\delta_1$ is no longer
          present in copy~1.}
	\label{fig:naive-copies}
\end{figure}

%% file: deadlocks.tex
The first step of the proof of~\cref{thm:main} consists in the
identification of paths in the $\perturbation$-robust semantics where
\MaxPl can exploit delay perturbation to create a deadlock, as
illustrated in~\cref{ex:value}. They do so by choosing a perturbed
delay such that the reached valuation is close enough to the
upper-bound of the guard of \MinPl's next transition. Our objective is
to determine when these deadlocks due to perturbations are profitable
for \MaxPl. We do so directly on the WTG structure, i.e. only by
inspecting its paths.

In the following, we fix a WTG $\game$. We assume that $\game$ does
not contain transitions with a singleton guard\footnote{A guard $I$ is \emph{punctual} when $\leftg{I} = \rightg{I}$.}
from locations of \MinPl. This is without lost of generality, since \MinPl cannot choose
a transition with a singleton guard otherwise any positive delay
perturbation of \MaxPl would enforce the guard to be unsatisfied.

We introduce the notions of \emph{feasible} and \emph{r-feasible}
paths (resp. unfeasible and r-unfeasible paths) representing the fact
that \MaxPl cannot constrain, via any choice of delay, \MinPl to be
unable to take the next transition on the path. Said differently, a
path is feasible when \MaxPl cannot avoid all plays along this
path. To properly define this notion, we define the \emph{sub-game
  induced by a path $\ppath$} by the unfolded WTG, denoted by
$\game_\ppath$, where locations are exactly the ones of $\ppath$
(locations from $\game$ may have many occurrences in $\game_\ppath$),
and its transitions are only ones of $\ppath$.

\begin{definition}
	\label{def:feasible}
	A path $\pi$ of $\game$ is \emph{r-unfeasible}
	if there exists $\perturbation \geq 0$ and
        $\robmaxstrategy \in \rStratMax[\game_\ppath]$ such that for
        every strategy $\robminstrategy \in \rStratMin[\game_\ppath]$
        of \MinPl in $\game_\ppath$, we have\footnote{We recall that,
          given a play $\play$, $|\play|$ denotes its number of edges
          except the ones of $\EdgesRob$, that is, the length of its
          underlying path.}:
        $|\rPlays(\stateI, \robminstrategy, \robmaxstrategy)| \neq
        |\ppath|$ where, $\loc_0$ is the first location of $\ppath$.
        When $\perturbation = 0$, we simply say that $\ppath$ is
        \emph{unfeasible}.
\end{definition}
For instance, the path
$\loc_0 \xrightarrow{\trans_0} \loc_1 \xrightarrow{\trans_1} \smiley$
of the WTG depicted in~\cref{fig:ex-values} (right) is feasible since
\MinPl can always reach the target. However, it is r-unfeasible
(since for every $\perturbation > 0$, the strategy of \MaxPl reaching
$(\loc_1,1)$ in the first step avoids \MinPl to take $\trans_1$ as
$\Edges(\loc_1, 1) = \emptyset$.

Although the r-feasibility of a path quantifies over all strategies of
\MinPl, fortunately, we observe that an r-unfeasible path is blocking
already in the extremal case when \MinPl's strategy is to play minimum
delays.  Thus, checking whether a path is r-unfeasible can be done
on such a strategy.  Note however that the minimum delay may not always
exist (if the guard interval is left-open), thus we define a family of
strategies for \MinPl indexed by smaller and smaller errors
$\varepsilon$.
Formally, given a path
$\ppath=\loc_0 \xrightarrow{\trans_0} \dots \xrightarrow{\trans_{k-1}}
\loc_k$, a perturbation bound $\perturbation$, and $\varepsilon > 0$,
we define an \emph{asap strategy} of \MinPl on $\game_\ppath$, denoted
by $\robminstrategy[\varepsilon]_\ppath$, such that for all plays
$\play \in \rPlaysMin$ where $\play$ follows $\ppath$ and
$|\play| < |\ppath|$:
\[
	\robminstrategy[\varepsilon]_\ppath(\play)=
	\begin{cases}
		(\trans_{|\play|}, \delay_{\min}(\val_{|\play|-1}, \guard_{|\play|})) &
		\text{if } \val_{|\play|-1} +  \delay_{\min} \models \guard_{|\play|}; \\
		(\trans_{|\play|}, \delay_{\min}(\val_{|\play|-1}, \guard_{|\play|}) + \varepsilon) & \text{otherwise};
	\end{cases}
\]
where $\guard_{|\play|}$ is the guard of the transition $\trans_{|\play|}$, and, for all 
valuations $\val$ and guards $\guard$, we define
$\delay_{\min}(\val, \guard) = \inf \{\delay \mid \val + \delay \in \closureg{\guard}\}$.
As the intuition suggests, one can show that asap strategies suffice
to check r-unfeasibility of paths (see~\cref{app:asap} for its proof).

\begin{restatable}{lemma}{asap}
	\label{lem:deadlock_asap}
	Let $(\robminstrategy[\varepsilon]_\ppath)_\varepsilon$ be a
        family of asap strategies for \MinPl for a path $\ppath$
        starting at location $\loc_0$.  Then, the two following
        statements are equivalent:
	\begin{enumerate}
		\item\label{itm:deadlock_asap-1} $\ppath$ is r-unfeasible;
		\item\label{itm:deadlock_asap-2} there exists
                  $\perturbation > 0$, a robust strategy
                  $\robmaxstrategy$ for \MaxPl in $\game_\ppath$ and
                  $\epsilon >0$ such that for all
                  $0 < \widehat{\varepsilon} \leq \varepsilon$,
                  $|\rPlay(\stateI,\robminstrategy[\widehat{\varepsilon}]_\ppath,
                  \robmaxstrategy) | < |\pi|$.
	\end{enumerate}
\end{restatable}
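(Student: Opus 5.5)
The direction (2)$\Rightarrow$(1) is immediate. Fix $\perturbation>0$, $\robmaxstrategy$ and $\varepsilon$ as in item~\ref{itm:deadlock_asap-2}. To witness r-unfeasibility we must show that $\robmaxstrategy$ defeats \emph{every} strategy of \MinPl, not only the asap ones, so (2)$\Rightarrow$(1) reduces to the same monotonicity argument as the converse. We therefore organise the proof around one comparison lemma and derive both directions from it.

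\textbf{Comparison lemma.} Let $\robminstrategy$ be any strategy of \MinPl in $\game_\ppath$, and let $\robmaxstrategy$ be any \MaxPl strategy. We build a \MaxPl strategy $\robmaxstrategy'$ and choose $\widehat\varepsilon$ small so that two properties hold. First, whenever $\rPlay(\stateI,\robminstrategy,\robmaxstrategy)$ completes $k$ transitions, the play $\rPlay(\stateI,\robminstrategy[\widehat\varepsilon]_\ppath,\robmaxstrategy')$ completes at least $k$ transitions. Second, at each step the valuation reached under the asap strategy is at most the valuation reached under $\robminstrategy$, as long as no reset has occurred. The key observation is that, along a fixed path in the one-clock setting, the only obstacle for \MinPl at step $i$ is the constraint $\val_i+\delay+\perturbation\le\rightg{\guard_i}$ (with the appropriate strictness). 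This constraint is monotone: a smaller current valuation makes the set of enabled delays larger. The lower-bound part of the guard never blocks \MinPl, because waiting is always allowed.

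\MaxPl's moves from \MaxPl locations are copied verbatim as delays of the same transition. If a copied delay yields a valuation that is too small for the guard, \MaxPl instead waits up to the left bound. Perturbations are chosen so as to preserve the invariant $\val^{\mathrm{asap}}_i\le\val_i$. Concretely, \MaxPl picks $\varepsilon'$ so that $\val^{\mathrm{asap}}+\delay_{\min}+\varepsilon'\le\val+\delay+\varepsilon$ when possible, and otherwise takes $\varepsilon'=0$. Resets bring both plays to $0$, which restores equality. The $\widehat\varepsilon$ shift used when the guard is left-open costs an additive $\widehat\varepsilon$ that could break the invariant. We handle it by taking $\widehat\varepsilon$ smaller than the slack: either the guard's left bound is strictly below the valuation chosen by $\robminstrategy$, or that left-open bound is itself unreachable with exact value.

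Conversely, the (2)$\Rightarrow$(1) direction uses the symmetric comparison. Given an arbitrary \MinPl strategy $\robminstrategy$ against the fixed $\robmaxstrategy$ of item~(2), we show that \MinPl's valuations along $\robminstrategy$ dominate those along $\robminstrategy[\widehat\varepsilon]_\ppath$ for a suitable $\widehat\varepsilon\le\varepsilon$. Hence if the asap strategy is blocked at step $i$, then \MinPl is also blocked at some step $\le i$. A complication is that $\robmaxstrategy$ is history-dependent, so the same $\robmaxstrategy$ may react differently to the two histories. To handle this, we again pass to a modified \MaxPl strategy $\robmaxstrategy''$ that simulates $\robmaxstrategy$ on the asap history while playing against $\robminstrategy$. $\robmaxstrategy''$ adds perturbations to push the valuation up to, or above, the asap play's valuation; this is always possible because perturbations and \MaxPl delays only increase the clock. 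Item~(1) only requires existence of some \MaxPl strategy, so this suffices.

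For (1)$\Rightarrow$(2), start from the \MaxPl strategy witnessing r-unfeasibility. It blocks all \MinPl strategies, in particular every $\robminstrategy[\widehat\varepsilon]_\ppath$, which gives (2) with any $\varepsilon$. One caveat: if the $\perturbation$ in Definition~\ref{def:feasible} is $0$, we must upgrade to some positive $\perturbation$ using the fact that larger perturbations only shrink \MinPl's edges; this is the monotone argument of Lemma~\ref{lem:rVal-monotony}, transposed to plays.

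\textbf{Main obstacle.} The delicate step is the simulation invariant in the presence of strict/open guard bounds and the uniformity in $\widehat\varepsilon$. We must ensure that a single threshold $\varepsilon$ works for all smaller $\widehat\varepsilon$, i.e.\ that blocking of the asap play is stable as $\widehat\varepsilon\to0$. We would attack this by induction on the path length, tracking the set of reachable valuations as intervals whose endpoints move continuously and monotonically in $\widehat\varepsilon$.
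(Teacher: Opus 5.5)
Your handling of (1)$\Rightarrow$(2) is fine, including the upgrade from $p=0$, which works by letting \MaxPl perturb by $0$. For (2)$\Rightarrow$(1), the shape of your ``symmetric comparison'' matches the paper's \cref{lem:deadlock-asap-prop}. There, a \MaxPl strategy replays $\tau$ against the asap play and keeps \MinPl's valuation at least the asap one. Blocking then transfers because enabledness of \MinPl's edges is monotone in the valuation.

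The gap is your sentence that $\tau''$ can push the valuation up and that ``this is always possible''. Perturbations are capped by $p$. Take a \MinPl transition whose guard is left-open at $a$, reached from a valuation $\le a$. The asap strategy moves to $a+\hat\varepsilon$, but an arbitrary $\sigma$ may move to $a+\eta$ with $0<\eta<\hat\varepsilon$. If $\tau$ perturbs the asap play by the full $p$ at that step, restoring your invariant needs a perturbation $p+\hat\varepsilon-\eta>p$. Choosing $\hat\varepsilon\le\eta$ (your ``suitable $\hat\varepsilon$'', or the ``smaller than the slack'' trick of your first lemma) does not help. Then $\eta$, and hence $\tau''$, would depend on $\sigma$. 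But \cref{def:feasible} requires a single \MaxPl strategy that defeats all \MinPl strategies, and over all $\sigma$ the slacks $\eta$ have no positive lower bound.

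The missing idea is to enlarge the perturbation bound, which the definition of r-unfeasibility allows. Fix one $\hat\varepsilon$ (e.g.\ $\varepsilon$ itself) and build the simulating strategy in the $(p+\hat\varepsilon)$-robust semantics. \MinPl can undercut the asap target only at a left-open lower bound, and then by less than $\hat\varepsilon$, so the extra budget always compensates. In the larger-perturbation semantics \MinPl also has fewer enabled edges. Hence a deadlock of the asap play against $\tau$ yields a deadlock, at the same step or earlier, for every $\sigma$ against the new strategy. This is exactly the paper's $\tau^{p,\varepsilon}\in\mathsf{Strat}^{p+\varepsilon}$.

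With this fix, the ``main obstacle'' you single out disappears. You need no stability of blocking as $\hat\varepsilon\to 0$: a single $\hat\varepsilon$ suffices for (2)$\Rightarrow$(1), and (1)$\Rightarrow$(2) is immediate. Your first comparison lemma (asap against a modified $\tau'$ survives as long as $\sigma$ against $\tau$) goes in the wrong direction and serves neither implication. Your opening claim that (2)$\Rightarrow$(1) is immediate should read (1)$\Rightarrow$(2).
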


Since unfeasible paths can be detected by a reachability query in the
induced timed game~\cite{Jurdzinski2007}, we now have all ingredients
to characterize paths that are feasible but r-unfeasible. Indeed,
deadlocks induced by robustness are of two types:
\begin{description}
\item[type~1] \MinPl cannot guarantee a sufficient margin with the
  upper-bound of the guard $\guard$ on one of its transitions, because
  since the last reset, a guard $\guard'$ on one of its transition
  imposes a valuation greater or equal than the upper-bound of $\guard$;
\item[type~2] \MaxPl can exploit delays by pushing the clock valuation
  towards the upper-bound of the guard $\guard$ of one of its
  transitions, and there are no reset before a transition of \MinPl
  whose guard $\guard'$ has a smaller upper-bound than $\guard$.
\end{description}
We now formalize these two possible patterns for feasible but
r-unfeasible paths, using $\Paths^{Y=\emptyset}_{\MinPl}$
to denote the set of paths in $\game$ without reset
transitions ending in a state of \MinPl.

\begin{restatable}{proposition}{blocage}
	\label{prop:blocage}
	Let $\ppath$ be a feasible path in $\game$.  The path $\ppath$
        is r-unfeasible if and only if it has the following form:
        $\ppath = \pi_0 \xrightarrow{\trans_0} \pi_1
        \xrightarrow{\trans_1} \pi_2$ with
        $\pi_1 \in \Paths^{Y=\emptyset}_{\MinPl}$, and
	\begin{description}
        \item[\emph{type~1}]  $\pi_0 \in \Paths_{\MinPl}$ and
          $\leftg{\guard_0} = \rightg{\guard_1}$; or
        \item[\emph{type~2}]  $\pi_0 \in \Paths_{\MaxPl}$ and
          $\rightg{\guard_0} = \rightg{\guard_1}$;
	\end{description}
	with $\guard_i$  the guard of $\trans_i$ for $i \in \{0,1\}$.
\end{restatable}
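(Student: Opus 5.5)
Both directions go through Lemma~\ref{lem:deadlock_asap}: \MinPl may be assumed to play the asap strategies $\robminstrategy[\varepsilon]_\ppath$, and everything reduces to tracking the reachable clock values along $\ppath$ as $\perturbation \to 0$. Write $\ppath = \loc_0 \xrightarrow{\trans'_0} \cdots \xrightarrow{\trans'_{k-1}} \loc_k$ with guards $\guard'_i$. Since $\ppath$ is feasible, at each step the set of reachable valuations (exact semantics) is a non-empty interval. Its lower end is governed by the asap choices of \MinPl and the least delays of \MaxPl; its upper end is pushed up only by \MaxPl delays. Both ends are reset to $0$ at reset transitions.

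\textbf{($\Leftarrow$) Build a blocking \MaxPl strategy.} In type~1, $\trans_0$ is a \MinPl transition with $\leftg{\guard_0} = c = \rightg{\guard_1}$. Any \MinPl move on $\trans_0$ must satisfy $[\val+\delay, \val+\delay+\perturbation] \subseteq \guard_0$, so after firing $\trans_0$ the clock is at least $c$. Since $\pi_1$ has no reset, the clock stays $\geq c$ until \MinPl must fire $\trans_1$. This requires $[\val', \val'+\perturbation] \subseteq \guard_1$ with $\val' \geq c = \rightg{\guard_1}$, which is impossible, so \MaxPl need do nothing special. In type~2, $\trans_0$ is a \MaxPl transition. \MaxPl fires it at valuation $c - \perturbation/2$ with $c = \rightg{\guard_0} = \rightg{\guard_1}$; this is reachable by feasibility for $\perturbation$ small, because the guard is non-punctual or the reachable interval contains $c$. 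Along $\pi_1$ the clock never decreases, so \MinPl again needs $[\val',\val'+\perturbation]\subseteq\guard_1$ with $\val' \geq c-\perturbation/2$, which fails. Plays that leave $\ppath$ earlier only help \MaxPl.

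\textbf{($\Rightarrow$) Contrapositive.} Assume no such decomposition exists; I show the asap strategy (with $\varepsilon \ll \perturbation$) survives every \MaxPl strategy for all small $\perturbation$, contradicting Lemma~\ref{lem:deadlock_asap}. Consider a \MinPl transition $\trans'_j$ with $c = \rightg{\guard'_j}$, and let $i < j$ be the index of the last transition before $j$ whose firing forces a lower bound on the clock since the last reset. Take, inductively, the \MinPl transition with largest $\leftg{\cdot}$ and the \MaxPl transitions reachable with values near their right bounds.

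I claim the following invariant. Under asap play with perturbations, the clock before each \MinPl transition is at most $\max(\lambda, \mu) + O(\perturbation)$, where $\lambda$ is the largest $\leftg{\guard'_i}$ over \MinPl transitions $i$ since the last reset, and $\mu$ is the value forced by \MaxPl. Here $\mu$ is at most the largest $\rightg{\guard'_i}$ over \MaxPl transitions, but \MaxPl can also choose smaller values. The hard case is when \MaxPl fires a transition with a large right bound strictly inside the interval. One must then show that the feasibility of $\ppath$ gives room: because $\guard'_j$ has right end strictly greater than every such forced bound (negation of types 1 and 2), and feasibility guarantees $\val'$ can satisfy $\guard'_j$ in the exact semantics, there is margin $\geq \perturbation$ once $\perturbation$ is below the gap. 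The gap is at least $1$ by integrality of bounds, minus accumulated $O(k\perturbation)$ errors.

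\textbf{Main obstacle.} The delicate step is this invariant for \MaxPl moves. \MaxPl can choose any valuation, not just near right bounds, and the interplay with left-open guards and the asap choice $\delay_{\min}+\varepsilon$ needs care. My approach is to prove by induction along $\ppath$ that the reachable valuations under asap play lie within $O(j\perturbation)$ of the exact reachable set intersected with $[0, \max(\lambda,\mu)]$. I would then use the integer-bound gap to conclude that $\guard'_j$ contains an interval of length $\perturbation$ starting at the current value. A further point to check is that left bounds equal to a later right bound arise only through a \MinPl guard, which is exactly type~1.
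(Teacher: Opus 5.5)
Your ($\Leftarrow$) direction is correct and is essentially the paper's argument. Your type-2 target $c-p/2$ is in fact more careful than the paper's $c-p$: at exactly $c-p$, with $I_1$ right-closed and $\pi_1$ empty, \MinPl can still fire $\delta_1$ with delay $0$. Like the paper, however, you tacitly need $\delta_0$ itself not to reset the clock. For instance, a \MinPl transition guarded by $1\le x\le 2$ with reset, followed by a \MinPl transition guarded by $x\le 1$, matches type~1 but is r-feasible.

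The genuine gap is in ($\Rightarrow$). Your invariant only shows that \MinPl always has a robust move at its own locations. It never shows that \MaxPl is not deadlocked. But \MaxPl chooses the perturbation and can deadlock itself on purpose, which also gives $|\rho|<|\pi|$.

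\begin{itemize}
\item Take $\ell_0\in L_{\mathsf{Min}}$ with $\delta_0$ guarded by $1\le x\le 2$ to $\ell_1\in L_{\mathsf{Max}}$, with $\delta_1$ guarded by $x\le 1$ to a target.
\item The path is feasible: \MinPl fires at $1$, then \MaxPl fires with delay $0$.
\item Yet for every $p>0$, \MaxPl perturbs by $p$ and reaches $(\ell_1,t+p)$ with $t+p>1$, where no edge is enabled.
\item No decomposition exists, since the only \MinPl location is the source of the first transition.
\item Another instance: \MaxPl $x\le 1$, then \MinPl $x\le 2$, then \MaxPl $x\le 1$.
\end{itemize}

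So the implication is false as stated, and no argument of your kind can prove it. Your closing ``further point to check'' is exactly where the argument breaks. A \MinPl left bound, or a \MaxPl right bound followed by a \MinPl move, can coincide with the right bound of a later \MaxPl guard, and that is not type~1. The paper's proof has the same hole: it asserts that by feasibility \MaxPl can always move, so the blocked state belongs to \MinPl. A correct characterization needs additional patterns in which $\delta_1$ is a \MaxPl transition.

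Restricted to blocking at \MinPl locations, your route is sound and tighter than the paper's informal ``by choice / by force'' case split. For a \MinPl transition with right bound $c$, feasibility plus absence of both patterns forces every \MinPl left bound and every \MaxPl right bound since the last reset to be at most $c-1$. Asap play then arrives below $c-1+k(p+\varepsilon)$ and keeps the margin $p$.

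Note finally that ``asap survives for all small $p$'' contradicts Lemma~\ref{lem:deadlock_asap} only if r-unfeasibility means blocking for \emph{every} $p>0$. Under the literal ``there exists $p$'', ($\Rightarrow$) already fails for a single \MinPl transition $x\le 1$ with $p>1$.
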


%% file: copy.tex
The second step in the proof of~\cref{thm:main} is the construction
from $\game$ of a new WTG, the \emph{copy game}, whose exact value
coincides with the robust value in $\game$. In this WTG, \MaxPl can
either choose to play an r-unfeasible path or stick to a feasible
path. As explain in~\cref{sec:prelim}, it would be incorrect to remove
r-unfeasible paths (or parts of such paths) without giving extra power
to \MaxPl. Specifically, the copy game allows \MaxPl to emulate a
robust deadlock in the exact semantics (without impacting them when
they prefer not to use robust deadlocks). As an illustrative example,
the WTG depicted in~\cref{fig:copie} is the associated copy game of
the WTG depicted in~\cref{fig:rob-valIt_ex-3}.

\begin{figure}[h]
  \begin{tikzpicture}[scale = .9]

		\node[min, label={[xshift=.6cm, yshift=-0.3cm]$\loc_0$}] (l00) at (0,0) {$0$};
		\node[max, label={[xshift=.6cm, yshift=-0.3cm]$\loc_2$}] (l20) at (4,1.5) {$0$};
		\node[min, label={above:$\loc_1$}] (l10) at (4,-1.5) {$1$};
		\node[min, label={[xshift=-.6cm, yshift=-0.9cm]$\loc_3$}] (l30) at (8,0) {$1$};
		\node[target] (target1) at (12,0) {\smiley};

		\node[above right] at (11,1.5) {Copy 0};
		\node[above right] at (11,6.5) {Copy 1};
		\node[above right] at (11,11.5) {Copy 2};

		\node[min, label={below:$\loc_0$}] (l01) at (0,5) {$0$};
		\node[max, label={right:$\loc_2$}] (l21) at (4,6.5) {$0$};
		\node[min, label={below:$\loc_1$}] (l11) at (4,5) {$1$};
		\node[min, label={above:$\loc_3$}] (l31) at (8,5) {$1$};
		\node[target] (target2) at (12,5) {\smiley};

		\node[min, label={below:$\loc_0$}] (l02) at (0,11.5) {$0$};
		\node[max, label={right:$\loc_2$}] (l22) at (4,11.5) {$0$} ;
		\node[min, label={above:$\loc_1$}] (l12) at (4,8.5) {$1$};
		\node[min, label={above:$\loc_3$}] (l32) at (8,10) {$1$};
		\node[target] (target3) at (12,10) {\smiley};

		\begin{pgfonlayer}{background}  
			\draw[dashed, rounded corners=5pt,gray]
			([shift={(-0.5,-2)}]l00.south west) rectangle
			([shift={(0.5,2)}]target1.north east);

			\draw[dashed, rounded corners=5pt,gray]
			([shift={(-0.5,-2)}]l01.south west) rectangle
			([shift={(0.5,2)}]target2.north east);

			\draw[dashed, rounded corners=5pt,gray]
			([shift={(-0.5,-3.5)}]l02.south west) rectangle
			([shift={(0.5,2)}]target3.north east);
		\end{pgfonlayer}

		\node[above=1.25cm of l0] (p1) {};
		\node[right=8cm of p1] (p2) {};
		\node[left=.75cm of l00,coordinate] (p3) {};

		\draw[trans, Brown] (l00) |- node[guard,xshift=2.25cm] {$\delta^2_{0\rightarrow 0} \colon \clockx \leq 2$}
		node[weightlabel, below,xshift=2.25cm] {1} (l10);
		\draw[trans, RoyalBlue] (l00) node[guard,xshift=2.25cm] {$\delta^0_{0\rightarrow 0} \colon \clockx \leq 1$}
		node[weightlabel, below,xshift=2.25cm] {1} -| (l20);

		\draw[trans, OrangeRed] (l10) node[guard,xshift=2.25cm] {$\delta^4_{0\rightarrow 0} \colon \clockx \leq 1$} -| (l30);

		\draw[trans, Purple] (l20)  node[guard,xshift=-2.25cm] {$\delta^1_{0\rightarrow 0} \colon \clockx \leq 1$}
		node[weightlabel, below,xshift=-2.25cm] {1} -| (l00);
		\draw[trans, ForestGreen] (l20) to node[guard] {$\delta^5_{0\rightarrow 0} \colon \clockx \leq 2$}
		node[reset]  {$\clockx \coloneqq 0$} (l30);

		\draw[trans, CadetBlue] (l30) to[bend right=20]
		node[guard, below, xshift=.1cm] {$\delta^6_{0 \rightarrow 0} \colon 0< \clockx \leq 2$}
		node[weightlabel, above] {1} (target1);
		\draw[trans] (l30) to[bend left=20] node[guard, xshift=.1cm]
			{$\delta^7_{0 \rightarrow 0} \colon 0 < \clockx \leq 1$} (target1);


		\draw[trans, Purple] (l20) to node[guard] {$\delta^{1 \ j}_{0\rightarrow 1} \colon x = 1$}
		node[weightlabel, below] {1} (l01);

		\draw[trans, Brown] (l01) to node[guard]  {$\delta^2_{1\rightarrow 1}\colon 1 \leq x \leq 2$}
		node[weightlabel, below] {1} (l11);

		\draw[trans, Purple] (l21) node[guard,xshift=-2.25cm] {$\delta^1_{1\rightarrow 1} \colon x = 1$}
		node[weightlabel, below,xshift=-2.25cm] {1} -| (l01);
		\draw[trans, CadetBlue] (l31) to node[guard] {$\delta^6_{1 \rightarrow 1}\colon 1 \leq x \leq 2$}
		node[weightlabel, below] {1} (target2);

		\draw[trans,ForestGreen] (l21) to node[guard, xshift=-1cm]
			{$\delta^5_{1\rightarrow 0} \colon 1 \leq x \leq 2$} node[reset, xshift=-1cm]  {$x \coloneqq 0$} (l30);

		\draw[trans, BrickRed, -] (l00) to (p3);
		\draw[trans, BrickRed] (p3) |- node[guard,xshift=3.25cm] {$\delta^3_{0\rightarrow 2}\colon 2 \leq x $}
		node[weightlabel,xshift=3.25cm] {1} (l32);

		\draw[trans,ForestGreen] (l22) to node[guard] {$\delta^5_{2\rightarrow 0}\colon x = 2$}
		node[reset]  {$x \coloneqq 0$} (l30);
	\end{tikzpicture}

	\caption{The copy game associated with the WTG depicted in~\cref{fig:rob-valIt_ex-3}.}
	\label{fig:copie}
\end{figure}

Before coming to the formal definition, let us explained intuitively how
the construction proceeds. The copy game is composed of different
copies of the WTG, one for each \emph{open region}\footnote{In
single-clock WTGs, the classical notion of regions reduces to simple
intervals on $\R$, see \cite{bouyer2006almost} for details.  Here,
	we call open region, the infinite interval $(a, +\infty)$ where $a$
	is a lower-bound of a region.}. Each original transition of \MinPl
with guard $\guard$ is duplicated in the copy~$a$ (for the interval
$(a,+\infty)$) as a transition with guard $\guard \cap [a, +\infty)$
and with destination the highest copy between $a$ and the lower-bound
of $\guard$ (only when $a$ is smaller than the upper-bound of
$\guard$). Each original transition of \MaxPl is transformed in
copy~$a$ to a transition with guard $\guard \cap [a, +\infty)$, and
also to any \emph{jumping transition}, denoted with the superscript
$j$, with destination in copy~$b$ (with $b \neq a$) with guard
$\{b\}$.  Formally, an original transition
$\trans = (\loc, \guard, \reset, \loc') \in \Trans$ can be transformed
into two different ways:
\begin{displaymath}
	\trans_{a\rightarrow b} =
	(\tuple{\loc, a}, \guard \cap [a + \infty), \reset, \tuple{\loc', b})
	\hfill
	\trans^j_{a\rightarrow b} =
	(\tuple{\loc, a}, \{b\}, \reset, \tuple{\loc', b})
	\qquad
\end{displaymath}

In the copy game, the two players have different abilities regarding
the moves between copies. On the one hand, \MinPl cannot choose the
next copy: they are constrained by the minimal open region they must
use. On the other hand, for non-resetting transitions \MaxPl may
choose which copy to move to: they may use a jumping transition
$\trans^j_{a\rightarrow b}$ for any $b$ in the guard closure.

We now formally define the copy game.
\begin{definition}
	\label{def:copy}
	For $\game$ a WTG with maximal constant $\clockbound$, its
	associated \emph{copy game} is
	$\C= \langle \Locs^\C, \Trans^\C, \wt^\C \rangle$ defined by
	\begin{itemize}
		\item $\Locs^\C = \Locs \times \llbracket 0, \maxBound\rrbracket$ ;
		\item $\Trans^\C = \bigcup_{a \in \llbracket 0, \maxBound\rrbracket} \cup_{\trans \in \Trans}
		\Trans^{\C}_a(\trans)$ where,
		for all $a \in \llbracket 0, \maxBound\rrbracket$, and all $\trans = (\loc,\guard, \reset,\loc')$,
		we fix
		\begin{displaymath}
			\Trans^{\C}_a(\trans) =
			\begin{cases}
				\{\trans_{a\rightarrow \max(a, \leftg{\guard})}\} &
				\text{if $\loc \in \LocsMin$, $\reset = \emptyset$, and $\rightg{\guard} > a$}  \\
				\{\trans_{a\rightarrow 0}\} &
				\text{if $\loc \in \LocsMin$, $\reset = \{x\}$, and $\rightg{\guard} > a$} \\
				\{\trans_{a\rightarrow a}\} \cup \{\trans_{a\rightarrow b}^j \mid b \in \overline{\guard} \cap \llbracket a + 1, \maxBound \rrbracket\} &
				\text{if $\loc \in \LocsMax$, $\reset = \emptyset$, and $\rightg{\guard} \geq a$} \\
				\{\trans_{a\rightarrow 0}\} &
				\text{if $\loc \in \LocsMax$, $\reset = \{x\}$, and $\rightg{\guard} \geq a$ ;}
			\end{cases}
		\end{displaymath}
		\item $\wt^\C(\langle \loc, a \rangle) = \wt(\loc)$, for all $\langle \loc, a \rangle \in \Locs^C$, and
		$\wt^\C(\trans_{a\rightarrow b})=\wt^\C(\trans_{a\rightarrow b}^j)=\wt(\delta)$,
		for all transitions of $\Trans^\C$.
	\end{itemize}
\end{definition}

Note that the constraint $\rightg{\guard}> a$ in the transitions from
\MinPl locations ensures that the guard is neither empty and nor
punctual. This is convenient since such transitions would not be
possible for \MinPl in the robust semantics. Consequently, the copy
game exhibits no robust deadlocks of type 1. Moreover, the number of
copies is exponential in the size of $\game$ (since $\clockbound$ is
encoded in binary). Finally, the copy index represents the infimum
valuation of plays because the lower-bound of transitions guards
between two copies is the index of the highest copy, and resetting
transitions have destination copy~$0$. These interesting properties
are summarized in the following lemma, whose proof is given
in~\cref{app:copy}.
\begin{restatable}{lemma}{copyProp}
	\label{lem:copy}
	Let $\C$ be the copy game associated with $\game$.
	Then,
	\begin{enumerate}
        \item\label{itm:copy_caracterisation} there are no feasible
          and r-unfeasible paths of type 1 in $\C$;
		\item\label{itm:copy_size} $\C$ is exponential in the size of
		$\game$;
		\item\label{itm:copy_end-val} for every finite play starting
		in the $0$-th copy and ending in a location
		$\tuple{\loc, a}$, its last valuation $\val$ satisfies
		$\val \geq a$.
	\end{enumerate}
\end{restatable}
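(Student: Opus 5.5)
The three items are essentially independent, so I will handle them one at a time, proving item~\ref{itm:copy_end-val} first, since item~\ref{itm:copy_caracterisation} relies on it.

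\emph{Item~\ref{itm:copy_end-val}} is an induction on the number of transitions of the finite play. The base case holds because the play starts in copy $0$ with $\val \geq 0$. For the inductive step, suppose the play is in $\tuple{\loc,a}$ with valuation $\val \geq a$, and fires a transition of $\Trans^\C_a(\trans)$ after a delay $t$ (possibly perturbed by $\varepsilon$). Every resetting transition targets copy $0$, and $0 \geq 0$ holds trivially. A non-resetting transition $\trans_{a\rightarrow a}$ keeps the copy, and valuations only grow along delays. For $\trans_{a\rightarrow \max(a,\leftg{\guard})}$, the guard $\guard\cap[a,+\infty)$ is satisfied by $\val+t$ (and, in the robust semantics, also by $\val+t+\varepsilon$), so the new valuation is at least $\max(a,\leftg{\guard})$. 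For a jumping transition $\trans^j_{a\rightarrow b}$, the guard $\{b\}$ forces the new valuation to equal $b$.

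\emph{Item~\ref{itm:copy_size}} is a counting argument. There are $(\clockbound+1)\cdot|\Locs|$ locations. Each original transition yields at most $\clockbound+1$ transitions per copy, i.e.\ $O(\clockbound^2|\Trans|)$ in total. Guards and weights keep constants bounded by $\clockbound$. Since $\clockbound$ is encoded in binary, this size is exponential in the size of $\game$.

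\emph{Item~\ref{itm:copy_caracterisation}} uses \cref{prop:blocage}. Suppose, towards a contradiction, that some feasible path in $\C$ is r-unfeasible of type~1. It then factors as $\pi_0 \xrightarrow{\trans_0} \pi_1 \xrightarrow{\trans_1} \pi_2$, where $\pi_0$ ends in a \MinPl location, $\pi_1 \in \Paths^{Y=\emptyset}_{\MinPl}$, and $\leftg{\guard^\C_0} = \rightg{\guard^\C_1}$, with $\guard^\C_i$ the guard of $\trans_i$ in $\C$.
\begin{enumerate}
\item Since $\trans_0$ leaves a \MinPl location and is non-resetting (it lies on a reset-free stretch), it is of the form $\trans_{a\rightarrow c}$ with $c=\max(a,\leftg{\guard})$. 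Its guard in $\C$ is $\guard\cap[a,+\infty)$, so $\leftg{\guard^\C_0} = c$.
\item Along $\pi_1$ no transition resets, and every transition has target copy at least its source copy: \MinPl moves go to $\max(a,\cdot)$, $\trans_{a\rightarrow a}$ stays, and jumps increase the index. Hence $\trans_1$ leaves some copy $c' \geq c$.
\item As $\trans_1$ leaves a \MinPl location, it exists only if $\rightg{\guard'} > c'$, where $\guard'$ is the original guard, and its guard in $\C$ is $\guard'\cap[c',+\infty)$. This gives $\rightg{\guard^\C_1} = \rightg{\guard'} > c' \geq c = \leftg{\guard^\C_0}$, contradicting the type~1 equality.
\end{enumerate}

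The main obstacle is in step~3 of item~\ref{itm:copy_caracterisation}: I must check that \cref{prop:blocage} applies verbatim to $\C$, that the guard bounds in $\C$ are exactly those computed above, and that the monotonicity of copy indices along reset-free paths also covers \MaxPl transitions in $\pi_1$. I expect all of this to follow directly from \cref{def:copy}.
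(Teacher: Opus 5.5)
Your proposal is correct and follows essentially the paper's proof: the same counting for item~2, the same induction over plays for item~3, and for item~1 the same contradiction $\rightg{\guard^\C_1} > c' \geq c = \leftg{\guard^\C_0}$, obtained from the fact that copy indices never decrease along the reset-free segment $\pi_1$. In the \MinPl case of item~3 you correctly use that the reached valuation lies in $\guard\cap[a,+\infty)$ (the paper's write-up of that case instead invokes the induction hypothesis with source and target copies swapped), and your item~1 argument needs only this monotonicity of copy indices, not item~3 itself.
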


%% file: values.tex
To conclude the proof of~\cref{thm:main}, we show the correctness of
the reduction to the copy game. More precisely, the robust value in
$\game$ is equal to the (exact) value in $\C$. This not only holds for
initial states of the form $(\ell_0,0)$ but also for any
$(\loc,\val)$, provided $\val$ is either $0$ or not integer. Formally:
\begin{theorem}
	\label{thm:proof}
	Let $(\loc, \val)$ be a real state such that $\val \notin \N^*$,
	$\rValue[0+]_\game(\loc, \val) = \Value_\C(\tuple{\loc, 0}, \val)$.
\end{theorem}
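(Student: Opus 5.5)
We prove the equality by going through the robust value of the copy game. The chain we aim for is
\[
\rValue[0+]_\game(\loc,\val) \;=\; \rValue[0+]_\C(\tuple{\loc,0},\val) \;=\; \Value_\C(\tuple{\loc,0},\val).
\]
The second equality is the statement anticipated in the introduction, namely that $\C$ has no deadlocks caused by robustness. The first is a simulation argument between $\sem[\perturbation]{\game}$ and $\sem[\perturbation]{\C}$ for small $\perturbation>0$. The hypothesis $\val\notin\N^*$ is what makes copy~$0$ the right starting copy: if $\val$ were a positive integer, \MaxPl could in $\game$ immediately exploit a guard whose upper bound is $\val$, while $\C$ started in copy~$0$ would not reflect this.

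\textbf{Step 1: no robust deadlocks in $\C$.} We show that every feasible path of $\C$ is r-feasible. By \cref{prop:blocage}, a feasible but r-unfeasible path is of type~1 or type~2, and type~1 is excluded by \cref{lem:copy}(\ref{itm:copy_caracterisation}). For type~2, take $\pi_0\xrightarrow{\trans_0}\pi_1\xrightarrow{\trans_1}$ with $\trans_0$ a \MaxPl transition, $\pi_1$ reset-free and ending in \MinPl, and $\rightg{\guard_0}=\rightg{\guard_1}=c$.
\begin{itemize}
\item If $\trans_0$ is a jumping transition, its guard is $\{b\}$, so $c=b$. The copy index along the reset-free $\pi_1$ never decreases, so it is at least $b$ when $\trans_1$ is reached. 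The \MinPl transition $\trans_1$ only exists in copy~$a$ when $\rightg{\guard_1}>a$, so $c>a\ge b$, a contradiction.
\item If $\trans_0$ is a non-jumping transition $\trans_{a\rightarrow a}$, the path is still feasible in $\C$ but r-unfeasible with the same structure. We must then argue that this deadlock does not change the value: whatever \MaxPl gains by pushing the valuation towards $c$, they can obtain exactly, in the exact semantics, with the jump $\trans^j_{a\rightarrow c}$, after which $\trans_1$ is absent.
\end{itemize}
Concretely, we show $\rValue[\perturbation]_\C \le \Value_\C + O(\perturbation)$. Given a near-optimal exact strategy of \MinPl in $\C$, we shift its delays by a margin of order $\perturbation$. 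A \MaxPl perturbation that would cause a deadlock is then reinterpreted as a jump to the higher copy, where the strategy has a prescribed answer. The reverse inequality $\rValue[\perturbation]_\C\ge\Value_\C$ is \cref{lem:rVal-monotony}.

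\textbf{Step 2: $\rValue[0+]_\game = \rValue[0+]_\C$ from copy~$0$.} We translate strategies in both directions.
\begin{itemize}
\item A robust \MinPl strategy in $\game$ is lifted to $\C$ by tracking the current copy. \cref{lem:copy}(\ref{itm:copy_end-val}) gives $\val\ge a$, so every move allowed in $\game$ at valuation $\val$ is allowed in copy~$a$, except \MinPl transitions with $\rightg{\guard}\le a$. Those are robustly disabled anyway: the valuation is at least $a$ and the perturbation is positive, and in the case $\val=a$ we use that a \MinPl transition into copy $a=\leftg{\guard'}$ has a margin forced by $\guard'$.
\item Conversely, a \MaxPl strategy in $\C$ is projected to $\game$. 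A jump $\trans^j_{a\rightarrow b}$ is simulated by choosing a valuation within $\perturbation$ of $b$. By \cref{prop:blocage} (type~2), this robustly disables exactly the \MinPl transitions with upper bound at most $b$ until the next reset, at a weight cost of $O(\perturbation)$ per step.
\end{itemize}
Bounding the number of steps that matter uses the finite-value and bounded-clock assumptions: truncate to plays of bounded length with a near-optimal bound.

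\textbf{Main obstacle.} The hardest part is the quantitative control in Steps~1 and~2. Each simulation step incurs an error of $O(\perturbation)$, and these errors must not accumulate unboundedly along long or cyclic plays. We would first try to show that near-optimal strategies in $\C$ can be chosen so that target paths have length bounded independently of $\perturbation$, using that no state has value $-\infty$ in the 1-clock setting, as in the value computation of \cite{monmege2025decidability}. The total error is then $O(\perturbation)$ and vanishes as $\perturbation\to 0^+$.
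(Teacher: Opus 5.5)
Your chain $\rValue[0+]_{\game}=\rValue[0+]_{\C}=\Value_{\C}$ is the paper's (\cref{prop:preservation_rval}, then \cref{prop:preservation_Val}). However, each of your two steps has a genuine gap.

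\textbf{Step 2 proves only one inequality.} Your two bullets establish the \emph{same} inequality. Lifting a robust \MinPl strategy from $\game$ to $\C$ shows that \MinPl is no weaker in $\C$. Projecting a \MaxPl strategy from $\C$ to $\game$ shows that \MaxPl is no stronger in $\C$. Both give $\rValue[\perturbation]_{\C}(\tuple{\loc,0},\val)\le\rValue[\perturbation]_{\game}(\loc,\val)$ up to a small error; together they form the two halves of the paper's harder inequality. Your first bullet, taken alone, also never says how the lifted strategy answers a \MaxPl jump landing exactly at some $b\notin\guard$. The converse $\rValue[\perturbation]_{\game}\le\rValue[\perturbation]_{\C}$ is never argued. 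It is the easy direction: embed a robust \MaxPl strategy of $\game$ into $\C$ using only non-jumping transitions. This works because \MinPl's guards in copy $a$ are restrictions of the original ones, and valuations in copy $a$ are at least $a$ (\cref{lem:copy}). But it must be there. Incidentally, Step~2 needs no bound on play lengths. If the shift used to simulate a jump is below $\perturbation$ and decreases geometrically in $|\play|$ (the paper uses $\varepsilon/(\maxWeightLoc\cdot 2^{|\play|+1})$), the total error stays below $\varepsilon$ on plays of any length.

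\textbf{Step 1 misplaces the hypothesis and leaves the jump reinterpretation unspecified.} Your treatment of type~2 patterns is right: a jump cannot start one, but a non-jumping \MaxPl transition can. The replacement argument, however, never uses $\val\notin\N^*$, which you attribute to Step~2. In fact \cref{prop:preservation_rval} holds for every $\val$: at an integer valuation, the robust semantics of $\C$ blocks exactly the \MinPl transitions that are blocked in $\game$. The hypothesis is needed for the robust/exact comparison inside $\C$.

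As stated, your bound $\rValue[\perturbation]_{\C}\le\Value_{\C}+O(\perturbation)$ is false. Take a \MinPl location whose only transition, guarded by $x\le 1$, leads to the target with weight $0$. Then $\Value_{\C}=0$, but $\rValue[\perturbation]_{\C}=+\infty$ at every $\val\in(1-\perturbation,1]$. The bound can only hold on $(i,i+1-N\perturbation]\cup\{0\}$ for a game-dependent constant $N$. The paper obtains $N$ from robust reachability via shrunk DBMs \cite{bouyer2015robust}, and it also fixes how far \MinPl must stay from guard bounds. Your ``margin of order $\perturbation$'' has neither this constant nor an argument that it survives perturbations accumulated along a play.

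The reinterpretation of deadlocks as jumps is also unspecified exactly where it is delicate. Type~2 blocking comes from \MaxPl's own delay choices, not from perturbations of \MinPl's delays. So the exact play must decide \emph{at that \MaxPl move} whether to jump, before the rest of the play is known. Moreover, once the exact play sits in copy $c$ at valuation exactly $c$, a later robust \MaxPl move from $c-\eta$ along a guard right-open at $c$ has no exact counterpart.

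The paper organises this step around normalising both players. \MaxPl has a $\perturbation$-well-formed $\varepsilon$-optimal robust strategy that jumps instead of entering $F_i=(i-N\perturbation,i)$ (\cref{lem:rval-eq-val_restrictstrat}). \MinPl has a $\perturbation$-well-formed $\varepsilon$-optimal exact strategy (\cref{lem:copy_Min_rstrict}). An exact \MaxPl strategy is then extracted from the robust one through the projection of \cref{lem:copy_wf-to-r}. The uniform length bound $\kappa$ from \cite{monmege2025decidability} keeps the error at $\maxWeightLoc\cdot\perturbation\cdot\kappa^2$.
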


Before proceeding to the proof of~\cref{thm:proof}, let us justify how
it allows one to conclude the proof of~\cref{thm:main}. 
Let $\game$ be a WTG and $\C$ be its associated copy game. Thanks
  to~\cref{thm:proof}, solving the robust value problem in $\game$ for
  $(\ell_0,0)$ reduces to solving the exact value problem in $\C$ from
  $(\tuple{\loc_0,0},0)$ in copy~$0$.  Since computing the exact value
  in a WTG with one-clock can be done in exponential
  time~\cite{monmege2025decidability}, we derive that the robust value
  problem in $\game$ can be solved in exponential time in the size of
  $\C$. Now, $\C$ is exponential in the size of $\game$
  (by~\cref{lem:copy}), so that we conclude that the robust value
  problem is solvable in doubly exponential time in the size of
  $\game$. In particular, the robust value problem is decidable for
  1-clock WTG. This concludes the proof of~\cref{thm:main}.

\medskip

The rest of this section presents the key ingredients to
prove~\cref{thm:proof}.  The proof is in two steps.  First, we prove
that the construction of the copy game preserves the \emph{robust}
value. We do so by showing the existence of a perturbation bound such
that the fixed-perturbation robust values in the two games are
arbitrarily close. Second, we prove that in the copy game, the robust
value is equal to the (exact) value, as far as one considers for
initial valuation either $0$, or any valuation in a left-open
interval.

\paragraph*{Preservation of the robust value}
The first step of the correctness proof of the copy game consists in
proving the preservation of the robust value between the two WTGs
$\game$ and $\C$. More precisely, we prove that from a given robust
strategy of \MaxPl for one of these games, one can define an
equivalent one (i.e. with an at least as good value) in the other
game. Both transformations are technical yet rather natural.

\begin{restatable}{proposition}{preservationRval}
	\label{prop:preservation_rval}
	Let $(\loc, \val)$ be a real state.
	Then, $\rValue[0+]_{\game}(\loc, \val) = \rValue[0+]_\C(\langle\loc,0\rangle, \val)$.
\end{restatable}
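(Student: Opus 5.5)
We prove the two inequalities separately. For each one we fix a small perturbation bound $\perturbation>0$ and turn a strategy of \MaxPl in one game into a strategy of \MaxPl in the other that guarantees at least the same weight, up to an error vanishing with $\perturbation$. Since both robust values are limits of non-increasing families (Lemma~\ref{lem:rVal-monotony}), it is enough to show: for all $\varepsilon>0$ and all small enough $\perturbation$, $\rValue[\perturbation]_\C(\langle\loc,0\rangle,\val)\ge\rValue[\perturbation']_\game(\loc,\val)-\varepsilon$ and conversely, for some $\perturbation'$ tending to $0$ with $\perturbation$. Throughout we use the natural projection of plays of $\C$ onto plays of $\game$ that forgets copy indices. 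A transition $\trans_{a\rightarrow b}$ or $\trans^j_{a\rightarrow b}$ projects to $\trans$, and the projection keeps valuations, delays and weights, because guards in $\C$ are subsets of the original guards.

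\textbf{Inequality $\rValue[0+]_\game\le\rValue[0+]_\C$.} Take a robust strategy $\robmaxstrategy$ of \MaxPl in $\game$. We build $\robmaxstrategy_\C$ in $\C$ by following the moves of $\robmaxstrategy$ on the projected play and choosing copies as follows. A non-jumping copy is used whenever $\robmaxstrategy$ stays in a normal regime. A jumping transition $\trans^j_{a\rightarrow b}$ to $\{b\}$ is used when $\robmaxstrategy$ plays towards the upper bound $b=\rightg{\guard_0}$ of its guard in a way that later blocks a \MinPl transition $\trans_1$ with $\rightg{\guard_1}=b$, as in type~2 of Proposition~\ref{prop:blocage}. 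Perturbations are copied verbatim. The move to $b$ itself may be exact, since guards are intervals with integer bounds and valuations only shift by at most $\perturbation$. \MinPl strategies in $\C$ project to strategies in $\game$: any \MinPl transition available in $\C$ projects to one available in $\game$. The converse direction is the important one. Every \MinPl move in $\game$ that $\robmaxstrategy$ does not block remains available in the current copy. This is because the copy index lower-bounds the valuation (Lemma~\ref{lem:copy}, item~\ref{itm:copy_end-val}) and because a copy $a$ removes exactly the \MinPl transitions with $\rightg{\guard}\le a$. By Proposition~\ref{prop:blocage} and Lemma~\ref{lem:deadlock_asap}, these are exactly the transitions that \MaxPl could block robustly, or that are blocked by a type~1 pattern. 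Outcomes therefore correspond, with weight changes of order $\perturbation$ per step. Since there are no $-\infty$ values and the clock is bounded, plays can be truncated to a bounded horizon, and the total error stays small.

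\textbf{Inequality $\rValue[0+]_\C\le\rValue[0+]_\game$.} Take a robust \MaxPl strategy in $\C$. A jump $\trans^j_{a\rightarrow b}$ at exact valuation $b$ is simulated in $\game$ by playing $\trans$ at valuation $b-\perturbation'$ for a suitable $\perturbation'$, or by delaying towards $b$. Each later \MinPl transition that is absent from copy $b$ has $\rightg{\guard}\le b$. After the perturbation, such a transition is then either robustly infeasible in $\game$ (the conservative semantics requires $[\val+\delay,\val+\delay+\perturbation]\subseteq\guard$), or its use would give \MinPl no advantage. Copy indices drop back to $0$ only at resets, which matches the behaviour of $\game$. \MinPl strategies in $\game$ are then lifted to $\C$ by tracking the copy, and weights are again preserved up to $O(\perturbation)$.

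\textbf{Main obstacle.} The hardest step is the faithful simulation of jumps in the second direction. The exact value $b$ is not robustly reachable, and \MaxPl gets to within $\perturbation$ of $b$ only through the perturbation of a \MinPl delay or through their own delay. The case analysis follows the two types of Proposition~\ref{prop:blocage}, applied after each reset. One must also check that being slightly below $b$ does not reopen, for \MinPl, transitions that copy $b$ forbids. This works because forbidden transitions have $\rightg{\guard}\le b$, so a perturbation of size $\perturbation$ pushes them out of reach. The weight discrepancy of order $\perturbation$ per step is then controlled by the bounded horizon argument.
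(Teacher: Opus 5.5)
\textbf{First inequality.} The gap is in how you build $\tau_{\C}$ from $\tau_\game$.

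For $\mathsf{rVal}^{p}_{\game}\le\mathsf{rVal}^{p}_{\C}$ you need two things:
\begin{enumerate}
\item every \MinPl move available in $\C$ against $\tau_{\C}$ is legal in $\game$;
\item every move of $\tau_\game$ can be replayed in $\C$.
\end{enumerate}
The property you call ``the important one'' (\MinPl moves of $\game$ stay available in the current copy) is not needed for this inequality. Fewer \MinPl options in $\C$ only help \MaxPl, and a \MinPl deadlock in $\C$ just gives weight $+\infty$.

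Letting $\tau_{\C}$ jump breaks requirement (2), for two reasons. First, your trigger (``in a way that later blocks a \MinPl transition'') depends on \MinPl's future choices, so it does not define a strategy. Second, suppose $\C$ is in copy $b$ at valuation exactly $b$ while $\game$ is strictly below $b$. A later \MaxPl move of $\tau_\game$ on a guard right-open at $b$ (e.g.\ $x<b$) then has empty guard $\guard\cap[b,+\infty)$ in copy $b$. Any move of $\tau_\game$ to a valuation below $b$ would need a negative delay.

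Proposition~\ref{prop:blocage} and Lemma~\ref{lem:deadlock_asap} do not rescue this. They characterise r-unfeasibility of whole paths, i.e.\ the existence of some blocking \MaxPl strategy in the path subgame. They say nothing about which transitions a given $\tau_\game$ blocks from a given state.

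The fix is the paper's route: $\tau_{\C}$ never jumps. It plays $(\trans_{a\to a},t)$ or $(\trans_{a\to 0},t)$ whenever $\tau_\game$ plays $(\trans,t)$, and copies perturbations. Then:
\begin{itemize}
\item conforming plays are jump-free and project onto plays of $\game$ with identical valuations and weights;
\item non-jumping guards are subsets of the original guards, so \MinPl's robust moves in $\C$ are robust in $\game$;
\item the moves of $\tau_\game$ are legal in copy $a$, because valuations there are at least $a$ (Lemma~\ref{lem:copy}, item~\ref{itm:copy_end-val}).
\end{itemize}
This gives $\mathsf{rVal}^{p}_{\game}\le\mathsf{rVal}^{p}_{\C}$ for every $p$, with no error term.

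\textbf{Error control and the second inequality.} Your core idea here, simulating a jump by stopping slightly below $b$, is the paper's. The bookkeeping around it has problems.

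\begin{itemize}
\item \emph{The bounded horizon is unsupported.} You claim plays ``can be truncated to a bounded horizon'' because there is no $-\infty$ value and the clock is bounded. These assumptions give no uniform bound on the length of relevant robust plays. The horizon $\kappa$ used later in the paper comes from exact-semantics results on $\C$. Since jumps can recur after every reset, an $O(p)$ error per jump need not stay small. The paper avoids any horizon: it shifts the $k$-th simulated jump by $\varepsilon/(\maxWeightLoc\cdot 2^{k+1})$, so the cumulated valuation lag and weight deviation stay below $\varepsilon$ whatever the play length.
\item \emph{The shift is only needed when $b\notin\guard$.} When $b\in\guard$, \MaxPl's own delays are never perturbed, so \MaxPl reaches $b$ exactly in $\game$. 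Your claim that $b$ is ``not robustly reachable'' is inaccurate.
\item \emph{Each shift must be strictly smaller than $p$.} That is exactly what makes every \MinPl transition with $\rightg{\guard}\le b$ robustly infeasible in $\game$, since firing it would require $\nu+t+p\le b$. Your fallback ``or its use would give \MinPl no advantage'' is unjustified and unnecessary.
\item \emph{Lifting \MinPl's strategies is not just ``tracking the copy''.} After a shifted jump, the valuation in $\game$ lags behind the one in $\C$. You need a rule for \MinPl's image delays (the paper uses $\max(0,\nu-\nu_1^{\C})$) and an invariant bounding the lag, $0\le\nu^{\C}-\nu\le\frac{\varepsilon}{\maxWeightLoc}(1-2^{-|\rho|})$.
\end{itemize}

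Finally, the case analysis along the two types of Proposition~\ref{prop:blocage} plays no role in either direction.
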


The first inequality
$\rValue[0+]_{\game}(\loc, \val) \leq
\rValue[0+]_\C(\langle\loc,0\rangle, \val)$ is the most natural, since
$\C$ does not constrain the robust behaviours of \MaxPl.  In
particular, we can naturally embed a robust strategy of \MaxPl in
$\game$ into one of $\C$ by ignoring all jumping transitions.  The
plays conforming to this new strategy in $\C$ will remain in copy~$0$.

The second inequality
$\rValue[0+]_{\game}(\loc, \val) \geq
\rValue[0+]_\C(\langle\loc,0\rangle, \val)$ is more involved, since
\MaxPl may use jumping transitions in $\C$ with clock valuation that
does not satisfy the guard of the original transition in $\game$.
Indeed, by definition of $\C$, if $\guard$ is the guard in $\game$,
then the guard $\{b\}$ of the jumping transition only ensures
$b \in \overline{\guard}$, yet we may have $b \notin \guard$.  In
particular, when \MaxPl chooses such a transition in $\C$, to mimick
it in $\game$, one needs to slightly alter the delay (so that it
satisfies the guard $\guard$).  However, since the robust value
$\rValue$ is, a priori, not continuous, such a deviation must be
compensated in the rest of the play. Ultimately, the plays in $\C$ and
$\game$ should reacht the exact same valuation. To control precisely
the potential weight difference (due to a delay difference), this
delay alteration must depend on the largest location weight in
$\game$, and also on the length of the play.

The formal proofs of these two inequalities are given
in~\cref{app_preservation-rval} and~\cref{app_preservation-rval-2}.
Interestingly, the proof of the first inequality further induces that
\MaxPl can play optimally without jumping transitions in $\C$, i.e.:
\begin{restatable}{corollary}{maxWithoutJumping}
	\label{cor:val_rval-Max-without-jump}
	Let $\perturbation > 0$ and $\varepsilon > 0$.  Then, \MaxPl
        has an $\varepsilon$-optimal robust strategy in
        $\rStratMax[\C]$ such that all its conforming plays contain no
        jumping transitions.
\end{restatable}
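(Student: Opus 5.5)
Given $\perturbation>0$ and $\varepsilon>0$, I will take an $\varepsilon$-optimal robust strategy $\robmaxstrategy_\game$ of \MaxPl in $\game$ and embed it into $\C$. This is the strategy built for the first inequality of \cref{prop:preservation_rval}: it never uses jumping transitions. The main difficulty is that this gives $\varepsilon$-optimality with respect to $\rValue_\game$, whereas the corollary needs it with respect to $\rValue_\C$ for the same $\perturbation$. So I will first prove both inequalities of \cref{prop:preservation_rval} at the level of a fixed perturbation, and only then compose.

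\textbf{Step 1 (embedding).} Given a robust \MaxPl strategy $\robmaxstrategy$ in $\game$, define $\widehat{\robmaxstrategy}$ in $\C$ as follows. On a play that stays in copy~$0$, project it to $\game$, read the move $(\trans,t)$ prescribed by $\robmaxstrategy$, and play $\trans_{0\rightarrow 0}$ (or $\trans_{0\rightarrow0}$ with reset) with delay $t$. Perturbations are copied verbatim. \MinPl's moves in $\C$ are $\trans_{a\rightarrow b}$ with guard $\guard\cap[a,+\infty)$, so they project to legal robust moves of $\game$. Every conforming play therefore projects to a play of $\game$ conforming to $\robmaxstrategy$, with the same weight and the same target or deadlock status.

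Two points need checking. \MinPl may move to a copy $b=\max(a,\leftg{\guard})>0$; there I continue by projection, using $\trans_{a\rightarrow a}$-type non-jumping transitions. By \cref{lem:copy}(\ref{itm:copy_end-val}) the valuation is at least $a$, so the non-jumping guard $\guard\cap[a,+\infty)$ of \MaxPl contains the projected delay's target. Also, if \MinPl is blocked in $\C$, it is blocked in $\game$ or only loses options. This shows: for every \MinPl strategy in $\C$ there is one in $\game$ with equal outcome weight, hence $\rValue[\perturbation,\widehat{\robmaxstrategy}]_\C \geq \rValue[\perturbation,\robmaxstrategy]_\game$, and $\widehat{\robmaxstrategy}$ uses no jumping transition.

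\textbf{Step 2 (comparing values).} With Step 1, $\widehat{\robmaxstrategy}$ guarantees $\rValue[\perturbation]_\game-\varepsilon/2$ in $\C$. It remains to know that $\rValue[\perturbation]_\C(\tuple{\loc,0},\val)$ is not much larger than this. For that I use the second inequality of \cref{prop:preservation_rval}, proved at a fixed perturbation: from any \MaxPl strategy in $\C$ (possibly jumping), its translation into $\game$ alters delays slightly and compensates. That argument yields $\rValue[\perturbation']_\game \geq \rValue[\perturbation]_\C-\varepsilon/2$, but possibly with a perturbation $\perturbation'$ different from $\perturbation$.

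\textbf{Main obstacle.} The perturbation mismatch is the step I expect to be hardest. My first attempt is to show the jump-to-$\game$ translation can be done with the same $\perturbation$. The delay alterations are needed only at \MaxPl's jumps landing on $b\notin\guard$, and they can be absorbed by \MaxPl's own choices without touching perturbation size, giving $\rValue[\perturbation]_\game\geq\rValue[\perturbation]_\C-\varepsilon/2$.

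If that fails, I will instead argue directly in $\C$. Take any $\varepsilon$-optimal strategy of \MaxPl in $\C$ and replace each jumping transition $\trans^j_{a\rightarrow b}$ by the non-jumping $\trans_{a\rightarrow a}$ at the same valuation $b$. Then I check, by induction on plays, that staying in copy $a$ at valuation $b$ leaves \MinPl with no more options than copy $b$ does; this rests on \cref{lem:copy}(\ref{itm:copy_end-val}) and the guard intersections. The one caveat is the case $b\notin\guard$, which is handled by delay approximation with error $\varepsilon/2$ as sketched above. Either route gives an $\varepsilon$-optimal strategy all of whose plays avoid jumping transitions.
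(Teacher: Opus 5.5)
Your plan is the paper's own argument. You embed an almost-optimal \MaxPl strategy of $\mathcal G$ into $\mathcal C$ using the jump-free construction from the first inequality of \cref{prop:preservation_rval}, and then compare values at the same $p$. You are right that the paper's appeal to \cref{prop:preservation_rval}, which only concerns $\mathsf{rVal}^{0^+}$, is not enough, and that fixed-$p$ versions of both inequalities are needed. The appendix does prove the second one at fixed $p$ by correcting \MaxPl's own delays, which is your ``first attempt''.

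The genuine gap is in Step~1, and the paper has it too. You check that blocking of \MinPl transfers, but you never consider deadlocks of \MaxPl. Suppose a play reaches a \MaxPl configuration $(\ell,b)$ with $b\in\mathbb N$ where all outgoing guards are right-open with right bound $b$. Then $\mathcal G$ is deadlocked there, with weight $+\infty$. In any copy $a<b$ of $\mathcal C$, however, the jump $\delta^j_{a\to b}$ with guard $\{b\}$ is enabled with delay $0$. Since \MaxPl must move whenever an edge exists, $\widehat{\tau}$ is forced to jump and the play continues, possibly to the target. So neither ``$\widehat{\tau}$ uses no jumping transition'' nor ``same deadlock status'' holds. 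In the paper, the clause ``any enabled move otherwise'' in the definition of $\tau_{\mathcal C}$ may likewise be a jump, contradicting item~1 of \cref{lem:rVal_strat-G-C}.

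In fact the corollary fails as stated. Let the \MinPl location $\ell_m$ have $\delta_A\colon x\le 3$ of weight $10$ to the target, and $\delta_B\colon x\le 3$ of weight $0$ to a \MaxPl location $\ell_0$. Let $\delta_C\colon x\le 2$ lead from $\ell_0$ to a \MaxPl location $\ell_1$, and $\delta_D\colon 1<x<2$ lead from $\ell_1$ to the target. All other weights are $0$ and $M=3$. For $p\le 1$, after $\delta_B$ \MaxPl can always force $+\infty$. In $\mathcal G$ this is done by waiting until $x=2$ and being stuck in $\ell_1$. In $\mathcal C$ it is done by the jump of $\delta_C$ into $\langle \ell_1,2\rangle$, whose only transition has guard $(1,2)\cap[2,+\infty)=\emptyset$. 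Hence $\mathsf{rVal}^p_{\mathcal G}(\ell_m,0)=\mathsf{rVal}^p_{\mathcal C}(\langle\ell_m,0\rangle,0)=10$. Yet against any jump-free strategy, \MinPl plays $\delta_B$ with delay $0$ and obtains weight $0$. From $\langle \ell_1,0\rangle$ with $x\le 2$, every enabled edge leads to the target, including the forced jump at $x=2$.

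So no jump-free strategy is $\varepsilon$-optimal for $\varepsilon<10$. Your Step~1 inequality $\mathsf{rVal}^{p,\widehat{\tau}}_{\mathcal C}\ge\mathsf{rVal}^{p,\tau}_{\mathcal G}$ fails for the optimal $\tau$. Your fallback fails too: replacing the jump by $\delta_{C,0\to 0}$ lands in $\langle\ell_1,0\rangle$, where \MaxPl is no longer stuck.

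Step~2 has the mirror problem. Change $\delta_C$ to $x<2$ and start in $\ell_0$. Then \MaxPl gets $+\infty$ in $\mathcal C$ by the same jump, while every play of $\mathcal G$ reaches the target with weight $0$. So $\mathsf{rVal}^p_{\mathcal C}\le\mathsf{rVal}^p_{\mathcal G}+\varepsilon/2$ is false, because approaching $2$ from below does not reproduce the deadlock. Any correct version must treat deadlocks of \MaxPl in $\mathcal C$ specially. One option is to allow deadlock-producing jumps; another is to change how \MaxPl's guards are restricted in higher copies.
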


\paragraph*{Exact and robust values coincide in the copy game}
  The next step is to prove that the robust and the exact values are
equal in $\C$.
\begin{restatable}{proposition}{preservationVal}
	\label{prop:preservation_Val}
	Let $\tuple{\loc, 0}$ be a location of the $0$-th copy of
        $\C$.  Then, for every $\val \notin \N^*$,
        $\rValue[0+]_{\C}(\tuple{\loc, 0}, \val) =
        \Value_\C(\tuple{\loc, 0}, \val)$.
      \end{restatable}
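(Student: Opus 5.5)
One inequality is free. By Lemma~\ref{lem:rVal-monotony}, for every $\perturbation>0$ we have $\rValue[\perturbation]_\C(\tuple{\loc,0},\val)\geq \Value_\C(\tuple{\loc,0},\val)$, and taking the limit gives $\rValue[0+]_\C \geq \Value_\C$. The whole difficulty is therefore the converse inequality $\rValue[0+]_\C(\tuple{\loc,0},\val)\leq \Value_\C(\tuple{\loc,0},\val)$. For this we fix $\varepsilon>0$ and an $\varepsilon$-optimal exact strategy $\minstrategy$ of \MinPl in $\C$. We aim to build, for every small enough $\perturbation>0$, a robust strategy $\robminstrategy$ whose outcome weight against any \MaxPl strategy is at most $\Value_\C+\varepsilon+O(\perturbation)$.

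We will use the standard structure of one-clock WTGs. By~\cite{monmege2025decidability,bouyer2006almost}, the exact value function $\val\mapsto\Value_\C(\tuple{\loc',a},\val)$ is piecewise affine, with finitely many pieces on each region, and continuous inside each open region. Moreover, since no state has value $-\infty$, \MinPl has $\varepsilon$-optimal strategies that reach the target within a bounded number of steps. We may also take these strategies to be \emph{region-stable}: they prescribe delays whose target valuation lies in the interior of a region, or at a region bound reached from the left-open side.

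The construction of $\robminstrategy$ is a simulation. \MinPl plays in $\C$ the delay prescribed by $\minstrategy$, shifted downwards by $\perturbation$ whenever the prescribed valuation lies at the right end of the guard. Any perturbation $\varepsilon'\in[0,\perturbation]$ then lands in a valuation that is $\perturbation$-close to the one planned by $\minstrategy$, and we then follow $\minstrategy$ from this perturbed configuration. Continuity of the value inside regions bounds the loss per step by $O(\perturbation)$. Because the number of steps is bounded, the total loss is $O(\perturbation)$ up to the $\varepsilon$ slack. This gives $\rValue[\perturbation]_\C\leq \Value_\C+\varepsilon+K\perturbation$ for a constant $K$ depending on the maximal location weight and the step bound, and letting $\perturbation\to0$ then $\varepsilon\to0$ concludes. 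The hypothesis $\val\notin\N^*$ enters here: the initial valuation is either $0$, where copy~$0$ is exact, or lies in the interior of a region, so continuity applies at the start.

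The main obstacle is to show that \MinPl is never blocked, i.e.\ that every move of $\minstrategy$ can be made robust in this way. This amounts to ruling out the two deadlock patterns of Proposition~\ref{prop:blocage}. Type~1 is excluded in $\C$ by Lemma~\ref{lem:copy}\eqref{itm:copy_caracterisation}. Type~2 arises when \MaxPl pushes the clock up to $\rightg{\guard_0}=\rightg{\guard_1}=b$ and \MinPl must then take $\trans_1$ without an intermediate reset. We would handle it using Lemma~\ref{lem:copy}\eqref{itm:copy_end-val} together with the jumping transitions. In the exact game \MaxPl could already jump to copy~$b$ at valuation $b$, and in that copy, by Definition~\ref{def:copy}, \MinPl's transitions with $\rightg{\guard}\leq b$ are absent. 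Hence the exact value $\Value_\C$ already accounts for the blocking (it is $\geq$ the value of the jump). Concretely, when \MaxPl perturbs so that the valuation gets close to $b$ in copy $a<b$, we let $\robminstrategy$ behave as $\minstrategy$ would after the jumping move $\trans^j_{a\to b}$ to copy~$b$. This is a legal exact play, so its value is at most the one $\minstrategy$ guarantees. It also avoids the blocked transitions, and its valuations differ from the exact ones by at most $\perturbation$.

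The delicate part is to make this jump-mimicking consistent along the whole play, since the strategy lives in copy $a$ while reasoning as if in copy $b$. We would attempt it by induction on the remaining number of steps, maintaining the invariant ``the real play in $\C$ is $\perturbation$-close to an exact play conforming to $\minstrategy$ in a copy with index at least the current one''.
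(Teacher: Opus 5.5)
Your lower bound is fine. Your overall plan is also close to the paper's, in dual form: you simulate an $\varepsilon$-optimal exact strategy of \MinPl, read \MaxPl's moves ending just below an integer $b$ as the jump $\trans^j_{a\to b}$, and use the bounded length $\kappa$ to sum errors. The paper instead makes \MaxPl's robust strategy jump rather than enter $F_i=(i-N\perturbation,i)$ (\cref{lem:rval-eq-val_restrictstrat}), and the bookkeeping you defer to ``induction'' is its projection $\convert_{\wf\to\rmath}$.

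The genuine gap is the step that makes \MinPl's moves robust. Shifting by $\perturbation$ only when the target is the right end of the \emph{current} guard, then invoking continuity inside regions, fails. Take the copy game of a WTG with a \MinPl location $\loc$ of weight $-1$, a non-resetting transition $x\le 3$ to a \MinPl location $\loc'$ of weight $0$, and from $\loc'$ two target transitions $\trans_a\colon x\le 2$ (weight $0$) and $\trans_b\colon x\le 3$ (weight $10$). From $(\tuple{\loc,0},0)$ the exact value is $-2$, obtained by waiting until $x=2$. Your rule does not shift a target at or just below $2$, since $2$ is not the end of $[0,3]$. So \MaxPl perturbs past $2$ and \MinPl pays about $8$. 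The perturbation has crossed the integer $2$, where $\Value_\C(\tuple{\loc',0},\cdot)$ jumps from $0$ to $10$, so the per-step loss is not $O(\perturbation)$.

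Margins also accumulate. Take $\loc_0$ of weight $-1$ with $1\le x\le 2$ to $\loc_1$, and $x\le 2$ from $\loc_1$ to the target. Shifting the first target to $2-\perturbation$ still lets \MaxPl land in $(2-\perturbation,2]$, where \MinPl is deadlocked. Both paths are r-feasible, so \cref{prop:blocage} and \cref{lem:copy} do not rescue your construction. They only say that \emph{some} robust strategy exists, and that strategy must anticipate, here by targeting $2-2\perturbation$.

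Three ingredients are missing, all supplied by the paper. First, a margin $N\perturbation$ below \emph{every} positive integer, not just below guard ends, with $N$ taken from robust reachability via shrunk DBMs. Second, \cref{lem:copy_Min_rstrict}: using continuity of the exact value of $\C$, \MinPl loses at most $\varepsilon$ by never targeting a valuation in some $F_i$ except with delay $0$. Your threshold for ``close to $b$'' must be of the same order $N\perturbation$, so that the jump reading of \MaxPl's moves and \MinPl's margins fit together. Third, ``follow $\minstrategy$ from the perturbed configuration'' discards the uniform bound $\kappa$ on which your error summation relies. You need a companion exact play conforming to $\minstrategy$, which the robust play may run ahead of by up to $\perturbation|\play|$ rather than $\perturbation$. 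This is why $\perturbationBound$ depends on both $\kappa$ and $N$.
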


      This result constitutes the most technical part of this paper
      and forms the core of our contribution. The high-level intuition
      is that the robust and exact values coincide because the copy
      game $\C$ does not contain any path that is feasible and
      r-unfeasible. Compared to similar decidability results for the
      robust value problems, an additional layer of difficulty arises
      in our case with the copy game, and establishing the equality on
      the copy game is non trivial. One needs to massage strategies,
      to prove that optimal strategies for each of the players can be
      taken in a restricted class of \emph{well-formed}
      strategies. Roughly speaking, \MaxPl can encode deadlocks that
      are incurred by robustness into classical deadlocks, by using
      jumping transitions instead of playing very close to the upper
      bound of a guard. For \MinPl, restricting to well-formed
      strategies is less immediate and relies on the continuity of the
      classical value. Thanks to continuity, \MinPl can avoid taking
      the risk of playing close to the bound of guards without
      significantly altering the weight; and since the plays are
      finite, \MinPl can control the difference in weight between
      robust and exact plays. The rest of this section presents in
      more details the successive steps of the proof
      of~\cref{prop:preservation_Val}.

      To prove the proposition, we show the following result: there
      exists an integer $N \in \mathbb{N}$ such that for every
      $\varepsilon>0$ and every location $\loc \in \Locs$, there
      exists a perturbation bound $p>0$ such that
      $\forall i \in \mathbb{N^\star}$ if
      $\nu \in ]i,i+1-N\cdot 2p] \cup \{0\}$, then
  \begin{equation}
  \label{eq:ineq-values}
  \Value_{\C}(\langle \loc, 0 \rangle,\nu) \leq \rValue_\C(\langle
\loc, 0 \rangle,\nu) \leq \Value_{\C}(\langle \loc, 0 \rangle,\nu) +
\varepsilon\enspace.
\end{equation}

Taking the limit when $\perturbation$ tends to $0$, we obtain, for all
real states $(\tuple{\loc, 0}, \val)$ such that
$\val \notin \N^\star$, that
$\Value_\C(\tuple{\loc, 0}, \val) \leq \rValue[0+]_\C(\tuple{\loc, 0},
\val) \leq \Value_\C(\tuple{\loc, 0}, \val) + \varepsilon$.  Since
this inequality hold for all $\varepsilon$, we conclude that
$\rValue[0+]_\C(\tuple{\loc, 0}, \val) \leq \Value_\C(\tuple{\loc, 0},
\val)$.

Now, we successively prove the two inequalities
from~\cref{eq:ineq-values}. First, by
$\cite{monmege2025decidability}$, we already have that for all
$\perturbation > 0$, and $(\tuple{\loc, 0}, \val)$ real states of
$\C$,
$\rValue_\C(\tuple{\loc, 0}, \val) \geq \Value_\C(\tuple{\loc, 0},
\val)$.  Thus,
$\rValue[0+]_\C(\tuple{\loc, 0}, \val) \geq \Value_\C(\tuple{\loc, 0},
\val)$.

Conversely, we prove the following statement. For $\varepsilon >0$,
there exist an integer $N \in \N$ and a perturbation bound
$\perturbationBound > 0$ such that for all
$\perturbation \leq \perturbationBound$, and real states
$(\tuple{\loc, 0}, \val)$, such that
$\val \in (i, i + 1 - N\cdot \perturbation] \cup \{0\}$ with
$i \in \N$, we have
$\rValue_\C(\tuple{\loc, 0},\nu) \leq \Value_{\C}(\tuple{\loc,
  0},\val) + \varepsilon$.  To do so, we first fix $N$ and
$\perturbationBound$ by considering the maximal length of plays to
guarantee that \MinPl can achieve the value.  Then, we prove
in~\cref{lem:rval-eq-val_restrictstrat}
(resp.~\cref{lem:copy_Min_rstrict}) that \MaxPl (resp. \MinPl) can
play almost optimally far enough of positive integer valuations in
$\C$ (resp. $\game$) by enforcing jumping transitions where a
valuation is too close to a guard upper-bound (resp. by the continuity
of the exact value~\cite{Brihaye2021}).  Finally, in
Lemmas~\ref{lem:copy_wf-to-r} and~\ref{lem:rval-eq-val_to-exact_conf},
we embed an exact strategy of \MaxPl into a robust one ensuring a
value at least as large.

\smallskip
\noindent
\textbf{Determining the constants $N$ and $\perturbationBound$.}
Intuitively, $N$ is related to the valuations margin required by
\MinPl to guarantee that the target remains reachable under a robust
semantics, independent of the perturbation chosen by \MaxPl.  A result
from robust reachability in timed games~\cite[Proposition
3.2]{bouyer2015robust}, implies that there exists $N \in \N$ such that
for every reachable pair of location and open region, the region
contains an interval $(i, i+1 - N \cdot \perturbation[$ when
$\perturbation \leq \frac{1}{3N}$. In more
details,~\cite{bouyer2015robust} exploits the data structure of
\emph{shrunk DBM}~\cite{Sankur2014} that builds upon Difference Bound
Matrices (DBM), and which consists in a DBM constrained with a
shrinking matrix indicating how far \MinPl has to remains from border
of regions. $N$ is set as by the maximal element of this matrix and is
computable in
\EXPTIME~\cite{bouyer2015robust}\footnote{Reference~\cite{bouyer2015robust}
  considers a robustness notion, called \emph{excessive semantics},
  that slightly differs from ours, in that the guard is checked before
  the perturbation.  However, a polynomial reduction in the two
  directions between both semantics was later
  provided~\cite{monmege2024synthesis}.  Proposition 3.2
  of~\cite{bouyer2015robust} thus applies to our conservative
  semantics as well.}.

Now, the constant $\perturbationBound$ must account for the weight
alteration due to the perturbations of \MaxPl. To define
$\perturbationBound$, we use the maximal length of plays, denoted
$\kappa$.  Following~\cite{monmege2025decidability}, under the exact
semantics, there exists $\kappa$ such that for every $\epsilon>0$,
there exists an $\epsilon$-optimal strategy $\minstrategy$ for \MinPl,
for which the length of all conforming plays is bounded by $\kappa$.
Formally, it is such that for every $\maxstrategy \in \StratMax[\C]$,
for every real state $(\tuple{\loc, 0}, \val)$,
$|\Play((\tuple{\loc, 0}, \val), \minstrategy, \maxstrategy)|\leq
\kappa$. We observe that $\kappa$ also bounds the length of plays such
that \MinPl reaches the target with an optimal weight both under the
exact and robust semantics, because the (exact) value is always at
most any robust value (see~\cref{lem:rVal-monotony}).  We let
  $\perturbationBound \leq \min\{\dfrac{\varepsilon}{\maxWeightLoc
    \cdot \kappa^2},\dfrac{\varepsilon}{8 N \cdot \maxWeightLoc \cdot
    \kappa}\}$ where $\maxWeightLoc$ is the maximal weight in
  locations of $\C$, $\kappa$ the maximal length of plays, and $N$ the
  bound derived from the above-mentioned (unweighted) robust
  reachability analysis.

Finally, we fix $p \leq \perturbationBound$ and
$(\tuple{\loc, 0}, \val)$ a real state of $\C$ such that
$\val \in (i, i + 1 - N\cdot \perturbation] \cup \{0\}$.

\smallskip
\noindent
\textbf{Restricting to relevant strategies of \MaxPl.}  The copy game
$\C$ introduces \emph{jumping transitions} that give \MaxPl the
opportunity to choose a punctual guard to reach a copy with higher
index, in which the options of \MinPl can be limited compared to the
original game.  We will show that in $\C$, it is in the interest of
\MaxPl --or rather they lose close to nothing-- to choose a jumping
transition instead of staying in the same copy but close to an integer
valuation. In other words, \MaxPl can play almost optimally with a
so-called \emph{well-formed} strategy, guaranteeing that if they stay
in the same copy, then the valuation remains bounded away from
integers.  

Formally, a robust strategy of \MaxPl, $\robmaxstrategy$ is
\emph{$p$-well-formed} if for every robust play $\play$ ending in a
real state of \MaxPl, denoted $(\tuple{\loc, a}, \val)$, such that
$\robmaxstrategy(\play) = (\trans_{a \to a}, \delay)$, then, for every
$i \in \N^*$, $\val + \delay \notin F_i$ where we define
$F_i = (i- N \cdot \perturbation, i)$ as the \emph{forbidden interval}
for $i \in \N$.  \MaxPl can play optimally by picking strategies
within the class of well-formed strategies, by forcing a jumping
transition when the original valuation would fall in a forbidden
interval.  From any almost-optimal robust strategy of \MaxPl that does
not take any jumping transition
(by~\cref{cor:val_rval-Max-without-jump}), we define a new strategy
that will take a jumping transition when a forbidden interval is
reached.  The choice of the size of the interval will control the
weight alteration incurred by jumping to a higher copy, so that the
value this new strategy induces is close enough to the one of the
original strategy.  Thus, we obtain the following lemma
(see~\cref{sec:app_copy-well-formed} for its proof).

\begin{restatable}{lemma}{copyWellFormed}
	\label{lem:rval-eq-val_restrictstrat}
	\MaxPl has a $\perturbation$-well-formed $\varepsilon$-optimal strategy in $\C$.
\end{restatable}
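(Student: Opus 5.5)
We start from an $\varepsilon/2$-optimal robust strategy $\robmaxstrategy$ of \MaxPl in $\C$ whose conforming plays contain no jumping transitions; it exists by \cref{cor:val_rval-Max-without-jump}. From it we build a $\perturbation$-well-formed strategy $\robmaxstrategy'$ by local surgery. Let $\play$ end in a real \MaxPl state $(\tuple{\loc,a},\val)$ with $\robmaxstrategy(\play)=(\trans_{a\to a},\delay)$ and $\val+\delay\in F_i$ for some $i\in\N^*$. There are two cases, using the guard $\guard$ of the original transition $\trans$. If $i\in\overline{\guard}$ and $i\geq a+1$, then $\robmaxstrategy'$ plays instead the jumping transition $\trans^j_{a\to i}$ with delay $i-\val$. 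Otherwise $\rightg{\guard}<i$, so the guard forces the valuation to be at most $\rightg{\guard}\le i-1$, and then $F_i$ cannot actually be reached. The remaining case $i\le a$ is also impossible, because \cref{lem:copy} (item~\ref{itm:copy_end-val}) gives $\val\ge a$, so the only relevant forbidden intervals lie above $a$. On all other plays $\robmaxstrategy'$ copies $\robmaxstrategy$.

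The continuation after a jump is handled by simulation. After jumping to copy $i$ at valuation exactly $i$, $\robmaxstrategy'$ must keep behaving like $\robmaxstrategy$ as though the play had continued in copy $a$ at the slightly smaller valuation $\val+\delay\in(i-N\perturbation,i)$. We maintain a mapping from plays under $\robmaxstrategy'$ to shadow plays under $\robmaxstrategy$ in which the valuations differ by less than $N\perturbation$ until the next reset, at which point they coincide again.

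\begin{itemize}
\item For a \MinPl move in copy $i$ (resp.\ its copy $\max(i,\leftg{\guard'})$) with delay $t'$, the shadow \MinPl move uses the same transition in copy $a$, with the delay adjusted so as to reach the same valuation. This is allowed because the guard in copy $i$ is $\guard'\cap[i,+\infty)$, which is contained in $\guard'\cap[a,+\infty)$. The destination copies also agree, since $\max(a,\leftg{\guard'})\le\max(i,\leftg{\guard'})$ and \MaxPl can always jump back up. Robust margins are preserved because the perturbation interval is reached identically.
\item For a \MaxPl move, $\robmaxstrategy'$ plays in copy $i$ the transition chosen by $\robmaxstrategy$ in the shadow play. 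We check that it exists in copy $i$: if its target valuation is at least $i$ it does; otherwise the shadow valuation lies in $F_i$ again, and we treat it recursively by a jump or by staying at valuation $i$ with delay $0$.
\end{itemize}

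The main obstacle is precisely this bookkeeping: showing that every play conforming to $\robmaxstrategy'$ has a shadow play conforming to $\robmaxstrategy$, that deadlocks for \MinPl in the real play are also deadlocks (or at least no better for \MinPl) in the shadow, and that the shadow play respects all guards. The key fact is that \MinPl's options in copy $i$ form a subset of those in copy $a$, by \cref{def:copy}.

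It remains to bound the weights. Each step of the real play differs from its shadow only in delay, and by at most $N\perturbation$ in total before the valuations resynchronise. So each step contributes a weight difference of at most $2N\perturbation\maxWeightLoc$ (the gap is created once and closed once). We only need to consider plays of length at most $\kappa$, since \MinPl's $\varepsilon$-optimal strategies produce plays bounded by $\kappa$. The total discrepancy is therefore at most $2\kappa N\perturbation\maxWeightLoc\le\varepsilon/4$, using $\perturbation\le\perturbationBound\le\varepsilon/(8N\maxWeightLoc\kappa)$. Hence $\rValue[\perturbation,\robmaxstrategy']\ge\rValue[\perturbation,\robmaxstrategy]-\varepsilon/4\geq\rValue-\varepsilon$, so $\robmaxstrategy'$ is $\varepsilon$-optimal. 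By construction it never stays in the same copy while entering a forbidden interval, so it is $\perturbation$-well-formed.
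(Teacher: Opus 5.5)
Your route is the paper's: start from the jump-free $\varepsilon/2$-optimal strategy of \cref{cor:val_rval-Max-without-jump}, replace a same-copy move into $F_i$ by the jump to copy~$i$, and compare each play with a shadow play conforming to $\robmaxstrategy$. The step that fails is your second bullet, the \MaxPl move right after a jump. At that point the real play is in $(\tuple{\loc',i},i)$ and the shadow is in $(\tuple{\loc',a},w)$ with $w\in F_i$. Nothing prevents $\robmaxstrategy$ from firing from $w$ a transition $\trans'$ whose guard $\guard'$ is right-open at $i$, e.g.\ $\guard'=[0,i)$. In copy~$i$ the guard $\guard'\cap[i,+\infty)$ is then empty. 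There is no jumping copy either, since $\overline{\guard'}\cap\llbracket i+1,\clockbound\rrbracket=\emptyset$. So neither ``stay at $i$ with delay $0$'' nor ``jump'' exists; your recursive treatment works exactly when $i\in\guard'$. Your strategy must then play some other transition, and the shadow simulation breaks down.

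This cannot be repaired by better bookkeeping. Consider the following game, in which $\loc_f$ is a target, all location weights are $0$, and transition labels give the guard and the weight (omitted when $0$). The \MinPl location $\loc_S$ has $\loc_S\xrightarrow{[0,2],\,5}\loc_f$ and $\loc_S\xrightarrow{[0,1]}\loc_A$. The \MaxPl location $\loc_A$ has the single transition $\loc_A\xrightarrow{[0,1)}\loc_B$. The \MaxPl location $\loc_B$ has $\loc_B\xrightarrow{[0,1),\,10}\loc_f$ and $\loc_B\xrightarrow{[0,2]}\loc_f$. For $0<\perturbation<1$, the $\perturbation$-robust value of $(\tuple{\loc_S,0},0)$ in $\C$ is $5$: \MaxPl answers the move to $\loc_A$ with perturbation $0$ and then fires both right-open transitions immediately in copy~$0$, without jumping. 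Now take $N\geq 2$; your argument uses nothing about $N$ beyond $N\perturbation<1$. Let \MinPl propose delay $1-\perturbation$ towards $\loc_A$. Whatever the perturbation, $\tuple{\loc_A,0}$ is entered with a valuation in $[1-\perturbation,1]\subseteq F_1\cup\{1\}$. So every $\perturbation$-well-formed strategy, yours included, must take the jump to $(\tuple{\loc_B,1},1)$. There the weight-$10$ transition has an empty guard and no jumping copy, so the play ends with weight $0$. Hence, for $N\geq2$, no $\perturbation$-well-formed strategy is $\varepsilon$-optimal for $\varepsilon<5$. The statement needs an extra hypothesis or a weaker notion of well-formedness. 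The paper's $\widehat{\robmaxstrategy}$ hits the same wall: its last case plays $\trans_{c\to c}$ without checking that $\guard\cap[c,+\infty)\neq\emptyset$. Independently, your appeal to $\kappa$ is not justified, and the paper uses it the same way. $\kappa$ bounds the plays of one particular exact strategy of \MinPl. But $\varepsilon$-optimality of ${\robmaxstrategy}'$ quantifies over all robust strategies of \MinPl, whose plays against ${\robmaxstrategy}'$ can be arbitrarily long, with each jump shifting the weight by up to $2N\perturbation\maxWeightLoc$.
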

In the sequel, we therefore consider only $\perturbation$-well-formed
robust strategy for \MaxPl in $\C$.

\smallskip
\noindent
\textbf{Restricting to relevant strategies of \MinPl.}
Against a $\perturbation$-well-formed strategy of \MaxPl, \MinPl
always has a robust decision along a feasible path in $\C$ (which is
r-feasible by~\cref{lem:copy}-\cref{itm:copy_caracterisation}).  Thus,
a crux of the proof is to show that we can restrict the choices of
\MinPl to robust ones in $\C$ without impacting too much the value.
To control the value alteration, we use the fact that the (exact)
value is continuous over regions~\cite{Brihaye2021}.

Formally, an (exact) play is \emph{$\perturbation$-well-formed} if for
every edge
$(\loc, \val) \xrightarrow{(\trans, \delay)} (\loc', \val')$ along the
play, we have $\delay = 0$ or $\val + \delay \notin F_i$ for all
$i \in \N^*$ ($F_i$ is again the forbidden set for interval
$(i,i+1)$). Similar to \MaxPl in the previous paragraph, one can prove
that \MinPl has an almost-optimal $\perturbation$-well-formed strategy
in $\C$.  The proof is given in~\cref{app:copy_e-to-wf}.

\begin{restatable}{lemma}{copyMin}
	\label{lem:copy_Min_rstrict}
	\MinPl has a $\perturbation$-well-formed $\varepsilon$-optimal
        strategy in $\C$.
\end{restatable}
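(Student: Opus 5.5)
We need to plan a proof of Lemma~\ref{lem:copy_Min_rstrict}: \MinPl has a $\perturbation$-well-formed $\varepsilon$-optimal strategy in $\C$ (in the exact semantics). Here a play is well-formed if every \MinPl edge (and every edge along the play) has delay $0$ or lands outside every forbidden interval $F_i=(i-N\perturbation,i)$.

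The plan is to start from an $\varepsilon/2$-optimal exact strategy $\minstrategy$ of \MinPl in $\C$ whose conforming plays all have length at most $\kappa$, as recalled from~\cite{monmege2025decidability}. We then modify each of its decisions locally. Consider a play ending in $(\tuple{\loc,a},\val)$ where $\minstrategy$ proposes $(\trans,\delay)$ with $\delay>0$ and $\val+\delay\in F_i$. We replace $\delay$ by a delay $\delay'$ with $\val+\delay'=i-N\perturbation$ if this point is still reachable, i.e.\ $\val\le i-N\perturbation$. Otherwise $\val\in F_i$ already, and we take $\delay'=0$.

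We must check that the new move is legal in $\C$. The guard of $\trans$ in copy $a$ is an interval with integer bounds. It contains $\val+\delay\in(i-N\perturbation,i)$, so it contains the whole region $(i-1,i)$. Hence it also contains $i-N\perturbation$ when $N\perturbation<1$, and it contains $\val$ in the case $\delay'=0$. The target copy $\max(a,\leftg{\guard})$ is unchanged, because the destination of a \MinPl transition in $\C$ does not depend on the delay. The reached valuation $\val'$ differs from the original one by at most $N\perturbation$ and stays in the same open region $(i-1,i)$ (or on its left boundary-free side). Resets erase the difference entirely.

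The main step is to control the weight. The modified play is no longer the original one, so we cannot simply compare plays. Instead, we compare exact values through their continuity inside regions~\cite{Brihaye2021}. On each region, $\Value_\C(\tuple{\loc',b},\cdot)$ is piecewise affine with slopes bounded by $\maxWeightLoc$, so it is $\maxWeightLoc$-Lipschitz. Also, $\Value_\C(\tuple{\loc',b},\cdot)$ is finite by the no-$-\infty$ assumption and the fact that the value is reached within $\kappa$ steps. We then define the well-formed strategy as follows. After each modified step, \MinPl restarts from the reached state with a fresh $\eta$-optimal bounded-length strategy, for a small $\eta$. Each modification changes the weight of that step by at most $\maxWeightLoc\cdot N\perturbation$ for the delay, and the value at the successor by at most $\maxWeightLoc\cdot N\perturbation$ by Lipschitz continuity. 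We then argue by induction on the remaining number of steps, at most $\kappa$, that the guaranteed weight is at most $\Value+k(2\maxWeightLoc N\perturbation+\eta)$ after $k$ steps. With $\perturbation\le\perturbationBound\le\varepsilon/(8N\maxWeightLoc\kappa)$ and $\eta=\varepsilon/(4\kappa)$, the total error is at most $\varepsilon$.

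Two points need care. First, Lipschitz continuity holds only within a region, so the same-region observation above is essential. Second, the restarted strategies must keep the total length bounded by $\kappa$ so that the induction terminates; using a strategy that is optimal in a length-bounded (acyclic) unfolding handles this.

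Finally, we must check that \MaxPl edges cannot break well-formedness of \MinPl's moves. The definition constrains only delays chosen by \MinPl, so this holds. The remaining obstacle is making the restart/induction argument rigorous; I expect this to be the hardest part, since it requires the uniform bound $\kappa$ on play lengths from every real state, not only from the initial one.
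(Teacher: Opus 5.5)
Your proposal is correct, but it takes a different route from the paper's appendix proof. That proof uses no regularity of the value, even though the main text mentions continuity. It fixes one $\varepsilon/2$-optimal strategy $\sigma$ whose plays have length at most $\kappa$. It then couples each well-formed play $\rho$ with a shadow play $\mathsf{convert}^{\sigma}_{\mathrm{wf}\to\mathrm{e}}(\rho)$ that conforms to $\sigma$, visits the same locations, and is at most $N\perturbation$ ahead in valuation. The strategy $\sigma_{\mathrm{wf}}$ reads the decision of $\sigma$ on the shadow and truncates its delay when the real valuation would land in some $F_i$. Weights are then compared play by play, $\weightC(\rho)\le\weightC(\mathsf{convert}^{\sigma}_{\mathrm{wf}\to\mathrm{e}}(\rho))+N\perturbation\maxWeightLoc|\rho|$, and the length bound is inherited from the shadow.

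You replace this coupling by value bookkeeping with restarts. Each truncation costs at most $2\maxWeightLoc N\perturbation$: the delay change plus the jump of the value inside the open region. Each restart costs $\eta$.

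The paper's route needs nothing about value functions. Its delicate point is the lag between the two plays: \MinPl's delay is computed on the shadow but applied from the real valuation, and \MaxPl's moves must be replayed on a shadow that is ahead. One must check that guards stay satisfied across this lag, which is exactly your same-region observation; the paper treats it only briefly. Your route never has a lag, since each plan is consulted at the real current state. \MaxPl's moves are absorbed by the inequality ``value $\ge$ step weight $+$ value of the successor''. The price is a Lipschitz bound on values with constant $\maxWeightLoc$. Continuity over regions alone does not give this constant, so you should prove it. A direct induction does it: waiting and then taking a transition only produces slopes $-\weight(\loc)$ or slopes of the successor's value.

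To make your sketch rigorous, do the whole bookkeeping, including the Lipschitz step, with the values $V_k$ of the $k$-step unfolding rather than with $\Value_\C$. With $\Value_\C$ and fresh $\kappa$-bounded plans, nothing bounds the number of restarts, so neither the error nor termination is controlled. With $V_k$, the invariant
\[ \weightC(h)+V_{\kappa-|h|}(\last(h))\le \Value_\C(s_0)+(\text{number of plans})\cdot\eta+(\text{number of truncations})\cdot 2\maxWeightLoc N\perturbation \]
propagates, and every play has length at most $\kappa$. Finiteness of $V_k$ is constant on regions, which covers the successor $s'$ versus $s^*$. You then only need $V_\kappa(s_0)=\Value_\C(s_0)$ at the initial state, so your worry about a uniform bound from every real state disappears.
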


\smallskip
\noindent
\textbf{Constructing a classical strategy from a robust strategy
  preserving values.}  We now have all intermediate results to
conclude the proof of~\cref{prop:preservation_Val}. Let
$\robmaxstrategy$ be a $p$-well-formed $\varepsilon$-optimal strategy for \MaxPl
as defined in~\cref{lem:rval-eq-val_restrictstrat}. The proof consists in the
definition of an exact strategy, $\maxstrategy$, from
$\robmaxstrategy$, such that for every play $\play$ conforming to
$\maxstrategy$ and starting from $(\loc, \val)$ with
$\val \notin \bigcup_{i \in \N^\star} F_i$, there exist a play
$\play^r$ conforming to $\robmaxstrategy$ such that
$\weight(\play^r) \leq \weight(\play) + \varepsilon$.  Indeed, we
obtain the following inequality:
\begin{align*}
	\rValue_\C(\loc, \nu) &= \inf_{\sigma^r \in \rStratMin[\C]} \weight(\rPlay(\loc, \val, \sigma^r, \tau^r))  \\
	&\leq \inf_{\sigma \in \StratMin[\C]} \weight(\Play(\loc, \val,\sigma, \tau)) + \varepsilon  =  \Value_\C(\loc, \nu) + \varepsilon
\end{align*}

The first step to come up with $\maxstrategy$ is the definition of a
"projection" mapping that transforms an exact play into a robust play.
The main difficulty here is the need to adapt the decisions of \MinPl
when they are not robust (i.e. when \MinPl plays "too close" to the
upper-bound of a guard). We therefore use~\cref{lem:copy_Min_rstrict} to
restrict the set of plays that we consider to plays conforming to
a $\perturbation$-well-formed strategy.  Formally, the projection
$\convert_{\wf \to \rmath}^{\robmaxstrategy} : \FPlays_{wf} \to \rFPlays$ is
such that
$\convert_{\wf \to \rmath}^{\robmaxstrategy}(\tuple{\loc, a}, \val) =
(\tuple{\loc, a}, \val)$, and, for every
$\play = \play_1 \xrightarrow{\trans, \delay} (\tuple{\loc, a},
\val)$, by letting $\val_1^c$ be the last valuation
of $\convert_{\wf \to \rmath}^{\robmaxstrategy}(\play_1)$,
according to the last configuration, we distinguish two cases:
\begin{itemize}
	\item if $\last(\play_1) \in \ConfMin$, then we fix $\convert_{\wf \to \rmath}^{\robmaxstrategy}(\play)$ to be equal to
	\begin{itemize}
		\item $\convert_{\wf \to \rmath}^{\robmaxstrategy}(\play_1) \xrightarrow{\trans, \val - \val_1^c} s
		\xrightarrow{\robmaxstrategy(\convert_{\wf \to \rmath}^{\robmaxstrategy}(\play_1)
			\xrightarrow{\trans, \val - \val_1^c} s)} s'$
		\hfill if  $\val - \val_1^c >  0$
		\item $\convert_{\wf \to \rmath}^{\robmaxstrategy}(\play_1) \xrightarrow{\trans, \delay} s
		\xrightarrow{\robmaxstrategy(\convert_{\wf \to \rmath}^{\robmaxstrategy}(\play_1)
			\xrightarrow{\trans, \delay} s)} s'$
		\hfill otherwise
	\end{itemize}
	\item otherwise $\last(\play_1) \in \ConfMax$, and we fix $\convert_{\wf \to \rmath}^{\robmaxstrategy}(\play)$ to be equal to
	\item \begin{itemize}
		  \item $\convert_{\wf \to \rmath}^{\robmaxstrategy}(\play_1)
		  \xrightarrow{\robmaxstrategy(\convert_{\wf \to \rmath}^{\robmaxstrategy}(\play_1))} s$
			\hfill  if $\robmaxstrategy(\convert_{\wf \to \rmath}^{\robmaxstrategy}(\play_1)) =
		  (\trans, \delay - \val^c_1 + \val_1)$
		  \item $\convert_{\wf \to \rmath}^{\robmaxstrategy}(\play_1) \xrightarrow{\trans, \delay} s$
		  \hfill otherwise
	\end{itemize}
\end{itemize}

We note that, given $\play$ a $\perturbation$-well-formed play,
$\convert_{\wf \to \rmath}^{\robmaxstrategy}(\play)$ is indeed a robust
play, since every decision for \MinPl (resp. \MaxPl) on $\play$ is
robust by~\cref{lem:copy_Min_rstrict}
(resp.~\cref{lem:rval-eq-val_restrictstrat}).
Moreover, some interesting properties on the projection mapping
(see~\cref{app:copy_wf-to-r} for their proof):
\begin{restatable}{lemma}{copyWFtoR}
	\label{lem:copy_wf-to-r}
	Let $\play$ be a finite $p$-well-formed play.
	Then,
	\begin{enumerate}
		\item\label{itm:copy_wf-to-r_last-loc}
		$\convert_{\wf \to \rmath}^{\robmaxstrategy}$ preserves the last
		location of $\play$. Formally, if $\tuple{\loc, a}$ is the
		last location of $\play$, then $\tuple{\loc, a}$
		is the last location of
		$\convert_{\wf \to \rmath}(\play)$;

		\item\label{itm:copy_wf-to-r_last-val} the valuation deviation of $\convert_{\wf \to \rmath}^{\robmaxstrategy}$ is controlled. Formally, if $\val$
		is the last valuation of $\play$, then the last valuation $\val^c$ of
		$\convert_{\wf \to \rmath}(\play)$ is such that
		$0 \leq \val^c - \val \leq \perturbation \cdot |\play|$;

		\item\label{itm:copy_wf-to-r_weight} the weight
		difference is controlled, i.e.
		$\weightC(\convert_{\wf \to \rmath}(\play)) \leq
		\weightC(\play) + \maxWeightLoc \cdot \perturbation \cdot |\play|^2$.
	\end{enumerate}
\end{restatable}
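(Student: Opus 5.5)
We prove the three items simultaneously by induction on the length of $\play$, following the recursive definition of $\convert_{\wf \to \rmath}^{\robmaxstrategy}$. The base case $\play = (\tuple{\loc, a}, \val)$ is immediate: the projection is the identity, the last location and valuation coincide, and both cumulated weights are $0$. For the inductive step we write $\play = \play_1 \xrightarrow{\trans, \delay} (\tuple{\loc, a}, \val)$, let $\val_1$ (resp.\ $\val_1^c$) be the last valuation of $\play_1$ (resp.\ of its projection), and assume by induction that $0 \leq \val_1^c - \val_1 \leq \perturbation\cdot|\play_1|$ and that the weight bound holds for $\play_1$.

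\textbf{Item~\ref{itm:copy_wf-to-r_last-loc}.} In every branch of the definition, the edge appended to the projection carries the same transition $\trans$ as in $\play$. For \MinPl the perturbation edge of \MaxPl only chooses $\varepsilon$, and in the first \MaxPl branch the decision of $\robmaxstrategy$ is by assumption $(\trans, \cdot)$. Hence the target copy and location coincide. For \MinPl's non-resetting transitions, the target copy $\max(a,\leftg{\guard})$ depends only on $\trans$ and $a$, so it is the same in both plays.

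\textbf{Item~\ref{itm:copy_wf-to-r_last-val}.} If $\trans$ resets the clock, both last valuations are $0$. Otherwise we split into cases.

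\emph{\MinPl case with $\val - \val_1^c > 0$.} The robust delay reaches exactly $\val$, and the perturbation $\varepsilon \in [0,\perturbation]$ yields $\val^c = \val + \varepsilon$, so $0 \leq \val^c - \val \leq \perturbation \leq \perturbation\cdot|\play|$.

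\emph{\MinPl case with $\val \leq \val_1^c$.} The delay $\delay$ is replayed from $\val_1^c$, giving $\val^c = \val_1^c + \delay + \varepsilon$. Thus $\val^c - \val = (\val_1^c - \val_1) + \varepsilon$, which lies in $[0, \perturbation(|\play_1|+1)]$.

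\emph{\MaxPl cases.} There is no perturbation, so the deviation is either $0$ (when $\robmaxstrategy$ aims at the same valuation) or is inherited from $\play_1$.

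The main subtle point is that the replayed delay is legal in the robust semantics: the guard must contain $[\val_1^c + \delay, \val_1^c + \delay + \perturbation]$. Here we use $p$-well-formedness: $\val_1 + \delay \notin F_i$ (or $\delay = 0$), together with the accumulated deviation $\perturbation\cdot|\play| \leq \perturbation\kappa$ and the choice of $N$. This is exactly the remark preceding the lemma, and we take it as given.

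\textbf{Item~\ref{itm:copy_wf-to-r_weight}.} Discrete weights coincide, since the transitions are identical. The only difference comes from time spent in each location. At step $k$, the delay in the projection differs from the delay in $\play$ by at most the valuation deviation before and after the step, i.e.\ by at most $\perturbation\cdot k$ plus one perturbation. Multiplying by $\maxWeightLoc$ and summing gives
\[
\weightC(\convert_{\wf \to \rmath}(\play)) - \weightC(\play) \leq \maxWeightLoc \cdot \perturbation \sum_{k\leq|\play|} k \leq \maxWeightLoc \cdot \perturbation \cdot |\play|^2 .
\]
To obtain this cleanly, the induction should carry an explicit per-step difference bound $|\delay^c_k - \delay_k| \leq \perturbation\cdot k$, where $\delay^c_k$ is the delay used in the projection; virtual edges carry weight $0$. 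I expect this bookkeeping of signs to be the main obstacle, since location weights may be negative. For that reason I will bound the absolute per-step differences rather than attempt a signed comparison.
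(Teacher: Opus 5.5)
Your proof is correct and follows the paper's argument. It uses the same induction along the recursive definition of $\convert_{\wf \to \rmath}^{\robmaxstrategy}$ with identical case analysis for items~1 and~2. In item~3, your absolute per-step bound $|\delay^c_k-\delay_k|\le\perturbation\cdot k$ (which also holds on resetting steps) simply replaces the paper's case split on the sign of $\weight(\loc)$, and even yields $\maxWeightLoc\cdot\perturbation\cdot|\play|(|\play|+1)/2$.

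Like the paper, you rightly treat the legality of the replayed robust moves as a standing assumption rather than part of this lemma. You should drop your aside attributing it to the choice of $N$, though: the accumulated shift, up to $\perturbation\cdot|\play|$, can exceed the margin $N\cdot\perturbation$, since $N$ is fixed independently of $\kappa$.
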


Using the projection mapping, we define the new strategy,
$\maxstrategy$, for \MaxPl to play in the exact semantics.
Intuitively, $\maxstrategy$ mimicks as much as possible the choices of
the robust strategy (by adapting delays to reach good valuations)
against plays conforming to $\minstrategy$.  Indeed, since
$\minstrategy$ is $\varepsilon$-optimal for \MinPl, it suffices for
\MaxPl to play almost-optimally against it.  Formally, for every
(exact) play $\play$ conforming to $\minstrategy$, letting $\val$
(resp. $\val^c$) be the last valuation of $\play$ (resp.
$\convert_{\wf \to \rmath}^{\robmaxstrategy}(\play)$), we define
$\maxstrategy$ such that
\begin{displaymath}
	\maxstrategy(\play) = (\trans_{a \to b}^{(j)}, \val^c + \delay^c - \val) \qquad \text{if }
	(\trans_{a \to b}^{(j)}, \delay^c) = \robmaxstrategy(\convert_{\wf \to \rmath}^{\robmaxstrategy}(\play))
\end{displaymath}
On other play prefixes, $\maxstrategy$ may take any enable transition.
So defined, $\maxstrategy$ is a legal strategy for \MaxPl, since the
chosen delay satisfies the guard by reaching the same valuation of
$ \robmaxstrategy$ (that is a robust strategy),
and it is non-negative, by~\cref{itm:copy_wf-to-r_last-val}
of~\cref{lem:copy_wf-to-r}.

Finally, by construction, we have the following property
(see~\cref{app:rval-eq-val_to-exact_conf} for its proof):
\begin{restatable}{lemma}{eqValConf}
	\label{lem:rval-eq-val_to-exact_conf}
	If a play $\play$ conforms to $\maxstrategy$, then
        $\convert_{\wf \to \rmath}^{\robmaxstrategy}(\play)$ conforms to
        $\robmaxstrategy$.

\end{restatable}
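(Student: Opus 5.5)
We proceed by induction on the length of $\play$, using the recursive definition of $\convert_{\wf \to \rmath}^{\robmaxstrategy}$. Recall that a robust play conforms to $\robmaxstrategy$ if every decision taken from a state of \MaxPl along it coincides with the value of $\robmaxstrategy$ on the corresponding prefix. In $\sem[\perturbation]{\C}$ there are two kinds of such states: real states of \MaxPl, and virtual states $s$ reached after a decision of \MinPl, from which \MaxPl chooses the perturbation.

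\textbf{Base case.} If $\play = (\tuple{\loc, a}, \val)$ has no edge, then its projection is the same single state, and it trivially conforms to $\robmaxstrategy$.

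\textbf{Inductive step.} Write $\play = \play_1 \xrightarrow{\trans, \delay} (\tuple{\loc, a}, \val)$. The prefix $\play_1$ conforms to $\maxstrategy$, so by the induction hypothesis $\convert_{\wf \to \rmath}^{\robmaxstrategy}(\play_1)$ conforms to $\robmaxstrategy$. We then only need to check the edges appended by the definition of the projection.

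\textbf{Case $\last(\play_1) \in \ConfMin$.} The appended \MinPl edge, with delay either $\val - \val_1^c$ or $\delay$, imposes no conformance constraint. The subsequent perturbation edge from the virtual state $s$ is, by definition, $\robmaxstrategy(\convert_{\wf \to \rmath}^{\robmaxstrategy}(\play_1) \xrightarrow{\trans,\cdot} s)$. This is exactly the decision of $\robmaxstrategy$ on that prefix, so conformance is preserved.

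\textbf{Case $\last(\play_1) \in \ConfMax$.} Since $\play$ conforms to $\maxstrategy$, and $\play_1$ also conforms to it (as the statement implicitly requires), the definition of $\maxstrategy$ gives $(\trans, \delay) = (\trans_{a\to b}^{(j)}, \val_1^c + \delay^c - \val_1)$, where $(\trans_{a\to b}^{(j)}, \delay^c) = \robmaxstrategy(\convert_{\wf \to \rmath}^{\robmaxstrategy}(\play_1))$. Hence $\robmaxstrategy(\convert_{\wf \to \rmath}^{\robmaxstrategy}(\play_1)) = (\trans, \delay - \val_1^c + \val_1)$. We are therefore in the first subcase of the projection, where the appended edge is exactly $\robmaxstrategy$'s decision, and the extended projection still conforms.

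\textbf{Main obstacle.} The delicate point is a bookkeeping issue: $\maxstrategy$ is only defined through $\robmaxstrategy$ on plays conforming to the fixed well-formed strategy $\minstrategy$, and is arbitrary elsewhere. We must make sure that, for the plays under consideration, the ``otherwise'' branch of the \MaxPl case never fires. We would handle this by restricting to $p$-well-formed plays conforming to both $\minstrategy$ and $\maxstrategy$. This is the setting in which the projection is defined, using \cref{lem:copy_Min_rstrict}, and in which $\maxstrategy$ was built.

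We also need the appended edges to be legal, so that the projection is genuinely a robust play. The new valuation $\val_1^c + \delay^c$ is reached by $\robmaxstrategy$ itself, so it is legal. The delay $\val - \val_1^c$ is nonnegative and robust by \cref{lem:copy_wf-to-r} together with well-formedness.
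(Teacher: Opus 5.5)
Your proof is correct and follows essentially the same route as the paper: induction on the length of $\play$, a case split on whether $\last(\play_1)$ belongs to \MinPl or \MaxPl, the induction hypothesis, and the fact that the projection takes its perturbations and \MaxPl moves from $\robmaxstrategy$. You are more explicit than the paper in two places, both consistent with how the lemma is used. In the \MaxPl case you derive, from the definition of $\maxstrategy$, that the first branch of the projection applies. You also note that the statement should be read for $\perturbation$-well-formed plays that also conform to $\minstrategy$.
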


Given $\play$ a play conforming to $\maxstrategy$ and $\minstrategy$,
we define
$\play' = \convert_{\wf \to \rmath}^{\robmaxstrategy}(\play)$, and
this play conforms to $\robmaxstrategy$ and satisfies
$\weight(\play') \leq \weight(\play) + \varepsilon$.  Indeed,
  by~\cref{itm:copy_wf-to-r_weight} of~\cref{lem:copy_wf-to-r},
  $\weightC(\convert_{\wf \to \rmath}^{\robmaxstrategy}(\play)) \leq
  \weightC(\play) + \maxWeightLoc \cdot \perturbation \cdot |\rho|^2$,
  and from $\perturbation \leq \perturbationBound$ and
  $|\play| \leq \kappa$ we get:
\begin{align*}
  \weightC(\convert_{\wf \to \rmath}^{\robmaxstrategy}(\play)) &\leq 	\weightC(\play)  +
	\maxWeightLoc \cdot \perturbationBound \cdot \kappa^2 \\
	&\leq  	\weightC(\play) +
	\maxWeightLoc  \cdot \dfrac{\varepsilon}{\maxWeightLoc \cdot \kappa^2}  \cdot \kappa^2 \\
	&\leq \weightC(\play) + \varepsilon
\end{align*}
A consequence of~\cref{itm:copy_wf-to-r_last-loc} of~\cref{lem:copy_wf-to-r}, is that either $\play$ and
$\convert_{\wf \to \rmath}^{\robmaxstrategy}(\play)$ both reach the
target or none of them do. Thus, 
$\weight(\convert_{\wf \to
  \rmath}^{\robmaxstrategy}(\play))  \leq \weight(\play)  + \varepsilon$.  That concludes the
proof of~\cref{prop:preservation_Val}.

%% file: conclusion.tex
This paper shows the decidability of the robust value problem for
1-clock weighted timed games under mild assumptions (no $-\infty$
value and a bounded clock $\nu \in [0,\clockbound]$). We do so by
reducing the robust value problem to the exact value problem in a WTG
of exponential size (in the representation of the maximal constant
$\clockbound$). Note that our proof technique more broadly establishes
the computability of $\rValue[0^+](\loc_0, \nu)$, for any valuation $\nu
\in [0,\clockbound] \setminus \mathbb{N}^*$.

Our current decision procedure runs in doubly-exponential
time. However, one can expect to lower this complexity: (1) instead of
building all copies indexed from $0$ to $\clockbound$, one could
restrict to polynomially many copies indexed by the useful regions
over 1-clock timed models~\cite{LMS-concur04}, thus yielding a
polynomial reduction to the exact value problem; (2) the exact value
problem is suspected to be solvable in \PSPACE (improving the known
\EXPTIME upper bound~\cite{monmege2025decidability}) by an encoding
into the existential theory of reals. If these two facts are
confirmed, solving the value problem for 1-clock WTG with
robustness would be in \PSPACE, that is no harder (in terms of
complexity classes) than the exact value problem for the same class:
robustness induces no complexity increase.

As future work, beyond clarifying these complexity issues, we plan to
explore whether our copy game construction extends to other classes of
WTGs with decidable robust value problem, such as acyclic WTGs and
divergent WTGs. If so, it would offer a tradeoff between the
complexity (suspected higher with the copy game) and the freedom to
choose the initial valuation (the copy game construction currently only
allows one to compute the value for a fixed initial valuation).

%% file: app_asap.tex
\section[Proof of Lemma 8]{Proof of~\cref{lem:deadlock_asap}}
\label{app:asap}

Before  proving this new characterisation of r-unfeasible paths, we recall
the definition of asap strategies.
Given a path $\ppath=\loc_0 \xrightarrow{\trans_0} \dots \xrightarrow{\trans_{k-1}} \loc_k$,
a perturbation bound $\perturbation$, and $\varepsilon > 0$, we define
an \emph{asap strategy} of \MinPl on $\game_\ppath$,
denoted by $\robminstrategy[\varepsilon]_\ppath$, such that
for all plays $\play \in \rPlaysMin$ where $\play$ follows $\ppath$ and $|\play| < |\ppath|$, we fix
\[
	\robminstrategy[\varepsilon]_\ppath(\play)=
	\begin{cases}
		(\trans_{|\play|}, \delay_{\min}(\val_{|\play|-1}, \guard_{|\play|})) &
		\text{if } \val_{|\play|-1} +  \delay_{\min} \models \guard_{|\play|}; \\
		(\trans_{|\play|}, \delay_{\min}(\val_{|\play|-1}, \guard_{|\play|}) + \varepsilon) & \text{otherwise};
	\end{cases}
\]
where $\guard_{|\play|}$ is the guard of the transition $\trans_{|\play|}$, and, for all
valuations $\val$ and guards $\guard$, we define
$\delay_{\min}(\val, \guard) = \inf \{\delay \mid \val + \delay \in \closureg{\guard}\}$.

We start by proving that this family of strategies allow \MinPl to play as long as it is possible against
a given robust strategy of \MaxPl.
In particular, we prove that, given a robust strategy of \MaxPl, it can mimic its behaviour
against an asap strategy of \MinPl, whatever \MinPl does.

\begin{lemma}
	\label{lem:deadlock-asap-prop}
	Let $\robminstrategy[\varepsilon]_\ppath$ be an asap
        strategies for \MinPl for the path $\ppath$, and
        $\robmaxstrategy \in \rStratMax[\game_\ppath]$.  Then, there
        exists
        $\robmaxstrategy[\varepsilon] \in
        \mathsf{Strat}^{\perturbation + \varepsilon}_{\game_\ppath,
          \MaxPl}$, such that for every robust strategy
        $\robminstrategy \in \rStratMin[\game_\ppath]$ of \MinPl we
        have
        $|\rPlay(\stateI, \robminstrategy,
        \robmaxstrategy[\varepsilon])| \leq |\rPlay(\stateI,
        \robminstrategy[\varepsilon]_{\ppath}, \robmaxstrategy)|$.
\end{lemma}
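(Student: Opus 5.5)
We construct $\robmaxstrategy[\varepsilon]$ as a simulation of $\robmaxstrategy$. The idea is that the extra perturbation budget $\varepsilon$ lets \MaxPl, facing an arbitrary robust strategy $\robminstrategy$ of \MinPl, always reach a valuation at least as large as the one that $\robmaxstrategy$ reaches against the asap strategy. On the path $\ppath$ this suffices, since larger valuations can only make \MinPl's future guards harder to satisfy. The path is fixed, so the only freedom is in delays. The core is an invariant on pairs of plays of equal length: the ``real'' play $\play$, conforming to $\robminstrategy$ and $\robmaxstrategy[\varepsilon]$, and a ``shadow'' play $\play^\sharp$, conforming to $\robminstrategy[\varepsilon]_\ppath$ and $\robmaxstrategy$. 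The invariant requires that both are at the same position of $\ppath$, and that the last valuation $\val$ of $\play$ and $\val^\sharp$ of $\play^\sharp$ satisfy $\val \geq \val^\sharp$, except right after a reset, where both are $0$. The strategy $\robmaxstrategy[\varepsilon]$ keeps track of $\play^\sharp$ in its memory.

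We define the strategy step by step. At a \MaxPl location, $\robmaxstrategy$ chooses $(\trans_k,\delay^\sharp)$ in the shadow, reaching $\val^\sharp+\delay^\sharp\in\guard_k$. If $\val\le \val^\sharp+\delay^\sharp$, the real strategy plays delay $\val^\sharp+\delay^\sharp-\val$ and reaches the same valuation. Otherwise it plays delay $0$; the valuation $\val$ stays larger, but we must check $\val\in\guard_k$. Here we use the path structure. Since $\val^\sharp+\delay^\sharp \in\guard_k$, we need $\val \le \rightg{\guard_k}$. If this fails, the real play is already stuck at this step, which only shortens it, so the inequality of lengths holds trivially. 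Either way the invariant is kept or the real play stops.

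At a \MinPl location, $\robminstrategy$ proposes $(\trans_k,\delay)$ with $[\val+\delay,\val+\delay+\perturbation]\subseteq \guard_k$. In the shadow, the asap strategy proposes $\delay^\sharp=\delay_{\min}(\val^\sharp,\guard_k)$ (possibly $+\varepsilon$), and $\robmaxstrategy$ answers with a perturbation $e^\sharp\in[0,\perturbation]$. The shadow move is legal because $\val+\delay$ lies in $\guard_k$ and $\val^\sharp\le\val$, so $\val^\sharp+\delay^\sharp\le \val+\delay+\varepsilon$ by minimality of $\delay_{\min}$. In the real play, \MaxPl picks $e\in[0,\perturbation+\varepsilon]$ so that the reached valuation $\val+\delay+e$ is at least $\val^\sharp+\delay^\sharp+e^\sharp$. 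The point is that $\val^\sharp+\delay^\sharp+e^\sharp \le \val+\delay+\varepsilon+\perturbation$, so such an $e$ exists. Taking $e$ as the maximum of $0$ and the difference keeps $e$ within $[0,\perturbation+\varepsilon]$ and preserves the invariant. Resets reset both plays to $0$.

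The main obstacle is the length comparison. We must show that whenever the shadow play can still continue, either the real play can also continue or it has stopped earlier. Equivalently, if the real play is at a \MinPl state with some enabled edge, the shadow one is too; this follows from $\val^\sharp\le\val$, since an edge enabled from $\val$ is enabled from any smaller valuation once one allows a longer delay. One subtlety needs care: the perturbation budget differs between the games, $\perturbation+\varepsilon$ in the real play versus $\perturbation$ in the shadow. This only weakens the real \MinPl's options, which goes in the right direction. Another subtlety is the case where the asap strategy adds $+\varepsilon$ for left-open guards, which is exactly what the extra $\varepsilon$ in \MaxPl's budget absorbs. We conclude by induction on the number of steps that $|\rPlay(\stateI,\robminstrategy,\robmaxstrategy[\varepsilon])|\le|\rPlay(\stateI,\robminstrategy[\varepsilon]_\ppath,\robmaxstrategy)|$.
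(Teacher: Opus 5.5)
Your route is the paper's. You fix the shadow play $\rho^\sharp$ of the asap strategy against $\tau^p$, let the real \MaxPl top up delays and perturbations so that the real valuation dominates the shadow one, and compare lengths via monotonicity of enabled edges in the valuation. Your handling of a \MaxPl configuration whose valuation has already left the guard is more careful than the paper's appeal to feasibility.

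\textbf{The gap.} The weak step is the claim that the shadow asap move is legal. The length comparison at \MinPl states rests entirely on it: it is not enough that \emph{some} edge is enabled from $\nu^\sharp$; the asap move itself must be. Legality in the $p$-robust semantics means $[\nu^\sharp+t^\sharp,\nu^\sharp+t^\sharp+p]\subseteq I_k$. Your bound $\nu^\sharp+t^\sharp\le\nu+t+\varepsilon$ only puts the right end below $\nu+t+\varepsilon+p$. Under the hypothesis you state for the real \MinPl, $[\nu+t,\nu+t+p]\subseteq I_k$, that point need not lie in $I_k$. Take a single \MinPl transition with $I_k=(1,2]$, $\nu=\nu^\sharp=0$, $p=9/10$, $\varepsilon=1/5$. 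The real \MinPl may play $t=21/20$, with window $[21/20,39/20]\subseteq I_k$. The asap strategy plays $t^\sharp=6/5$, with window $[6/5,21/10]\not\subseteq I_k$. So the shadow play stops while the real one continues, and the inequality fails.

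\textbf{The repair.} The real play must live entirely in the $(p+\varepsilon)$-robust semantics; this is how the lemma is used in the proof of Lemma~\ref{lem:deadlock_asap}, with $p'=p+\varepsilon$. Then the real \MinPl satisfies $[\nu+t,\nu+t+p+\varepsilon]\subseteq I_k$, so the right end of the shadow window is at most $\nu+t+p+\varepsilon\in I_k$. Its left end is in $I_k$ in the first case of the asap strategy. In the second case, when the real \MinPl can move at all, it equals $\mathsf{left}(I_k)+\varepsilon>\mathsf{left}(I_k)$ with $I_k$ left-open. Convexity of $I_k$ then gives legality. So the larger budget on the \MinPl side is not merely harmless, as your ``goes in the right direction'' suggests. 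It is exactly what makes the $+\varepsilon$ asap move fit, mirroring the extra $\varepsilon$ that lets \MaxPl absorb the shift. The paper's proof passes over the same point when it invokes ``the definition of asap strategy'' in its contradiction argument.

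\textbf{A smaller slip.} Your sentence ``whenever the shadow play can still continue, either the real play can also continue or it has stopped earlier'' has the direction reversed. What is needed, and what your next sentence correctly says, is that whenever the real play can continue, so can the shadow. At \MaxPl configurations this follows at once from $\nu^\sharp\le\nu$, and virtual states never block.
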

\begin{proof}
  We define a new robust strategy for \MaxPl,
  $\robmaxstrategy[\varepsilon]$, that will mimic, against all
  possible behaviours of \MinPl, the strategy $\robmaxstrategy$
  against $\robminstrategy[\varepsilon]_\ppath$.  In particular,
  $\robmaxstrategy[\varepsilon]$ always aims at reaching the next
  valuation (when it is possible) obtained by $\robmaxstrategy$
  against $\robminstrategy[\varepsilon]_\ppath$.  Thus, when a move of
  \MinPl does not reach the same next valuation, \MaxPl chooses a
  delay to compensate the deviation of \MinPl.  Otherwise, \MaxPl
  chooses $0$ as delay.

  Formally, given a robust play $\play$ ending in a state of \MaxPl,
  we let $\val$ (resp. $\tilde{\val}$) be the last
  valuation\footnote{We use here the convention that if $\play$ ends
    in a virtual state, say $(\loc, \val', \trans, \delay)$, the last
    valuation of $\play$ is $\val' + \delay$.} of $\play$ (resp. the
  $(|\play| + 1)$-th valuation of the play conforming to
  $\robminstrategy[\varepsilon]_\ppath$ and $\robmaxstrategy$ starting
  in the same real state than $\play$ when its exists), and we define
  $\robmaxstrategy[\varepsilon]$ such that
	\begin{displaymath}
		\robmaxstrategy[\varepsilon](\play) =
		\begin{cases}
			(\trans_{|\play|}, 0) & \text{if } \play \xrightarrow{\robmaxstrategy(\play)}
			(\loc_{|\play| + 1}, \val')
			\text{ with } \val' \geq \tilde{\val}\\
			(\trans_{|\play|}, \tilde{\val}_ - \val) & \text{otherwise}
		\end{cases}
	\end{displaymath}
	First, this strategy is well-defined.  Indeed, since $\pi$ is
        feasible, the move $(\trans_{|\play|}, 0)$ is always enabled.
        Moreover, a difference appears between valuations of both
        plays when \MinPl plays delays that differ from its asap
        strategy.  More precisely, this difference happens when the
        strategy followed by \MinPl approximates the delay with an
        error less than $\varepsilon$ when the guard of the transition
        is open.  Since, $\robmaxstrategy$ does not use a perturbation
        greater that $\perturbation$, this shift can be compensated
        into the perturbation played by \MaxPl.

	\medskip

	Now, we prove by induction on the length of
        $\rPlay(\stateI, \robminstrategy,
        \robmaxstrategy[\varepsilon])$ that, the valuation of real
        states of
        $\rPlay(\stateI, \robminstrategy,
        \robmaxstrategy[\varepsilon])$ is greater than the
        corresponding one in
        $\rPlay(\stateI, \robminstrategy[\varepsilon]_\ppath,
        \robmaxstrategy)$.  More precisely, by letting $\val_n$
        (resp. $\tilde{\val}_n$) the $n$-th valuation of
        $\rPlay(\stateI, \robminstrategy[\varepsilon]_{\ppath},
        \robmaxstrategy)$ (resp. of
        $\rPlay(\stateI, \robminstrategy,
        \robmaxstrategy[\varepsilon])$), we prove by induction on
        $n \in \N$ that $\val_n \leq \tilde{\val}_n$.

        \noindent If $n = 0$, then $\val_0 = 0 = \tilde{\val}_0$, so
        that the base case holds.

        \noindent Otherwise, let $n \in \N$ such that
        $\val_n \leq \tilde{\val}_n$.  We denote by $\loc_n$ the
        $n$-th location of both plays, that follow the same path.  The
        interesting case is when transition $\trans_n$, with source
        $\loc_n$, does not contain a reset.  Otherwise,
        $\val_{n+1} = 0 = \tilde{\val}_{n+1}$ and the property
        trivially holds.  If $\trans_n$ does not reset $x$, we
        distinguish two cases:
	\begin{itemize}
        \item If $\loc_n \in \LocsMax$, then as $\ppath$ is feasible,
          in both configurations, there exists an enabled move.
          Moreover, by definition of $\robmaxstrategy[\varepsilon]$,
          we know that
          $\tilde{\val}_{n+1} = \max(\val_{n+1}, \tilde{\val}_n)$ and
          $\val_{n+1} \leq \tilde{\val}_{n+1}$.

        \item Otherwise, $\loc_n \in \LocsMin$, and we again
          distinguish two subcases.  If the delay played by
          $\robminstrategy$ is smaller than the one played by
          $\robminstrategy[\varepsilon]_\ppath$, then
          $\robmaxstrategy[\varepsilon]$ compensates the difference
          that is at most $\varepsilon$ (by definition of
          $\robminstrategy[\varepsilon]_\ppath$) ensuring
          $\val_{n+1} = \tilde{\val}_{n+1}$. Else, the delay played by
          $\robminstrategy$ is greater than the one played by
          $\robminstrategy[\varepsilon]_\ppath$, thus
          $\val_{n+1} \leq \tilde{\val}_{n+1}$.
	\end{itemize}

	\medskip

	Finally, to prove that
        $|\rPlay(\stateI, \robminstrategy,
        \robmaxstrategy[\varepsilon])| \leq |\rPlay(\stateI,
        \robminstrategy[\varepsilon]_{\ppath}, \robmaxstrategy)|$, we
        reason by contradiction.  Suppose that
        $|\rPlay(\stateI, \robminstrategy[\varepsilon]_{\ppath},
        \robmaxstrategy)| < |\rPlay(\stateI, \robminstrategy,
        \robmaxstrategy[\varepsilon])|$, and the last
        location\footnote{Formally, this last location can be the same
          as the one of $\ppath$, but is not its last occurrence.} of
        $\rPlay(\stateI, \robminstrategy[\varepsilon]_{\ppath},
        \robmaxstrategy)$ is $\loc_k$ with $k < |\ppath|$.  Again, we
        distinguish two cases:
	\begin{itemize}
        \item If $\loc_k \in \LocsMin$, then we let
          $\last(\rPlay(\stateI,
          \robminstrategy[\varepsilon]_{\ppath}, \robmaxstrategy)) =
          (\loc_k, \val)$ and $(\loc_k, \tilde{\val})$ be the $k$-th
          valuation of
          $\rPlay(\stateI, \robminstrategy,
          \robmaxstrategy[\varepsilon]_{\ppath})$.  Moreover, we know
          that $\robminstrategy[\varepsilon]_{\ppath}$ is not defined
          in
          $\rPlay(\stateI, \robminstrategy[\varepsilon]_{\ppath},
          \robmaxstrategy)$, but $\robminstrategy$ is well-defined
          (with $(\trans_{k+1}, \delay)$) in the prefix of length $k$
          of
          $\rPlay(\stateI, \robminstrategy,
          \robmaxstrategy[\varepsilon]_{\ppath})$.  Thus, we deduce
          that $(\trans, \delay + \tilde{\val} - \val)$ is an enabled
          edge (since $\val \leq \tilde{\val}$ by the previous
          remark). This contradicts the definition of asap strategy.

        \item Otherwise, $\loc_k \in \LocsMax$, and we let $\states$
          (resp. $\tilde{\states}$) be the last state of
          $\rPlay(\stateI, \robminstrategy[\varepsilon]_{\ppath},
          \robmaxstrategy)$ (resp.
          $\rPlay(\stateI, \robminstrategy,
          \robmaxstrategy[\varepsilon]_{\ppath})$).  In particular,
          since $\robmaxstrategy$ is not defined on
          $\rPlay(\stateI, \robminstrategy[\varepsilon]_{\ppath},
          \robmaxstrategy)$, we know that
          $E^\perturbation(\states) = \emptyset$.  However, since
          $\rPlay(\stateI, \robminstrategy,
          \robmaxstrategy[\varepsilon]_{\ppath})$ is defined after the
          $k$-th real state,
          $E^\perturbation(\tilde{\states}) \neq \emptyset$.  By
          letting $\val$ (resp. $\tilde{\val}$) the
          valuation\footnote{If $\states$ is a virtual state,
            e.g. $((\loc, \val), \trans, \delay)$, its last valuation
            is $\val + \delay$.} of $\states$
          (resp. $\tilde{\states}$), the previous remark ensures that
          $\val \leq \tilde{\val}$ and
          $E^\perturbation(\states) \subseteq
          E^\perturbation(\tilde{\states})$.  Here also we reached a
          contradiction.  \qedhere
	\end{itemize}
\end{proof}

Now, we can prove~\cref{lem:deadlock_asap}.

\asap*
\begin{proof}
  By definition of an r-unfeasible path, it suffices to prove the
  bottom to top implication.  We suppose the existence of
  $\perturbation > 0$, $\robmaxstrategy \in \rStratMax$  a robust
  strategy for $\MaxPl$, and $\varepsilon >0 $ such that for every
  $\widehat{\varepsilon}<\varepsilon$,
  $|\rPlay(\stateI,\robminstrategy[\widehat{\varepsilon}]_\ppath,
  \robmaxstrategy) | < |\ppath|$, and we prove that $\ppath$ is
  r-unfeasible.

  We consider $\perturbation' = \perturbation + \varepsilon$, and
  $\robmaxstrategy[\varepsilon]$ the robust strategy of \MaxPl given
  by~\cref{lem:deadlock-asap-prop} applied to
  $(\robminstrategy[\varepsilon]_{\ppath})_\varepsilon$ and
  $\robmaxstrategy$.  Let $\robminstrategy$ be a robust strategy of
  \MinPl, and $\widehat{\varepsilon}<\varepsilon$. W have:
	\begin{align*}
		|\rPlay(\stateI, \robminstrategy, \robmaxstrategy[\varepsilon])| &\leq
		|\rPlay(\stateI, \robminstrategy[\widehat{\varepsilon}]_{\ppath}, \robmaxstrategy)|
		& \text{(by~\cref{lem:deadlock-asap-prop})}
		\\ &< |\ppath| &\text{(by hypothesis)}
	\end{align*}
	Therefore, $\ppath$ is an r-unfeasible path.
\end{proof}

%% file: app_blocage.tex
\section[Proof of Proposition 9]{Proof of~\cref{prop:blocage}}
\label{app:blocage}

\blocage*

We prove the two implications, starting with the easiest one.

\medskip
\noindent
\textbf{($\Leftarrow$) A path following one of the patterns is
  r-unfeasible.}  We suppose that $\ppath$ can be decomposed as
$\ppath_0 \xrightarrow{\trans_0} \ppath_1 \xrightarrow{\trans_1}
\ppath_2$ such that $\ppath_1 \in \Paths^{Y=\emptyset}_{\MinPl}$, and
$\guard_i$ denote the guard of $\trans_i$ for $i \in \{0, 1\}$.  We
distinguish two cases according the pattern followed by $\ppath$.
\begin{description}
\item[type 1] We suppose that $\ppath_0 \in \Paths_{\MinPl}$ and
  $\leftg{\guard_0} = \rightg{\guard_1}$.  Let $\perturbation > 0$ and
  $\robmaxstrategy \in \rStratMax[\game_\ppath]$ be a robust strategy
  of \MaxPl that perturbs the choice of $\MinPl$ by $\perturbation$
  after $\trans_0$. Formally, for every play $\play$ such that
  $\last(\play) = (\states, \trans_0, \delay)$, we have
  $\robmaxstrategy(\play) = \perturbation$.  In particular, by
  definition of $\trans_0$, after the perturbation applied by
  $\robmaxstrategy$, we have
  $\play \xrightarrow{\robmaxstrategy(\play)} (\loc, \val)$ with
  $\val \geq \leftg{\guard_0} + \perturbation$.

  Now, we consider a robust strategy of \MinPl,
  $\robminstrategy \in \rStratMin[\game_\ppath]$, and we prove that
  $\ppath$ is r-unfeasible by showing that
  $|\rPlay(\stateI, \robminstrategy, \robmaxstrategy)| < |\ppath|$.
  By contradiction, we suppose that
  $|\rPlay(\stateI, \robminstrategy, \robmaxstrategy)| \geq |\ppath|$,
  i.e. $\robminstrategy$ can find a robust decision for $\trans_1$.
  Let $\play$ be the prefix of
  $\rPlay(\stateI, \robminstrategy, \robmaxstrategy)$ ending in the
  last location of $\ppath_1$ and $\val'$ be its last valuation.
  Since $\ppath_1$ contains no reset (by hypothesis), we have
  $\val' \geq \leftg{\guard_0} + \perturbation = \rightg{\guard_1} +
  \perturbation > \rightg{\guard_1}$.  Thus,
  $E^\perturbation(\last(\play)) = \emptyset$ and \MinPl has no robust
  decision for $\trans_1$. We reached a contradiction.

\item[type 2] We suppose that $\ppath_0 \in \Paths_{\MaxPl}$ and
  $\rightg{\guard_0} = \rightg{\guard_1}$.  Let $\perturbation > 0$,
  and $\robmaxstrategy \in \rStratMax[\game_\ppath]$ be a strategy
  that chooses to wait the valuation
  $\rightg{\guard_0} - \perturbation$ along $\trans_0$. Formally, for
  every play $\play$ following $\ppath_0$, if $\val$ denotes its last
  valuation,
  $\robmaxstrategy(\play) = (\trans_0, \max(\rightg{\guard_0} -
  \perturbation - \val, 0))$.  In particular, after applying
  $\robmaxstrategy$ on a play $\play$ following
  $\ppath_0 \xrightarrow{\trans_0}$, we have
  $\play \xrightarrow{\robmaxstrategy(\play)} (\loc, \val)$ with
  $\val \geq \rightg{\guard_0} - \perturbation$.

  Now, we consider a robust strategy of \MinPl,
  $\robminstrategy \in \rStratMin[\game_\ppath]$, and we prove that
  $\ppath$ is r-unfeasible by showing that
  $|\rPlay(\stateI, \robminstrategy, \robmaxstrategy)| < |\ppath|$.
  By contradiction, we suppose that
  $|\rPlay(\stateI, \robminstrategy, \robmaxstrategy)| \geq |\ppath|$,
  i.e. $\robminstrategy$ can find a robust decision for $\trans_1$.
  Let $\play$ be the prefix of
  $\rPlay(\stateI, \robminstrategy, \robmaxstrategy)$ ending in the
  last location of $\ppath_1$ and $\val'$ be its last valuation.
  Since $\ppath_1$ contains no reset (by hypothesis), we have
  $\val' \geq \rightg{\guard_0} - \perturbation = \rightg{\guard_1} -
  \perturbation$.  Thus, $E^\perturbation(\last(\play)) = \emptyset$
  and \MinPl has no robust decision for $\trans_1$. We reached a
  contradiction.
\end{description}

\medskip
\noindent
\textbf{($\Rightarrow$) r-unfeasible paths always follow one of the
  patterns.}  The converse implication relies
on~\cref{lem:deadlock_asap}.  We suppose that $\ppath$ is an
r-unfeasible yet feasible path.  Then, by letting
$(\robminstrategy[\varepsilon]_\ppath)_\varepsilon$ be a familly of
asap strategies, there exists $\perturbation > 0$, $\varepsilon > 0$,
and $\robmaxstrategy \in \rStratMax[\game_\ppath]$ such that for all
$0 < \varepsilon' \leq \varepsilon$, and
$|\rPlay(\stateI,\robminstrategy[\varepsilon']_{\ppath},
\robmaxstrategy)| < |\ppath|$, by~\cref{lem:deadlock_asap}.

If $\states$ denotes the last state of
$\rPlay(\stateI,\robminstrategy[\varepsilon']_{\ppath},
\robmaxstrategy)$, and $\guard$ the guard of the last transition of
$\rPlay(\stateI,\robminstrategy[\varepsilon']_{\ppath},
\robmaxstrategy)$, we prove that $\ppath$ follows one of the two
patterns.  Since $\ppath$ is feasible, \MaxPl can always choose a next
move whatever the current configuration. We deduce that
$\states \in \StatesMin$ and we write $\states = (\loc, \val)$.
Moreover, by definition of asap strategies, we know that
$E(\loc, \val) = \emptyset$ and
$\val > \rightg{\guard} - \perturbation$ (with equality when
$\rightg{\guard} \notin \guard$) since $\ppath$ is feasible.  The
valuation $\val$ is reached by one of the players in a previous move
either by choice (for \MaxPl), or by force (for \MinPl).  In
particular, we distinguish two cases.

\noindent First suppose that \MinPl chooses to reach this valuation
since there exists, after the last reset along $\ppath$, a transition
from a location of \MinPl such that its guard $\tilde{\guard}$ is such
that $\rightg{\guard} \leq \leftg{\tilde{\guard}}$.  In particular,
$\ppath$ can be decomposed as
$\ppath_0 \xrightarrow{\tilde{\trans}} \ppath_1 \xrightarrow{\trans}
\ppath_2$ where $\ppath_0 \in \PPathsMin$ and
$\ppath_1 \in \PPathsMin^{Y=\emptyset}$.  To conclude, we prove that
$\leftg{\tilde{\guard}} = \rightg{\guard}$ since $\ppath$ is
r-unfeasible.  By contradiction, we suppose that
$\rightg{\guard} < \leftg{\tilde{\guard}}$.  In particular, for every
(non-robust) strategies of \MinPl and \MaxPl, denoted respectively by
$\minstrategy$ and $\maxstrategy$, we have
$\last(\Play((\loc_0,0), \minstrategy_{\ppath_0},
\maxstrategy_{\ppath_0})) = (\loc,\val)$ with
$\val \geq \leftg{\tilde{\guard}} > \rightg{\guard}$.  Morevoer, since
$\ppath_1$ contains no reset, all plays conforming to these two
strategies end $\ppath_1$ with a valuation greater than
$\rightg{\guard}$.  Thus, $\ppath$ is unfeasible, or
$\rightg{\guard} = \leftg{\tilde{\guard}}$. This corresponds to a type
1 pattern.

Now suppose that \MaxPl chooses to reach this valuation, i.e. there
exists, after the last reset along $\ppath$, a transition
$\tilde{\trans}$ from a location $\tilde{\loc}$ of \MaxPl such that
$\rightg{\guard} \leq \rightg{\tilde{\guard}}$ where $\tilde{\guard}$
is the guard of $\trans$.  In particular, $\ppath$ can be decomposed as
$\ppath_0 \xrightarrow{\tilde{\trans}} \ppath_1 \xrightarrow{\trans}
\ppath_2$ where $\ppath_0 \in \PPathsMax$ and
$\ppath_1 \in \PPathsMin^{Y=\emptyset}$.  To conclude, we prove that
$\rightg{\tilde{\guard}} = \rightg{\guard}$ since $\ppath$ is
r-unfeasible.  By contradiction, we suppose that
$\rightg{\guard} < \rightg{\tilde{\guard}}$.  In particular, there
exists $\maxstrategy$ a (non-robust) strategy for \MaxPl such that for
all plays $\play$ with $\last(\play) = (\tilde{\loc}, \val)$, we
have
\begin{displaymath}
	\maxstrategy(\play)= (\tilde{\trans}, \delay)
	\qquad\qquad \text{where $\delay$ is such that
		$\delay + \val \in (\rightg{\guard}, \rightg{\tilde{\guard}}] \cap \tilde{\guard}$}
\end{displaymath}
By hypothesis on bounds of guards, this delay and $\maxstrategy$
exist.  Moreover, since $\ppath_1$ contains no reset, all plays
conforming to $\maxstrategy$ end $\ppath_1$ with a valuation greater
than $\rightg{\guard}$.  Thus, $\ppath$ is unfeasible, or
$\rightg{\tilde{\guard}} = \rightg{\guard}$. This corresponds to a type 2 pattern. \qed

%% file: app_copy.tex
\section[Proof of Lemma 11]{Proof of~\cref{lem:copy}}
\label{app:copy}

\copyProp*

\medskip
\noindent
\textbf{Proof of~\cref{itm:copy_caracterisation}} We reason by
contradiction, and we suppose that there exists $\ppath$ a feasible
and r-unfeasible path of type 1 in $\C$. We let
$\ppath = \pi_0 \xrightarrow{d^0} \pi_1 \xrightarrow{d^1}
\pi_2$ be such a path, with $\pi_1 \in \Paths^{Y=\emptyset}_{\MinPl}$,
$\pi_0 \in \Paths_{\MinPl}$, and
$\leftg{\guard_0} = \rightg{\guard_1}$ where $\guard_i$ is the guard
of $d^i$ for $i \in \{0, 1\}$.

Since in $\game$ --and hence\footnote{In $\C$, the guard of a
  transition from a location of \MinPl may only be reduced when a new
  copy is reached, whose index is smaller than the supremum of the
  original guard in $\game$. As a consequence, the reduced guard
  contains an interval of length at least $1$, and is thus not
  punctual.} in $\C$-- there are no punctual guards from locations of
\MinPl, and $\pi_1$ contains no reset, we deduce that
$a :=\leftg{\guard_0} = \rightg{\guard_1} > 0$. By~\cref{def:copy},
$d^0$ in $\C$ stems from a transition $\trans^0$ in $\game$ and
reaches the $a$-th copy: $d^0 = \trans^0_{a'\to a}$, for some
$a' \in \N$.  Moreover, since $\ppath_1$ does not contain any reset,
$d^1$ is a transition in the $b$-th copy, for some $b \geq a$, i.e.
$d^1 = \trans^1_{b \to b'}$ for some transition $\trans^1$ in $\game$
and some $b' \in \N$.  Again by~\cref{def:copy}, it must be that
$\rightg{\guard_1} > b \geq a = \leftg{\guard_0}$. We reached a
contradiction with the assumption that
$\leftg{\guard_0} = \rightg{\guard_1}$

\medskip
\noindent
\textbf{Proof of~\cref{itm:copy_size}}
We evaluate the number of  locations, edges and the constants of $\C$.

By definition of the copy game,
$\Locs^\C = \Locs \times \llbracket 0, M \rrbracket$, thus
$|\Locs^\C| = |\Locs| \cdot (|\clockbound| +1)$.  Since constants are
encoded in binary, $|\Locs^\C|$ is exponential in the size of $\game$.

Again by~\cref{def:copy},
$\Trans^\C = \bigcup_{a \in \llbracket 0, \maxBound\rrbracket}
\cup_{\trans \in \Trans} \Trans^{\C}_a(\trans)$, thus
$|\Trans^\C| = \sum_{a \in \llbracket 0, \maxBound\rrbracket}
\sum_{\trans \in \Trans} |\Trans^{\C}_a(\trans)|$.  Since,
$\Trans^{\C}_a(\trans)$ contains at most $\clockbound + 1$ transitions
(this happens when $\trans$ is a transition from a location of \MaxPl
that contains no reset), we deduce that
$|\Trans^\C| \leq \sum_{a \in \llbracket 0, \maxBound\rrbracket}
\sum_{\trans \in \Trans} (\clockbound + 1) = \clockbound \cdot
|\Trans| \cdot (\clockbound + 1)$.  Thus, $|\Trans^\C|$ is exponential
in the size of $\game$.

Finally, the maximal constant in $\C$ coincides with the maximal
constant of $\game$, as weights and upper bounds on
  guards are not modified in $\C$.

All in all, the size of $\C$ is exponential on the size of $\game$.

\medskip
\noindent
\textbf{Proof of~\cref{itm:copy_end-val}} The proof is by induction on
the length of a play~$\play$.  The base case is immediate as the
initial copy is copy~$0$.  Suppose now that
$\play = \play_1 \xrightarrow{d, \delay} \states$, and, letting
$\val_1$ be the last valuation\footnote{If $\play_1$ ends in a
  virtual state $((\loc, \val), d, \delay)$,
  $\val_1 = \val + \delay$.} of $\play_1$, we distinguish cases
according to the transition $d$ of $\C$.
\begin{itemize}
\item We suppose that $a = 0$ and $d = \trans_{c \to 0}$ for a given
  transition $\trans$ of $\game$ and for some $c \in \N$, i.e. $c = 0$
  or $\trans$ --and $d$-- contain a reset.  Then,
  $\states = (\tuple{\loc, 0}, \val)$ for $\loc$ a location of $\game$
  and $\val \in \Rplus$ a valuation, and $\val \geq 0 = a$.

\item We suppose that $a \neq 0$ and $d = \trans_{c \to a}$ for a
  given $\trans$ and a given $c \in \N$ such that
  $c = \max(a, \leftg{\guard})$, i.e. $\trans$ and $d$ have no reset
  and if $c \neq a$, then the source both of $\trans$ and $d$ is a location of 
  \MinPl.  Then $\states = (\tuple{\loc, a}, \val)$ for
  $\loc$ a location of $\game$ and $\val$ a valuation of $\Rplus$ such
  that $\val = \val_1 + \delay \geq \val_1$ (since $d$ has no reset,
  and $\delay \geq 0$).  To conclude, we apply the induction
  hypothesis to $\play_1$, and obtain that $\val_1 \geq c \geq a$.
	
\item We suppose that $a \neq 0$ and $d = \trans^j_{c \to a}$ for some
  $\trans$ and a given $c \in \N$, i.e. $d$ is a jumping transition
  without reset.  Then, by~\cref{def:copy},
  $\states = (\tuple{\loc, a}, a)$, and $\val = a \geq a$.
\end{itemize}

This concludes the proof of~\cref{lem:copy}.~\qed

%% file: app_copy-rval.tex
\section[Proof of Proposition 13]{Proof of~\cref{prop:preservation_rval}}

\preservationRval*

As explained in the core of the paper, we prove the two inequalities
successively.

\subsection[Proof of the first inequality]{Proof of the first inequality
	$\rValue[0+]_\game(\loc, \val) \leq \rValue[0+]_\C(\tuple{\loc, 0}, \val)$}
\label{app_preservation-rval}

To prove this inequality it suffices to show that
\begin{equation}\label{eq:value_greater}
  \forall \perturbation > 0,\ 	\rValue_\game(\loc, \val)\leq \rValue_\C(\langle \loc,0 \rangle, \val)
\end{equation}
Indeed, taking the limit of each term of the inequality when
$\perturbation$ tends to $0$, yields the desired
inequality.  Thus, the proof consists in defining
$\robmaxstrategy_\C \in \rStratMax[\C]$ from a given strategy
$\robmaxstrategy_\game \in \rStratMax[\game]$ such that its robust
values is greater than the one of $\robmaxstrategy_\game$.
Intuitively, in $\C$, \MaxPl
can follow the strategy $\robmaxstrategy_\game$ without modifying any
delay, and more importantly without taking any jumping transition.  It
therefore suffices to consider plays in $\C$ without jumping
transitions, and one can naturally define a mapping from such
non-jumping plays in $\C$ to plays in $\game$. Building on this
mapping, $\robmaxstrategy_\C \in \rStratMax[\C]$ can be formally
defined so that it mimicks
$\robmaxstrategy_\game \in \rStratMax[\game]$ and the weights of plays
are preserved. In the rest of this subsection, we formalize the
mapping of plays from $\C$ to $\game$ and the definition of strategy
$\robmaxstrategy_\C \in \rStratMax[\C]$.

\medskip
\noindent
\textbf{Definition of $\convert_{\C \to \game}$.}  We start by the
natural definition of a "projection" function that maps robust plays
without jumping transitions from $\C$ to robust plays in $\game$.  We
let $\FPlays[\C]_{\neg j}$ be the set of plays in $\C$ without jumping
transitions, and we define
$\convert_{\C \to \game} \colon \rFPlays[\C]_{\neg j} \to
\rFPlays[\game]$ by induction on the length of plays in $\C$ by
\begin{displaymath}
	\convert_{\C \to \game}(\play) = 
	\begin{cases}
		(\loc, \val) & \text{if $\play = (\tuple{\loc, a}, \val)$ } \\
		\convert_{\C \to \game}(\play_1) \xrightarrow{\trans, \delay} (\loc, \val)
		& \text{if $\play = \play_1 \xrightarrow{\trans_{a \to b}, \delay} (\tuple{\loc, b}, \val)$} \\
		\convert_{\C \to \game}(\play_1) \xrightarrow{\trans, \delay} ((\loc, \val), \trans, \delay)
		& \text{if $\play = \play_1 \xrightarrow{\trans_{a \to b}, \delay}
		((\tuple{\loc, b}, \val), \trans, \delay)$}
	\end{cases}
\end{displaymath}
Since the upper bounds of guards in $\C$ coincide with those in
$\game$ for non-jumping transitions, all decisions along
$\convert_{\C \to \game}(\play)$ are robust, and
$\convert_{\C \to \game}(\play)$ is well-defined.  Moreover,
$\convert_{\C \to \game}$ enjoys the following interesting properties
on plays:
\begin{lemma}
	\label{lem:rVal_convert-C-G}
	Let $\play \in \rFPlays[\C]_{\neg j}$.
	Then,
	\begin{enumerate}
		\item\label{itm:rVal_convert-C-G_last}
		$\convert_{\C \to \game}$ preserves the last real state, i.e.
		if $\last(\play) = (\tuple{\loc, a}, \val)$, then
		$\last(\convert_{\C \to \game}(\play)) = (\loc, \val)$;
	    \item\label{itm:rVal_convert-C-G_weight}
		$\convert_{\C \to \game}$ preserves the weight, i.e.
		$\weight(\play) = \weight(\convert_{\C \to \game}(\play))$.
	\end{enumerate}
\end{lemma}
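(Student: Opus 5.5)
Both items follow by induction on the length of $\play \in \rFPlays[\C]_{\neg j}$, unfolding the inductive definition of $\convert_{\C \to \game}$. We prove a slightly stronger invariant that also covers virtual states: if $\play$ ends in a real state $(\tuple{\loc, a}, \val)$, then $\convert_{\C \to \game}(\play)$ ends in $(\loc, \val)$; and if $\play$ ends in a virtual state $((\tuple{\loc, a}, \val), \trans_{a\to b}, \delay)$, then $\convert_{\C \to \game}(\play)$ ends in $((\loc, \val), \trans, \delay)$. In both cases we also require equal cumulated weights. Item~\ref{itm:rVal_convert-C-G_last} is the real-state half of this invariant.

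\textbf{Base case.} A play of length $0$ is a single real state $(\tuple{\loc, a}, \val)$, which is mapped to $(\loc, \val)$. Both plays have cumulated weight $0$.

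\textbf{Inductive step.} Write $\play = \play_1 \xrightarrow{\trans_{a \to b}, \delay} s$ and split on the kind of the appended edge.
\begin{itemize}
\item \emph{Edge in $\EdgesMin$ or $\EdgesMax$.} The source is the last real state of $\play_1$, say $(\tuple{\loc_1, a}, \val_1)$. By induction, $\convert_{\C \to \game}(\play_1)$ ends in $(\loc_1, \val_1)$. The image edge is labelled $(\trans, \delay)$ with $\trans = (\loc_1, \guard, \reset, \loc')$, and the target is $(\loc', (\val_1 + \delay)[\reset \coloneqq 0])$ for \MaxPl, or the virtual state $((\loc_1, \val_1), \trans, \delay)$ for \MinPl. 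This is the image of the target of the $\C$-edge, because $\trans_{a\to b}$ has the same reset as $\trans$ and its destination is $\tuple{\loc', b}$.
\item \emph{Perturbation edge $(\trans_{a \to b}, \varepsilon)$ from a virtual state.} By induction, the image ends in the corresponding virtual state. In $\game$, the edge $(\trans, \varepsilon)$ then reaches $(\loc', (\val_1 + \delay + \varepsilon)[\reset \coloneqq 0])$, which matches the valuation reached in $\C$.
\end{itemize}
One implicit step is that the image edge actually exists in $\sem[\perturbation]{\game}$. Since $\trans_{a\to b}$ has guard $\guard \cap [a, +\infty) \subseteq \guard$, every valuation satisfying the $\C$-guard satisfies $\guard$. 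For \MinPl, $[\val_1+\delay, \val_1+\delay+\perturbation] \subseteq \guard \cap [a,+\infty)$ implies inclusion in $\guard$. This is the well-definedness remark stated just before the lemma.

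\textbf{Weights.} By \cref{def:copy}, $\wt^\C(\tuple{\loc, a}) = \weight(\loc)$ and $\wt^\C(\trans_{a\to b}) = \weight(\trans)$. Each step therefore adds the same amount to $\weightC$ in both plays. For a real or \MaxPl edge this is $\delay \cdot \weight(\loc_1) + \weight(\trans)$. For the virtual part, perturbation edges carry weight $0$ and the virtual state has the location weight of $\loc_1$ in both games, so the delay contributions coincide as well. Hence $\weightC(\play) = \weightC(\convert_{\C \to \game}(\play))$.

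\textbf{Maximal-play weight.} The target status also transfers: $\tuple{\loc, a}$ is a target location of $\C$ iff $\loc \in \LocsT$, since $\Locs^\C = \Locs \times \llbracket 0, \maxBound\rrbracket$ keeps the partition. So for plays ending in a target, $\weight$ coincides with $\weightC$ on both sides, and item~\ref{itm:rVal_convert-C-G_weight} follows. The only step needing care is the virtual-state bookkeeping: which state carries the delay contribution, and that perturbation edges have weight $0$.
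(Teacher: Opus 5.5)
Your proof is correct and follows the same route as the paper: induction on the length of the play, unfolding the definition of $\convert_{\C \to \game}$, using that $\wt^\C$ copies location and transition weights, and transferring target status. Your explicit invariant for virtual states and your check that the image edges exist (via $\guard \cap [a, +\infty) \subseteq \guard$) make precise what the paper leaves to the remark preceding the lemma.
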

\begin{proof}
	\begin{enumerate}
	    \item is immediate by definition of $\convert_{\C \to \game}$.
		\item We reason by induction on the length of $\play$, and we prove that
		$\weightC(\play) = \weightC(\convert_{\C \to \game}(\play))$.
		Indeed, by~\cref{itm:rVal_convert-C-G_last}, if $\play$ ends in a real state, then
		$\convert_{\C \to \game}(\play)$ ends in a real state belonging to the same player.
		Thus,
		\begin{displaymath}
			\weight(\convert_{\C \to \game}(\play)) =
			\begin{cases}
				+\infty & \text{if $\last(\play)$ is not a target state} \\
				\weightC(\play) & \text{otherwise}
			\end{cases}
		\end{displaymath}
		and this coincides with the definition of $\weight(\play)$.

		To conclude, we prove that
                $\weightC(\play) = \weightC(\convert_{\C \to
                  \game}(\play))$.  First, if
                $\play = (\tuple{\loc, a}, \val)$, then
                $\convert_{\C \to \game}(\play) = (\loc, \val)$ and
                $\weightC(\play) = 0 = \weightC(\convert_{\C \to
                  \game}(\play)))$.

		Then, we suppose that $\play = \play_1 \xrightarrow{\trans_{a \to b}, \delay}
		(\tuple{\loc, b}, \val)$ and, by definition, $\convert_{\C \to \game}(\play) =
		\convert_{\C \to \game}(\play_1) \xrightarrow{\trans, \delay} (\loc, \val)$.
		Thus, we have
		\begin{align*}
			\weightC(\convert_{\C \to \game}(\play)) &=
			\weightC(\convert_{\C \to \game}(\play_1)) + \weight(\trans) +
			\delay\cdot\weight(\last(\convert_{\C \to \game}(\play_1))) \\
			&= \weightC(\play_1) + \weight(\trans) +
			\delay\cdot\weight(\last(\convert_{\C \to \game}(\play_1))) \qquad
			\text{(by (IH))}\\
			&= \weightC(\play_1) + \weight(\trans) +
			\delay\cdot\weight(\last(\play_1)) \qquad
			\text{(by~\cref{itm:rVal_convert-C-G_last})}\\
			&= \weightC(\play_1) + \weight(\trans_{a \to b}) +
			\delay\cdot\weight(\last(\play_1)) \qquad
			\text{(by definition of $\C$)}\\
			&= \weightC(\play)
		\end{align*}
		as claimed.
		\qedhere
	\end{enumerate}
\end{proof}

\medskip
\noindent
\textbf{Definition of $\robmaxstrategy_{\C}$.} We are now in a
position to define formally $\robmaxstrategy_{\C}$. Let
$\robmaxstrategy_\game$ be a robust strategy of \MaxPl in $\game$.
Intuitively, $\robmaxstrategy_{\C}$ in $\C$ mimicks the choices of
$\robmaxstrategy_\game$ in $\game$ and does not use jumping
transitions.  Formally, for a play $\play \in \rFPlays[\C]_{\neg j}$
ending in a state of \MaxPl, letting
$(\trans, \delay) = \robmaxstrategy_{\game}(\convert_{\C \to
  \game}(\play))$, we define
\begin{displaymath}
	\robmaxstrategy_{\C}(\play) =
	\begin{cases}
		(\trans_{a \to a}, \delay) & \text{if $\last(\play)$ is in the $a$-th copy, and $\trans$ does not reset $x$}  \\
		(\trans_{a \to 0}, \delay) & \text{if $\last(\play)$ is in the $a$-th copy, and $\trans$ resets $x$}  \\
                \text{any enabled move} & \text{otherwise.}
	\end{cases}
\end{displaymath}
Observe that, by definition of $\C$, the guard is not modified from
$\trans$ to $\trans_{a \to b}$ for any $b \in \{a, 0\}$.  Thus, robust
choices of $\maxstrategy_\game$ are also robust in $\C$, and this new
strategy is well-defined.  Finally, we prove that $\robmaxstrategy_\C$
induces non-jumping plays only, and that it is coherent with the
above-defined mapping on plays:
\begin{lemma}
	\label{lem:rVal_strat-G-C}
	Let $\play \in \rFPlays[\C]$ be a play that conforms to $\robmaxstrategy_\C$.
	Then,
	\begin{enumerate}
	    \item\label{itm:rVal_strat-G-C_1} $\play \in \rFPlays[\C]_{\neg j}$ ;
            \item\label{itm:rVal_strat-G-C_2}
              $\convert_{\C \to \game}(\play)$ conforms to
              $\robmaxstrategy_\game$.
	\end{enumerate}
\end{lemma}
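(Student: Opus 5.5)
We prove both items together by induction on the length of $\play$ (counting edges of $\EdgesMax \cup \EdgesMin$, and also covering plays that end in a virtual state). The induction hypothesis is that every strict prefix $\play_1$ of $\play$ conforming to $\robmaxstrategy_\C$ lies in $\rFPlays[\C]_{\neg j}$ and satisfies that $\convert_{\C \to \game}(\play_1)$ conforms to $\robmaxstrategy_\game$. The base case $\play = (\tuple{\loc, a}, \val)$ is immediate. It contains no edge, so no jumping transition, and $\convert_{\C \to \game}(\play) = (\loc, \val)$ trivially conforms to any strategy.

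For the inductive step, write $\play = \play_1 \xrightarrow{e} \states$ and split on the owner of $\last(\play_1)$.

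\textbf{\MinPl real state.} The edge is $\trans_{a\to b}, \delay$ with $b \in \{\max(a,\leftg{\guard}), 0\}$. By \cref{def:copy}, jumping transitions only arise from \MaxPl locations, so $\play$ stays non-jumping. Conformance to $\robmaxstrategy_\game$ imposes nothing at \MinPl states, so item~\ref{itm:rVal_strat-G-C_2} follows from the induction hypothesis and the inductive definition of $\convert_{\C \to \game}$.

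\textbf{Virtual state.} The edge is a perturbation edge of $\EdgesRob$, which again is not jumping. Here I must check that the perturbation $\varepsilon$ chosen by $\robmaxstrategy_\C$ coincides with the one chosen by $\robmaxstrategy_\game$ on $\convert_{\C \to \game}(\play_1)$. This holds because $\robmaxstrategy_\C$ is defined by copying $\robmaxstrategy_\game \circ \convert_{\C \to \game}$. By the third clause of $\convert_{\C \to \game}$, virtual states are mapped to the corresponding virtual states, so the perturbation is transported unchanged. I will make this explicit by reading the definition of $\robmaxstrategy_\C$ as also covering plays ending in virtual states.

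\textbf{\MaxPl real state.} By the induction hypothesis $\play_1 \in \rFPlays[\C]_{\neg j}$, so $\convert_{\C \to \game}(\play_1)$ is defined. By \cref{lem:rVal_convert-C-G}(\ref{itm:rVal_convert-C-G_last}) it ends in $(\loc,\val)$ whenever $\play_1$ ends in $(\tuple{\loc,a},\val)$. Hence $\robmaxstrategy_\game(\convert_{\C \to \game}(\play_1)) = (\trans,\delay)$ is a legal move from $(\loc,\val)$. Since the guard $\guard$ of $\trans$ satisfies $\val+\delay \in \guard$, and $\val \geq a$ by \cref{lem:copy}(\ref{itm:copy_end-val}), we get $\val + \delay \in \guard \cap [a,+\infty)$ and $\rightg{\guard} \geq a$. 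Therefore $\trans_{a\to a}$ or $\trans_{a\to 0}$ exists in $\Trans^\C_a(\trans)$ and is enabled. So the first two cases of the definition of $\robmaxstrategy_\C$ apply, and the ``any enabled move'' fallback is never used on conforming plays. That edge is non-jumping, which gives item~\ref{itm:rVal_strat-G-C_1}. Its image under $\convert_{\C \to \game}$ is exactly $\xrightarrow{\trans,\delay}$, which is the move prescribed by $\robmaxstrategy_\game$; together with the induction hypothesis on the prefix, this gives item~\ref{itm:rVal_strat-G-C_2}.

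The main obstacle is precisely the \MaxPl real-state case: ruling out the fallback clause. This requires invoking \cref{lem:copy}(\ref{itm:copy_end-val}) to ensure the copied transition exists, and it requires the plays to start in copy~$0$. I will state that assumption explicitly, since the lemma is only applied from $(\tuple{\loc,0},\val)$. The rest is bookkeeping of the definitions.
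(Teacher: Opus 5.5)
Your proof follows the paper's route. It is an induction in which each \MaxPl move of $\robmaxstrategy_\C$ is a non-jumping copy $\trans_{a\to a}$ or $\trans_{a\to 0}$ of the move $(\trans,\delay)$ that $\robmaxstrategy_\game$ plays on $\convert_{\C\to\game}(\play_1)$, and $\convert_{\C\to\game}$ maps it back to that same move. You are more careful than the paper in two places. First, you justify that the copied move is enabled via $\val\ge a$ (item~3 of \cref{lem:copy}), which is why you restrict to plays starting in copy~$0$; the paper's remark that the guard is ``not modified'' overlooks the intersection with $[a,+\infty)$. Second, you treat perturbation edges explicitly. However, your central claim that the ``any enabled move'' fallback is never used on conforming plays is false.

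The failing step is ``Hence $\robmaxstrategy_\game(\convert_{\C\to\game}(\play_1))=(\trans,\delay)$ is a legal move''. It presupposes that $\robmaxstrategy_\game$ is defined there, i.e.\ that $(\loc,\val)$ is not a deadlock of $\game$. But $\C$ can have strictly fewer \MaxPl deadlocks than $\game$, because jumping transitions are built from $\overline{\guard}$ rather than from $\guard$. Take a \MaxPl location $\loc$ whose only outgoing transition $\trans$ leads to a target, has guard $[0,2)$, and no reset. Then $(\loc,2)$ is a deadlock of $\game$, so $\robmaxstrategy_\game$ is undefined there. In $\C$, however, the edge $\trans^j_{0\to 2}$ with guard $\{2\}$ is enabled from $(\tuple{\loc,0},2)$, and it is the only enabled edge. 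Since a \MaxPl strategy must be defined whenever an edge is enabled, $\robmaxstrategy_\C$ is forced to jump. A conforming play reaches this configuration from copy~$0$ and valuation~$0$, for instance via a preceding \MaxPl transition with guard $x\le 2$ on which $\robmaxstrategy_\game$ waits until $x=2$. So your copy-$0$ restriction does not save item~1: it is false as stated, and no rearrangement of the induction fixes it. Moreover, the $\C$-play then reaches the target with finite weight while its $\game$-image is stuck with weight $+\infty$, so the downstream weight comparison breaks too. The lemma needs either an extra hypothesis excluding \MaxPl configurations that are deadlocks in $\game$ but not in $\C$, or a modified construction of $\robmaxstrategy_\C$ together with a matching downstream argument. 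The paper's own proof has the same blind spot: it simply asserts that $\robmaxstrategy_\C(\play_1)$ never uses a jumping transition, without considering the fallback clause.
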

\begin{proof}
	\begin{enumerate}
        \item We reason by contradiction, and we suppose that $\play$
          contains at least one jumping transition.  Let
          $\trans^j_{a \to b}$ be the first jumping transition of
          $\play$, and let $\play_1$ be the prefix of $\play$ until
          $\trans^j_{a \to b}$, i.e.
          $\play_1 \in \rFPlays[\C]_{\neg j}$.  By definition of $\C$,
          $\trans^j_{a \to b}$ as for source a configuration of \MaxPl
          (jumping transitions only outgoing locations of \MaxPl).
          Thus, by assumption on $\play$, $\trans^j_{a \to b}$ was
          chosen by $\robmaxstrategy_\C$.  This contradicts the fact that 
          $\play_1 \in \rFPlays[\C]_{\neg j}$,
          $\robmaxstrategy_\C(\play_1)$ does not use jumping
          transitions.

        \item We reason by induction on the length of $\play$.  If
          $\play = (\tuple{\loc, a}, \val)$, then
          $\convert_{\C \to \game}(\play) = (\loc, \val)$ and it
          conforms to $\robmaxstrategy_\game$.  Otherwise, we suppose
          that
          $\play = \play_1 \xrightarrow{\trans_{a \to b}, \delay}
          \states$, and we distinguish two cases according to the last
          state of $\play_1$.
		\begin{itemize}
                \item If $\last(\play_1)$ is a configuration of
                  \MinPl, then the property holds by induction
                  hypothesis applied to $\play_1$ and
                  $\convert_{\C \to \game}(\play_1)$.

                \item Otherwise, $\last(\play_1)$ is a state of \MaxPl
                  and $\states = (\tuple{\loc, b}, \val)$.  Then , by
                  definition of $\convert_{\C \to \game}$, we deduce
                  that
                  $\convert_{\C \to \game}(\play) = \convert_{\C \to
                    \game}(\play_1) \xrightarrow{\trans, \delay}
                  (\loc, \val)$.  Moreover, since $\play$ conforms to
                  $\robmaxstrategy_\C$, $\last(\play_1)$ must be in
                  the $a$-th copy of $\C$ and
                  $\robmaxstrategy_\game(\convert_{\C \to
                    \game}(\play_1)) = (\trans, \delay)$.  Applying
                  the induction hypothesis to $\play_1$, we conclude
                  that $\convert_{\C \to \game}(\play)$ conforms to
                  $\robmaxstrategy_\game$. \qedhere
		\end{itemize}
	\end{enumerate}
\end{proof}

To conclude, we have shown that for every
$\robmaxstrategy \in \rStratMax[\game]$, we have
\begin{displaymath}
	\inf_{\robminstrategy \in \rStratMin[\game]} \weight(\outcomes((\loc, \val), \robminstrategy, \robmaxstrategy_\game))
	\leq  \inf_{\robminstrategy \in \rStratMin[\C]} \weight(\outcomes((\tuple{\loc, 0}, \val), \robminstrategy, \robmaxstrategy_\C))
	\leq \rValue[\perturbation]_\C(\tuple{\loc, 0}, \val)
\end{displaymath}
Since
this inequality holds for every
$\robmaxstrategy \in \rStratMax[\game]$, we deduce the claimed
inequality
$\rValue[0+]_\game(\loc, \val) \leq \rValue[0+]_\C(\tuple{\loc, 0}, \val)$.
\qed

\subsection[Proof of the second inequality]{Proof of the second inequality $\rValue[0+]_\game(\loc, \val) \geq \rValue[0+]_\C(\tuple{\loc, 0}, \val)$}
\label{app_preservation-rval-2}
Let us prove this second inequality by showing:
\begin{displaymath}
\forall	  \perturbation > 0,\ \forall   \varepsilon > 0,\ \rValue_\C(\tuple{\loc, 0}, \val)
	\leq \rValue_\game(\loc, \val) + \varepsilon.
\end{displaymath}
For a fixed perturbation $\perturbation$ and a fixed margin
$\varepsilon$, from a given strategy
$\robmaxstrategy_\C \in \rStratMax[\C]$ we define
$\robmaxstrategy_\game \in \rStratMax[\game]$ such that their $\perturbation$-robust
values differ by at most $\varepsilon$.

The difficulty in defining $\robmaxstrategy_\game$ is when
$\robmaxstrategy_\C$ chooses a jumping transition for which the guard
is not included in the one of the original transition of $\game$.  In
all other cases, guards of $\C$ are included in $\game$, and \MaxPl
can mimick the decisions from $\C$ in $\game$. In order to deal with
such jumping transitions, we slightly alter the delay chosen by
$\robmaxstrategy_\C$.  Before formally defining
$\robmaxstrategy_\game$, we first define a mapping from plays in
$\game$ to plays in $\C$ that conform to $\robmaxstrategy_\C$.

\medskip
\noindent
\textbf{Definition of $\convert^{\robmaxstrategy_\C}_{\game \to \C}$}
We define a function that maps a robust play in $\game$ to a play in
the copy game $\C$ while preserving decisions of $\robmaxstrategy_\C$.
Formally,
$\convert^{\robmaxstrategy_\C}_{\game \to \C} \colon \rFPlays[\game]
\to \rFPlays[\C]$ is defined by induction on the length of plays in
$\game$. Initially,
$\convert^{\robmaxstrategy_\C}_{\game \to \C}((\loc, \val)) =
(\tuple{\loc, 0}, \val)$.  Moreover, letting $\maxWeightLoc$ for the
maximal absolute value of weights in $\C$, for every
$\play \in \rFPlays[\game]$ of the form
$\play = \play_1 \xrightarrow{\trans, \delay} s$, we define
$\convert^{\robmaxstrategy_\C}_{\game \to \C}$ according to the last
state of $\play_1$: 
\begin{itemize}
\item if $\last(\play_1)$ is a state of \MaxPl, then
$s= (\loc, \val)$ is a configuration of \MinPl and we
  distinguish three cases according to the conformance of
  $\convert^{\robmaxstrategy_\C}_{\game \to \C}(\play_1)$ to
  $\robmaxstrategy_\C$.
	\begin{itemize}
        \item If
          $\convert^{\robmaxstrategy_\C}_{\game \to \C}(\play_1)$
          conforms to $\robmaxstrategy_\C$ and
          $\robmaxstrategy_\C(\convert^{\robmaxstrategy_\C}_{\game \to
            \C}(\play_1)) = (\trans_{a \to b}^{(j)}, \delay)$, then we
          let
		\begin{displaymath}
                  \convert^{\robmaxstrategy_\C}_{\game \to \C}(\play) =
                  \convert^{\robmaxstrategy_\C}_{\game \to \C}(\play_1)
                  \xrightarrow{\trans_{a \to b}^{(j)}, \delay} (\tuple{\loc, b}, \val)  \,;
		\end{displaymath}
              \item If
                $\convert^{\robmaxstrategy_\C}_{\game \to
                  \C}(\play_1)$ conforms to $\robmaxstrategy_\C$ and
                $\robmaxstrategy_\C(\convert^{\robmaxstrategy_\C}_{\game
                  \to \C}(\play_1)) = (\trans_{a \to b}^{j}, \delay +
                \frac{\varepsilon}{W_{\loc} \cdot 2^{|\play|+1}})$,
                then we let
		\begin{displaymath}
                  \convert^{\robmaxstrategy_\C}_{\game \to \C}(\play) =
                  \convert^{\robmaxstrategy_\C}_{\game \to \C}(\play_1)
                  \xrightarrow{\trans_{a \to b}^j, \delay + \frac{\varepsilon}{W_{\loc} \cdot 2^{|\play|+1}}} (\tuple{\loc, b}, b)  \,;
		\end{displaymath}
              \item Otherwise, we let
                $ \convert^{\robmaxstrategy_\C}_{\game \to \C}(\play)
                = \convert^{\robmaxstrategy_\C}_{\game \to
                  \C}(\play_1) \xrightarrow{\trans_{a \to a}, \delay}
                (\tuple{\loc, a}, \val) $;
	\end{itemize}

      \item if $\last(\play_1) \in \ConfMin[\game]$, then, by letting
        $a$ the copy reached by
        $\convert^{\robmaxstrategy_\C}_{\game \to \C}(\play_1)$,
        $\val$ (resp. $\val_1^c$) the last valuation of $\play$
        (resp.
        $\convert^{\robmaxstrategy_\C}_{\game \to \C}(\play_1)$),
        $\guard$ the guard of $\trans$, and
        $b = \max(a, \leftg{\guard})$, we define
	\begin{displaymath}
		\convert_{\game \to \C}(\play) = \convert_{\game \to \C}(\play_1) \xrightarrow{\trans_{a \to b},
			\max(0, \val - \val_1^c)}
		s' 
	\end{displaymath}
	where
		\begin{displaymath}
			s' = \begin{cases}
					 (\tuple{\loc, b}, (\val + \max(0, \val - \val_1^c))[Y]) & \text{if $s = (\loc, \val)$} \\
					((\tuple{\loc, b}, \val), \trans_{a \to b}, \max(0, \val - \val_1^c)) & \text{if
						$s = ((\loc, \val), \trans, \delay)$} \\
			\end{cases}
		\end{displaymath}
	 where $Y$ is the set of reset of $\trans$.
            \end{itemize}

            By definition of $\C$, since $\tuple{\loc, a}$ cannot be
            reached with a valuation $\val < a$, and since decisions
            along $\convert_{\game \to \C}(\rho)$ are driven by
            $\robmaxstrategy_\C$, they are robust. Following the same
            proof structure as for the first inequality, we now state
            the analogous of \cref{lem:rVal_convert-C-G}.  However,
            this function preserves neither the last state nor the
            weight.  Thus, we need to adapt
            \cref{lem:rVal_convert-C-G}.

\begin{lemma}
	\label{lem:rVal_convert-G-C}
	Let $\play \in \rFPlays[\game]$. 
	Then,
	\begin{enumerate}
        \item\label{itm:rVal_convert-G-C_last-loc}
          $\convert_{\game \to \C}^{\robmaxstrategy_\C}$ preserves the
          location in the last state: if $\loc$ denotes the last
          location of $\play$, then there exists $a \in \N$ such that
          the last location of
          $\convert_{\game \to \C}^{\robmaxstrategy_\C}(\play)$ is
          $\tuple{\loc, a}$.
		
		\item\label{itm:rVal_convert-G-C_last-val}
		$\convert_{\game \to \C}^{\robmaxstrategy_\C}$ controls the valuation difference:
		by letling $\val$ (resp. $\val^\C$) be
		the last valuation of $\play$ (resp. of
		$\convert_{\game \to \C}^{\robmaxstrategy_\C}(\play)$).  Then,
		$0 \leq \val^\C - \val \leq \frac{\varepsilon}{W_{\loc}}(1 -
		2^{-|\play|})$.
		
		\item\label{itm:rVal_convert-G-C_weight} 
		$\convert_{\game \to \C}^{\robmaxstrategy_\C}$ controls the weight difference: 
		$\weightC(\play) \leq \weightC(\convert_{\game \to
		\C}^{\robmaxstrategy_\C}(\play)) + \varepsilon(1- 2^{-|\play|})|$.
	\end{enumerate}
\end{lemma}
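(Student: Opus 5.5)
The three items are proved simultaneously by induction on the length of $\play$, following the case split in the definition of $\convert_{\game \to \C}^{\robmaxstrategy_\C}$.

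\textbf{Base case.} For $\play = (\loc, \val)$, the image is $(\tuple{\loc, 0}, \val)$. It has the same location in copy $0$, the same valuation, and cumulated weight $0$. All three bounds therefore hold with $1 - 2^{0} = 0$.

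\textbf{Inductive step, a move of \MaxPl.} Write $\play = \play_1 \xrightarrow{\trans, \delay} s$.

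\begin{itemize}
\item If $\robmaxstrategy_\C$ prescribes exactly $(\trans^{(j)}_{a\to b}, \delay)$, or we are in the ``otherwise'' case $\trans_{a\to a}$, the image appends the same delay and transition. The location is then $\tuple{\loc, b}$. The valuation gap is unchanged, or reset to $0$ on both sides.
\item In the jumping case, the image uses the delay $\delay + \varepsilon/(W_{\loc} 2^{|\play|+1})$ and reaches valuation $b$. The game $\game$ reaches $\val_1 + \delay$. I would first check that the valuation $\val_1$ on the $\game$ side and $\val_1^c$ on the $\C$ side coincide at \MaxPl states, or at least that the gap is absorbed. The definition seems to assume the $\C$-side valuation equals $\val$ plus the alteration. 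Then $0 \leq \val^\C - \val \leq \varepsilon/(W_{\loc} 2^{|\play|+1})$ plus the previous gap. Summing the geometric series $\sum_{k\le |\play|} 2^{-k-1}$ stays below $\frac{\varepsilon}{W_{\loc}}(1-2^{-|\play|})$.
\end{itemize}

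\textbf{Inductive step, a move of \MinPl.} The image uses the delay $\max(0, \val - \val_1^c)$, where $\val$ is the target valuation in $\game$.

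\begin{itemize}
\item By the induction hypothesis $\val_1^c \geq \val_1$. If $\val \geq \val_1^c$, the image reaches exactly $\val$, so the gap is closed.
\item Otherwise the delay is $0$, and the gap $\val_1^c - \val$ is at most the previous gap $\val_1^c - \val_1$. Hence it remains bounded.
\item Resets zero both sides.
\item The location is $\tuple{\loc, b}$ with $b = \max(a, \leftg{\guard})$.
\end{itemize}

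This gives items~\ref{itm:rVal_convert-G-C_last-loc} and~\ref{itm:rVal_convert-G-C_last-val}. I will also need that this move is legal in $\C$: it requires $\rightg{\guard} > a$ and robustness of the shifted decision. This should follow from Lemma~\ref{lem:copy}, item~\ref{itm:copy_end-val}, together with the valuation bound.

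\textbf{Weight item.} This is the main obstacle. Total elapsed time on both plays differs exactly by the valuation shift, which is redistributed between locations. Time is removed from a \MinPl location (the delay $\max(0,\cdot)$ is shorter by the accumulated gap) and added at the jumping \MaxPl moves. I would bound the per-step difference $|\delay\cdot\weight(\loc) - \delay^c\cdot\weight(\loc)|$ by $W_{\loc}$ times the change in gap at that step. A jump adds $\varepsilon/(W_{\loc}2^{|\play|+1})$ to the gap, so it contributes at most $\varepsilon 2^{-|\play|-1}$. A catch-up \MinPl move removes some gap $g$ and contributes at most $W_{\loc}\, g$.

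The naive bound thus double-counts. Each unit of gap costs once when created and once when absorbed. To avoid this, I would strengthen the invariant to
\[
  \weightC(\play) \leq \weightC(\convert(\play)) + W_{\loc}\cdot(\val^\C - \val) + \varepsilon(1 - 2^{-|\play|}) - \text{(created gap)}.
\]
Alternatively, I would choose the jump increments so that their total over the play is below $\varepsilon/(2W_{\loc})$. The geometric factor $2^{-|\play|-1}$ is exactly what makes the cumulative sum at most $\varepsilon(1-2^{-|\play|})$. The remaining point is to verify that transition weights coincide, which holds since $\wt^\C(\trans_{a\to b}) = \wt(\trans)$.
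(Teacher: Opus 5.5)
Items~1 and~2 are proved as in the paper, by the same induction over the cases of the map. Write $g=\val^\C-\val$. Copied moves leave $g$ unchanged (or reset it to $0$). An altered jump producing a play of length $n$ adds $\eta_n=\varepsilon/(W_{\loc}2^{n+1})$. A \MinPl move with truncated delay $\max(0,\val-\val_1^\C)$ can only decrease $g$. The gap need not vanish at \MaxPl states, but your ``plus the previous gap'' covers this.

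For item~3 your diagnosis is right, and it is sharper than the paper. The paper's inductive step uses $\delay^\C=\delay+\eta$ uniformly, which only describes altered jumps. At a \MinPl catch-up step the $\C$-delay is actually $\delay-\alpha$, with $\alpha$ the absorbed gap, and this costs up to $W_{\loc}\alpha$. So an induction carrying only the bound of item~3 does not close.

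However, the strengthened invariant you write has the gap term on the wrong side, and it is neither inductive nor true. At a catch-up step, $\weightC(\play)-\weightC(\convert(\play))$ may grow by $W_{\loc}\alpha$. Meanwhile your right-hand side loses $W_{\loc}\alpha$ and gains only $\varepsilon2^{-|\play|}$. Since $\alpha$ may be gap created by earlier, larger increments, this fails. Concretely, take altered jumps at lengths $1,\dots,n-1$ from \MaxPl locations of weight $-W_{\loc}$. Follow them with one \MinPl move from a location of weight $W_{\loc}$ that absorbs the whole gap. The weight difference is then $2W_{\loc}\sum_{k<n}\eta_k=\varepsilon(1-2^{-(n-1)})$. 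Your right-hand side is $\varepsilon(1-2^{-n})$ minus the created gap, which is bounded away from $0$ independently of $n$, in whichever units you take it. So the invariant fails for $n$ large.

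The potential that works puts the gap on the other side:
\[
  \weightC(\play)+W_{\loc}\,(\val^\C-\val)\;\le\;\weightC(\convert(\play))+\varepsilon\,(1-2^{-|\play|}) .
\]
\begin{itemize}
\item An altered jump at length $n$ increases $\weightC(\play)-\weightC(\convert(\play))$ by at most $W_{\loc}\eta_n$ and the gap term by $W_{\loc}\eta_n$. The total is $\varepsilon2^{-n}$, exactly the increase of $\varepsilon(1-2^{-n})$.
\item A catch-up of $\alpha$ adds at most $W_{\loc}\alpha$ to the weight difference and removes $W_{\loc}\alpha$ from the gap term.
\item Copied moves, including perturbation edges, leave the weight difference unchanged and never increase the gap.
\end{itemize}
Item~3 then follows since $\val^\C-\val\ge0$ by item~2.

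Your ``alternative'' is this same count done globally. Absorbed gap never exceeds created gap, so the total cost is at most $2W_{\loc}\sum_{k\le n}\eta_k\le\varepsilon(1-2^{-n})$. It needs no change to the construction, since the paper's increments already have this total.

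Finally, your appeal to Lemma~\ref{lem:copy}, item~\ref{itm:copy_end-val}, for legality of the $\C$-side \MinPl move is not enough. This is also the paper's only justification, given when the map is defined. It gives $\val_1^\C\ge a$. But when the truncated delay is $0$ you need $[\val_1^\C,\val_1^\C+\perturbation]\subseteq\guard$ and $\rightg{\guard}>a$. Robustness of \MinPl's choice at the smaller valuation $\val$ does not imply either.
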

\begin{proof}
	We prove each item successively.\\
	\textbf{\cref{itm:rVal_convert-G-C_last-loc}}
	is immediate by definition of $\convert_{\game \to \C}^{\robmaxstrategy_\C}$.
%
%

	\smallskip

	\noindent
	\textbf{Proof
	of~\cref{itm:rVal_convert-G-C_last-val}}:
	We reason by induction on the length of $\play$.  First, if
	$\play = (\tuple{\loc, 0}, \val)$, then
	$\convert_{\game \to \C}^{\robmaxstrategy_\C}(\play) = (\tuple{\loc,
		0}, \val) = \play$.  Thus, the property trivially holds.

	Suppose now that
	$\play = \play_1 \xrightarrow{\trans_{a \to b}, \delay} s$ with
	$\last(\play_1)$ a configuration of \MinPl.  In this case, writing
	$\val$ (resp. $\val^\C$, $\val_1$, and $\val_1^\C$) for the last
	valuation of $\play$ (resp.
	$\convert_{\game \to \C}^{\robmaxstrategy_\C}(\play)$, $\play_1$,
	and $\convert_{\game \to \C}^{\robmaxstrategy_\C}(\play_1)$) and $c$ for
	the copy reached by
	$\convert_{\game \to \C}^{\robmaxstrategy_\C}(\play_1)$, we have
	\begin{displaymath}
		\convert_{\game \to \C}^{\robmaxstrategy_\C}(\play) =
		\convert_{\game \to \C}^{\robmaxstrategy_\C}(\play_1) \xrightarrow{\trans_{a \to b}, \max(0, \val - \val_1^\C)} s'
	\end{displaymath}
	Moreover, if $\trans$ contains a reset, then $0 = \val -
	\val^\C$ since resets are preserved from $\game$ to $\C$.
	Otherwise, $\val = \val_1 + \delay$ and
	$\val^\C = \val_1^\C + \max(0, \val - \val_1^c)$, and
	$\val^\C - \val = \val_1^\C + \max(0, \val - \val_1^\C) - \val_1
	- \delay$.  In particular, since $\delay = \val - \val_{1}$
	and by definition of a maximum, we have
	\begin{displaymath}
		\val^\C - \val = \val_1^\C - \val  + \max(0, \val - \val_1^\C)
		\geq  \val_1^\C - \val  + \val - \val_1^\C = 0
	\end{displaymath}
	Conversely, by induction hypothesis applied to $\play_1$, we have
	\begin{displaymath}
		\val^\C - \val = \val_1^\C - \val_1 + \max(0, \val - \val_1^\C)   - \delay
		\leq \frac{\varepsilon}{W_{\loc}}(1 - 2^{-|\play_1|}) + \max(0, \val - \val_1^\C)  - \delay
	\end{displaymath}
	To conclude, we distinguish two cases according to the maximum:
	\begin{itemize}
		\item If $\max(0, \val - \val_1^\C) = 0$, then
		$\val^\C - \val \leq \frac{\varepsilon}{W_{\loc}}(1 - 2^{-|\play_1|}) - \delay \leq
		\frac{\varepsilon}{W_{\loc}}(1 - 2^{-|\play_1|})$,
		since $\delay \geq 0$;
		\item Otherwise, $\max(0, \val - \val_1^\C) = \val - \val_1^\C$, and
		since $\delay = \val - \val_1$, we have
		\begin{displaymath}
			\val^\C - \val \leq  \frac{\varepsilon}{W_{\loc}}(1 - 2^{-|\play_1|}) + \val - \val_1^\C - (\val - \val_1)
			= \frac{\varepsilon}{W_{\loc}}(1 - 2^{-|\play_1|}) + \val_1 - \val_1^\C \,.
		\end{displaymath}
		Applying the induction hypothesis to $\play_1$, we get
		that $\val_1 - \val_1^\C \leq 0$. Thus
		$\val^\C - \val \leq \frac{\varepsilon}{W_{\loc}}(1 -
		2^{-|\play_1|}) \leq \frac{\varepsilon}{W_{\loc}}(1 -
		2^{-|\play|})$.
	\end{itemize}

	Finally, suppose that
	$\play = \play_1 \xrightarrow{\trans, \delay} s$ with
	$\last(\play_1)$ a state of \MaxPl.  Let $\val$
	(resp. $\val_1$, $\val^\C$, and $\val_1^\C$) be the last
	valuation of $\play$ (resp. $\play_1$,
	$\convert_{\game \to \C}^{\robmaxstrategy_\C}(\play)$, and
	$\convert_{\game \to \C}^{\robmaxstrategy_\C}(\play_1)$).  We
	suppose that $\trans$ does not have a reset, otherwise
	$\val = \val^\C = 0$ and the property holds, 
	and so we distinguish three
	subcases according to the definition of
		$\convert_{\game \to \C}^{\robmaxstrategy_\C}$.
	\begin{itemize}
        \item First, we suppose that
          $\convert^{\robmaxstrategy_\C}_{\game \to \C}(\play_1)$
          conforms to $\robmaxstrategy_\C$, and
          $\robmaxstrategy_\C(\convert^{\robmaxstrategy_\C}_{\game \to
            \C}(\play_1)) = (\trans_{a \to b}^{(j)}, \delay)$. In
          particular,
		\begin{displaymath}
			\convert^{\robmaxstrategy_\C}_{\game \to \C}(\play) =
			\convert^{\robmaxstrategy_\C}_{\game \to \C}(\play_1)
			\xrightarrow{\trans_{a \to b}^{(j)}, \delay} (\tuple{\loc, b}, \val)  \,.
		\end{displaymath}
		Thus, $\val^\C - \val = \val_1^\C + \delay - \val_1 - \delay = \val_1^\C - \val_1$, and
		the property holds by applying the induction hypothesis to $\play_1$.

		\item Otherwise, we suppose that $\convert^{\robmaxstrategy_\C}_{\game \to \C}(\play_1)$
			conforms to $\robmaxstrategy_\C$, and
		$\robmaxstrategy_\C(\convert^{\robmaxstrategy_\C}_{\game \to \C}(\play_1)) =
		(\trans_{a \to b}^{j}, \delay + \frac{\varepsilon}{W_{\loc} \cdot 2^{|\play|+1}})$. In particular,
		\begin{displaymath}
			\convert^{\robmaxstrategy_\C}_{\game \to \C}(\play) =
			\convert^{\robmaxstrategy_\C}_{\game \to \C}(\play_1)
			\xrightarrow{\trans_{a \to b}^j, \delay + \frac{\varepsilon}{W_{\loc} \cdot 2^{|\play|+1}}} (\tuple{\loc, b}, b)  \,.
		\end{displaymath}
		In this case, applying the induction hypothesis to $\play_1$,
		we obtain
		\begin{displaymath}
			\val^\C - \val = \val_1^\C + \delay + \frac{\varepsilon}{W_{\loc} \cdot 2^{|\play|+1}} - \val_1 - \delay
			= \val_1^\C - \val + \frac{\varepsilon}{W_{\loc} \cdot 2^{|\play|+1}} \geq \frac{\varepsilon}{W_{\loc} \cdot 2^{|\play|+1}} \geq 0 \,.
		\end{displaymath}
		Conversely, by letting $\delay^\C = \delay + \frac{\varepsilon}{W_{\loc} \cdot 2^{|\play|+1}}$, we have
		\begin{displaymath}
			\val^\C - \val = \val_1^\C + \delay^\C  - \val_1 - \delay
			= \val_1^\C + \delay^\C  - \val - \delay^\C  + \frac{\varepsilon}{W_{\loc} \cdot 2^{|\play|+1}} =
			\val_1^\C - \val + \frac{\varepsilon}{W_{\loc} \cdot 2^{|\play|+1}}\,.
		\end{displaymath}
		Applying again the induction hypothesis to $\play_1$ and
		observing that
		$(1- 2 ^{-k}) = \sum_{i=0}^k \frac{1}{2^{k+1}}$, we obtain
		\begin{displaymath}
			\val^\C - \val
			\leq \frac{\varepsilon}{W_{\loc}}(1- 2 ^{-|\play_1|}) + \frac{\varepsilon}{W_{\loc} \cdot2^{|\play|+1}} =
			\frac{\varepsilon}{W_{\loc}}(1- 2 ^{-|\play|})\,.
		\end{displaymath}

              \item Finally, we suppose that
                $\convert^{\robmaxstrategy_\C}_{\game \to
                  \C}(\play_1)$ does not conform to
                $\robmaxstrategy_\C$. Then,
			\begin{displaymath}
				\convert^{\robmaxstrategy_\C}_{\game \to \C}(\play) =
				\convert^{\robmaxstrategy_\C}_{\game \to \C}(\play_1)
				\xrightarrow{\trans_{a \to a}, \delay} (\tuple{\loc, b}, \val)  \,.
			\end{displaymath}
			Thus, $\val^\C - \val = \val_1^\C + \delay - \val_1 - \delay = \val_1^\C - \val$, and
			the property holds by applying the induction hypothesis to $\play_1$.
	\end{itemize}

	\smallskip

	\noindent
	\textbf{Proof of~\cref{itm:rVal_convert-G-C_weight}}: We
        reason by induction on the length of $\play$.  First, we
        suppose that $\play = (\tuple{\loc, 0}, \val)$.  In this case,
        $\convert_{\game \to \C}^{\robmaxstrategy_\C}(\play) =
        (\tuple{\loc, 0}, \val) = \play$ and
        $\weightC(\play) = 0 = \weightC(\convert_{\game \to
          \C}^{\robmaxstrategy_\C}(\play))$.

	Assume now that
	$\play = \play_1 \xrightarrow{\trans_{a \to b}^{(j)}, \delay}
	s$ with $\tuple{\loc, b}$ the last location of $\play$.  Then
	\begin{displaymath}
		\convert_{\game \to \C}^{\robmaxstrategy_\C}(\play) =
		\convert_{\game \to \C}^{\robmaxstrategy_\C}(\play_1) \xrightarrow{\trans_{c \to d}, \delay^\C} s'
	\end{displaymath}
	where $\trans_{c \to d}$ is a copy of $\trans$ possibly in a
	new copy.  In this case, we have
	$\weightC(\play) = \weightC(\play_1) + \weight(\trans) +
	\delay \weight(\loc)$.  Since $\C$ preserves the weights of
	transitions between copies, by induction hypothesis
	applied to $\play_1$, we deduce that
	\begin{displaymath}
		\weightC(\play)
		\leq  \weightC(\convert_{\game \to \C}^{\robmaxstrategy_\C}(\play_1)) + \varepsilon(1- 2^{-|\play_1|}) +
		\weight(\trans_{a \to d}) + \delay \weight(\loc)
	\end{displaymath}
	Moreover, we remark, by definition of
	$\convert_{\game \to \C}^{\robmaxstrategy_\C}$
	and~\cref{itm:rVal_convert-G-C_last-val} applied to $\play$
	and $\play_1$, that
	\begin{displaymath}
		\delay^\C = \delay + \frac{\varepsilon}{W_\loc \cdot 2^{|\play|} + 1} \,.
	\end{displaymath}
	Thus,
	\begin{displaymath}
		\weightC(\play) \leq
		\weightC(\convert_{\game \to \C}^{\robmaxstrategy_\C}(\play_1)) + \varepsilon(1- 2 ^{-|\play_1|}) +
		\weight(\trans_{c \to d}) +  (\delay^\C + \frac{\varepsilon}{ W_\loc \cdot 2^{|\play|} + 1})~ \weight(\loc)
	\end{displaymath}
	Since, $\weight(\loc) \leq W_{loc}$ and,
        by~\cref{itm:rVal_convert-G-C_last-loc},
        $\convert_{\game \to \C}^{\robmaxstrategy_\C}$ preserves the
        last location (up to the copy index), we conclude that
	\begin{align*}
		\weightC(\play) &\leq  \weightC(\convert_{\game \to \C}^{\robmaxstrategy_\C}(\play)) +
		\varepsilon(1- 2^{-|\play_1|}) + \frac{\varepsilon}{2^{|\play|} + 1}\\
		&\leq \weightC(\convert_{\game \to \C}^{\robmaxstrategy_\C}(\play)) + \varepsilon(1- 2^{-|\play|})| \,. \qedhere
	\end{align*}
\end{proof}

\medskip
\noindent
\textbf{Definition of $\robmaxstrategy_\game$} Using the above-defined
mapping, we can define the robust strategy $\robmaxstrategy_{\game}$
of \MaxPl in $\game$.  It mimicks the choices of $\robmaxstrategy_\C$
in $\C$ ignoring the copy numbers.  Formally, for every
$\play \in \rFPlays[\game]$ ending in a state of \MaxPl, we define
\begin{displaymath}
	\robmaxstrategy_{\game}(\play) =
	\begin{cases}
		(\trans, \delay)
	&\text{if $\robmaxstrategy_{\C}(\convert_{\game \to \C}(\play)) = (\trans_{a \to b}^{(j)}, \delay)$ and
		$(\trans, \delay)$ is available in $\game$}  \\
	(\trans, \delay - \frac{\varepsilon}{W_{\loc} \cdot 2^{|\play| + 1}})
	&\text{if $\robmaxstrategy_{\C}(\convert_{\game \to \C}(\play)) = (\trans_{a \to b}^{j}, \delay)$ and
		$(\trans, \delay)$ is not available in $\game$}
	\end{cases}
\end{displaymath}
Observe that either the guard of $\trans_{a \to b}^{(j)}$ is included
into the one of $\trans$, or we apply a slight adaptation (when
$\guard \subset \overline{\guard}$).  Thus, robust choices of
$\maxstrategy_\C$ are also robust in $\game$, and this new strategy is
well-defined.  Interestingly, it fulfills the following properties:

\begin{lemma}
	\label{lem:rVal_convert-G-C_conf}
	Let $\play \in \rFPlays[\game]$ be a play that conforms to
        $\robmaxstrategy_{\game}$.  Then,
        $\convert_{\game \to \C}^{\robmaxstrategy_\C}(\play)$ conforms
        to $\robmaxstrategy_C$.
\end{lemma}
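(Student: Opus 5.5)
We argue by induction on the length of $\play$, mirroring the proof of \cref{lem:rVal_strat-G-C} (item~\ref{itm:rVal_strat-G-C_2}). Conformance to $\robmaxstrategy_\C$ only constrains the edges leaving \MaxPl's states in $\convert_{\game \to \C}^{\robmaxstrategy_\C}(\play)$: real configurations of \MaxPl and virtual states. So it suffices to show that each such edge appended by the mapping is exactly the decision prescribed by $\robmaxstrategy_\C$ on the corresponding prefix. In the base case $\play = (\loc, \val)$, the image $(\tuple{\loc, 0}, \val)$ is a single state and conforms trivially.

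For the inductive step, write $\play = \play_1 \xrightarrow{\trans, \delay} s$ with $\play_1$ conforming to $\robmaxstrategy_\game$. By induction hypothesis, $\convert_{\game \to \C}^{\robmaxstrategy_\C}(\play_1)$ conforms to $\robmaxstrategy_\C$. We split according to $\last(\play_1)$.

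\emph{\MinPl's configuration.} The mapping appends a \MinPl edge $\trans_{a\to b}$ with delay $\max(0, \val - \val_1^c)$. This creates no new \MaxPl decision, so conformance is inherited.

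\emph{\MaxPl's state.} Since $\play$ conforms to $\robmaxstrategy_\game$, the pair $(\trans,\delay)$ is given by the definition of $\robmaxstrategy_\game$ applied to $\convert_{\game \to \C}^{\robmaxstrategy_\C}(\play_1)$. Two cases arise.
\begin{itemize}
\item If $\robmaxstrategy_\C$ returns $(\trans_{a\to b}^{(j)}, \delay)$ and this move is available in $\game$, then $\robmaxstrategy_\game$ plays $(\trans, \delay)$ unchanged. Since the prefix conforms, the first clause of the mapping applies and appends exactly $\trans_{a\to b}^{(j)}, \delay$.
\item Otherwise, $\robmaxstrategy_\C$ returns a jumping move $(\trans^j_{a\to b}, \delay')$ and $\robmaxstrategy_\game$ plays $\delay = \delay' - \frac{\varepsilon}{W_\loc 2^{|\play|+1}}$. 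Then $\delay' = \delay + \frac{\varepsilon}{W_\loc 2^{|\play|+1}}$, so the second clause applies and appends $\trans^j_{a\to b}, \delay'$, again the prescribed decision.
\end{itemize}
In neither case is the third ``otherwise'' clause triggered, because the prefix conforms. The case where $\last(\play_1)$ is a virtual state (perturbation edge) is handled the same way: the perturbation chosen by $\robmaxstrategy_\game$ is copied from $\robmaxstrategy_\C$ through the mapping.

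The main obstacle is bookkeeping. The index $|\play|$ in the correction term must be read consistently in the definition of $\robmaxstrategy_\game$, where $\play$ is the prefix, and in the mapping, where $\play$ is the extended play. Recall that $|\cdot|$ counts only non-perturbation edges. I would first fix a convention, say the length of $\play_1$ plus one in both places, so that the two fractions coincide. I would also check that the first two clauses are mutually exclusive, i.e.\ an unavailable move is necessarily a jumping one with $b \in \overline{\guard} \setminus \guard$, so that exactly one clause fires. With this settled, the induction closes.
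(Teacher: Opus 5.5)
Your proposal is correct and follows the paper's own proof. Both argue by induction on $\play$, inherit the \MinPl case from the induction hypothesis, and split the \MaxPl case along the two clauses of $\robmaxstrategy_\game$, matching each to the first or second clause of $\convert_{\game \to \C}^{\robmaxstrategy_\C}$.

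Your bookkeeping remark is well taken. The paper uses the length of the prefix in the correction term of $\robmaxstrategy_\game$ but the length of the extended play in the mapping, and silently identifies the two. By contrast, the exclusivity of the two mapping clauses needs no argument about which moves are unavailable in $\game$: it follows at once from the correction term being strictly positive.
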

\begin{proof}
	The proof is by induction on the length of $\play$.
	If $\play = (\loc, \val)$, then $\convert_{\game \to \C}(\play) = (\tuple{\loc, 0}, \val)$
	and conforms to $\robmaxstrategy_\C$.
	Otherwise, we suppose that $\play = \play_1 \xrightarrow{\trans, \delay} s$, and
	we distinguish two cases according to the last state of $\play_1$.
	\begin{itemize}
        \item If $\last(\play_1)$ is a configuration of \MinPl, then
          the property holds by induction hypothesis applied to
          $\play_1$ and $\convert_{\game \to \C}(\play_1)$.

        \item If $\last(\play_1)$ is a state of \MaxPl, then by
          induction hypothesis applied to $\play_1$,
          $\convert_{\game \to \C}^{\robmaxstrategy_\C}(\play_1)$
          conforms to $\robmaxstrategy_C$.  We consider two subcases
          according to the definition of
          $\convert_{\game \to \C}^{\robmaxstrategy_\C}$.  First, in
          the case where
          $\robmaxstrategy_\C(\convert^{\robmaxstrategy_\C}_{\game \to
            \C}(\play_1)) = (\trans_{a \to b}^{(j)}, \delay)$ and
		\begin{displaymath}
			\convert^{\robmaxstrategy_\C}_{\game \to \C}(\play) =
			\convert^{\robmaxstrategy_\C}_{\game \to \C}(\play_1)
			\xrightarrow{\trans_{a \to b}^{(j)}, \delay} (\tuple{\loc, b}, \val) 
		\end{displaymath}
		the property holds by definition.

		Otherwise,
                $\robmaxstrategy_\C(\convert^{\robmaxstrategy_\C}_{\game
                  \to \C}(\play_1)) = (\trans_{a \to b}^{j}, \delay +
                \frac{\varepsilon}{W_{\loc} \cdot 2^{|\play|+1}})$,
                and
		\begin{displaymath}
			\convert^{\robmaxstrategy_\C}_{\game \to \C}(\play) =
			\convert^{\robmaxstrategy_\C}_{\game \to \C}(\play_1)
			\xrightarrow{\trans_{a \to b}^j, \delay + \frac{\varepsilon}{W_{\loc} \cdot 2^{|\play|+1}}} (\tuple{\loc, b}, b)  \,.
		\end{displaymath}
		Since $\play$ conforms to $\robmaxstrategy_{\game}$,
                and the guard of $\trans_{a \to b}^j$ is punctual
                (thus $(\delta, \delay)$ is not possible in $\game$),
                we deduce that
                $\delay + \frac{\varepsilon}{2^{|\play|+1}} = \delay^c
                - \frac{\varepsilon}{2^{|\play|+1}} +
                \frac{\varepsilon}{2^{|\play|+1}} $ where
                $\robmaxstrategy_{\C}(\convert_{\game \to \C}(\play))
                = (\trans_{a \to b}^{j}, \delay)$.  Thus, the property
                holds.  \qedhere
	\end{itemize}
\end{proof}

To conclude the proof of this second inequality,
\cref{itm:rVal_convert-G-C_weight} of~\cref{lem:rVal_convert-G-C} entails that for every $\play_\C$
from $(\tuple{\loc, 0}, \val)$ that conforms to $\robmaxstrategy_\C$,
one can build a play $\play_\game$ from $(\loc, \val)$ that conforms
to $\robmaxstrategy_\C$ and satisfies
$\weightC(\play) \leq \weightC(\convert_{\game \to \C}) +
\varepsilon(1 - 2^{-|\play|})$.  Indeed,
by~\cref{itm:rVal_convert-G-C_last-loc} of~\cref{lem:rVal_convert-G-C} the last location of
$\convert_{\C \to \game}(\play)$ is the same as the one of $\play$ and
thus belongs to the same player.  Thus,
\begin{displaymath}
	\weight(\convert_{\game \to \C}(\play)) =
	\begin{cases}
		+\infty & \text{if $\last(\play)$ is not a target} \\
		\weightC(\play) & \text{otherwise}
	\end{cases}
\end{displaymath}
which coincides with the definition of $\weight(\play)$.  Thus,
$\weight(\play) \leq \weight(\convert_{\game \to \C}) + \varepsilon(1
- 2^{-|\play|}) \leq \weight(\convert_{\game \to \C}) + \varepsilon$.
Finally,
\begin{displaymath}
	\inf_{\robminstrategy \in \rStratMin[\game]} \weight(\outcomes((\loc, \val), \robminstrategy, \robmaxstrategy_\game))
	\leq  \inf_{\robminstrategy \in \rStratMin[\C]} \weight(\outcomes((\loc, \val), \robminstrategy, \robmaxstrategy_\C)) + \varepsilon
	\leq \rValue[\C](\loc, \val) + \varepsilon
\end{displaymath}
Since this inequality holds for every
$\robmaxstrategy \in \rStratMax[\game]$ and every $\varepsilon > 0$,
we obtain  the claimed inequality
$\rValue[0+]_\game(\loc, \val) \geq \rValue[0+]_\C(\tuple{\loc, 0}, \val)$. \qed

\subsection[Proof of Corollary 14]{Proof of~\cref{cor:val_rval-Max-without-jump}}
\label{subsec:value_cor-without-jump}

We conclude this Appendix with the proof of the next corollary.

\maxWithoutJumping*

\begin{proof}
  In~\cref{app_preservation-rval}, given an $\varepsilon$-optimal
  robust strategy for \MaxPl in $\game$, we define
  $\robmaxstrategy_\C$ robust strategy for \MaxPl in $\C$, that only
  chooses non-jumping transitions, and such that the values of real
  states are at least as large as the original ones. More precisely
  \cref{eq:value_greater} holds: the value of the game is at least as
  large as the fixed-perturbation robust value of $\game$ up to an
  error of $\varepsilon$.  Moreover, by~\cref{prop:preservation_rval},
  robust values of $\game$ and $\C$ are equal.  Thus,
  $\robmaxstrategy_\C$ is $\varepsilon$-optimal in $\C$.
\end{proof}

%% file: app_copy-well-formed.tex
\section[Proof of Lemma 16]{Proof of~\cref{lem:rval-eq-val_restrictstrat}}
\label{sec:app_copy-well-formed}

\copyWellFormed*

Throughout the proof, we fix $\varepsilon >0$. Applying
\cref{cor:val_rval-Max-without-jump}, let
$\robmaxstrategy \in \rStratMax[\C]$ be an $\varepsilon/2$-optimal
robust strategy for \MaxPl such that all induced plays contain no
jumping transitions: that is every play $\play$ that conforms to
$\robmaxstrategy \in \rStratMax[\C]$ is such that
$\play \in \rFPlays_{\neg j}$.  To prove
\cref{lem:rval-eq-val_restrictstrat}, we define
$\widehat{\robmaxstrategy} \in \rStratMax[\C]$ a $p$-well-formed
strategy such that, for every real state $(\tuple{\loc, 0}, \val)$ of
$\C$, we have
\begin{equation}
	\label{eq:rval-eq-val_restrictstrat-ineq}
	\rValue[\perturbation, \robmaxstrategy](\tuple{\loc, 0}, \val) \leq
	\rValue[\perturbation, \widehat{\robmaxstrategy}](\tuple{\loc, 0}, \val) + \varepsilon/2 \,.
\end{equation}
From the fact that $\robmaxstrategy$ is an $\varepsilon/2$-optimal
robust strategy for \MaxPl, we will deduce that for every real state
$(\tuple{\loc, 0}, \val)$
\begin{displaymath}
	\rValue(\tuple{\loc, 0}, \val)-  \varepsilon/2 \leq \rValue[\perturbation, \robmaxstrategy](\tuple{\loc, 0}, \val) \leq
	\rValue[\perturbation, \widehat{\robmaxstrategy}](\tuple{\loc, 0}, \val) + \varepsilon/2 \,.
\end{displaymath}
As a consequence,
$\rValue(\tuple{\loc, 0}, \val)- \varepsilon \leq
\rValue[\perturbation, \widehat{\robmaxstrategy}](\tuple{\loc, 0},
\val)$ and $\widehat{\robmaxstrategy}$ will be $\varepsilon$-optimal.

\medskip
\noindent
\textbf{Definition of $\convert_{\wf \to \rmath}$.}  Before defining
$\widehat{\robmaxstrategy}$, we start with the definition of a mapping
that maps the jumping transitions chosen by \MaxPl in an arbitrary
strategy to decisions conforming to $\robmaxstrategy$ that are ``too
close'' to the right border of a guard.  This mapping will be driven
by strategy $\robmaxstrategy$.  Formally, we define a function
$\convert_{\wf \to \rmath}^{\robmaxstrategy} : \rFPlays \to
\rFPlays_{\neg j}$ by induction on the length of plays as follows.
For every real state $s$, we let
$\convert_{\wf \to \rmath}^{\robmaxstrategy}(s) = s$. For every play
$\play \in \rFPlays$ such that
$\play = \play_1 \xrightarrow{\trans_{a \to b}^{(j)}, \delay} s$, we
distinguish two cases according to the last state of $\play_1$:
\begin{displaymath}
	\convert_{\wf \to \rmath}^{\robmaxstrategy}(\play) =
	\begin{cases}
		\convert_{\wf \to \rmath}^{\robmaxstrategy}(\play_1) \xrightarrow{\trans_{c \to b}, \max(0, \val - \val_1^\C)} s'
		& \text{if $\last(\play_1) \in \ConfMin$} \\
		\convert_{\wf \to \rmath}^{\robmaxstrategy}(\play_1) \xrightarrow{\robmaxstrategy(\convert_{\wf \to \rmath}^{\robmaxstrategy}(\play_1))} s'
		& \text{otherwise}
	\end{cases}
\end{displaymath}
where $\val_1^\C$ (resp. $\val$) is the last valuation of
$\convert_{\wf \to \rmath}^{\robmaxstrategy}(\play_1)$ (resp. $\play$) and
$c$ is the copy reached by
$\convert_{\wf \to \rmath}^{\robmaxstrategy}(\play_1)$ (that is, some
integer in $\llbracket a, b \rrbracket \cup \{0\}$).  The intuition is
to compensate the potential difference in weight by altering the
delays.

Observe that $\convert_{\wf \to \rmath}^{\robmaxstrategy}$ only picks robust
decisions for \MinPl and \MaxPl.  Indeed, decisions of \MaxPl are
robust by definition of $\robmaxstrategy$.  For decisions of \MinPl,
we observe that the final valuation reached by
$\convert_{\wf \to \rmath}^{\robmaxstrategy}$ is always least or equal to
the one reached by $\play$.  Thus, since $\play$ is robust, the
decision of \MinPl by $\convert_{\wf \to \rmath}^{\robmaxstrategy}$ is also
robust.

\medskip
\noindent
\textbf{Definition of $\widehat{\robmaxstrategy}$.}  We are now in a
position to define a well-formed strategy for \MaxPl.  Formally, for
every robust play $\play$, writing $\val$ (resp. $\val^\C$) for the
last valuation of $\play$ (resp. of
$\convert_{\wf \to \rmath}^{\robmaxstrategy}(\play)$), and
$(\trans_{a \to b}, \delay^\C) = \robmaxstrategy(\convert_{\wf \to
  r}^{\robmaxstrategy}(\play))$ for the move given by
$\robmaxstrategy$ (note that it cannot be a jumping transition,
therefore $b \in \{0, a\}$), we define
\begin{displaymath}
  \widehat{\robmaxstrategy}(\play) = 
	\begin{cases}
		(\trans_{c \to i}^j, i - \val) & \text{if $b = a$ and $\val + \delay \in [i - N \cdot 2p, i)$ for $i \in \N^*$}; \\
	(\trans_{c \to b}, \delay^\C) & \text{else and if ($c = a$ or $\trans$  resets \clockx);} \\
		(\trans_{c \to c}, \max(\delay^\C - (\val - \val^\C)), 0) & \text{otherwise}.
	\end{cases}
\end{displaymath}
where $c$ denotes the copy reached by $\play$ (that may differ from
the copy reached by $\convert_{\wf \to
  r}^{\robmaxstrategy}(\play)$). Note that, the last two cases of this
definition correspond to a decision of $\robmaxstrategy$ that is
$p$-well-formed, i.e.  if $\last(\play) \notin \ConfMax$,
$\robmaxstrategy(\play) = (\trans^j_{a \to b}, \delay)$,
$\robmaxstrategy(\play) = (\trans_{a \to 0}, \delay)$, or
$\robmaxstrategy(\play) = (\trans_{a \to a}, \delay)$ such that for
all $i \in \N^*$, $\val + \delay \notin [i- N \cdot 2p, i)$.

Clearly enough $\widehat{\robmaxstrategy}$ is a robust strategy for
\MaxPl. Indeed the delay chosen for jumping transitions satisfies the
equality guard by definition, and in other cases, the delay is at most
the delay chosen by $\robmaxstrategy$, which is robust. Moreover,
$\widehat{\robmaxstrategy}$ is by construction a $p$-well-formed
strategy.

\medskip

To complete the proof, it remains to
prove~\cref{eq:rval-eq-val_restrictstrat-ineq}. We will show that for
every play $\play$ that conforms to $\widehat{\robmaxstrategy}$, there
exists a play $\play'$ that conforms to $\robmaxstrategy$ and such
that $\weight(\play') \leq \weight(\play) + \varepsilon/2$. To do so,
it suffices to show that the function
$\convert_{\wf \to \rmath}^{\robmaxstrategy}$ has good properties when it
is applied to a play that conforms to
$\widehat{\robmaxstrategy}$. Precisely:
\begin{lemma}
	\label{lem:rval-eq-val_restrictstrat-convert}
	Let $\play$ be a finite play that conforms to
        $\widehat{\robmaxstrategy}$.  Then,
	\begin{enumerate}
        \item\label{itm:rval-eq-val_restrictstrat_conf}
          $\convert_{\wf \to \rmath}^{\robmaxstrategy}(\play)$ conforms to
          $\robmaxstrategy$;
              \item\label{itm:rval-eq-val_restrictstrat_last-loc}
                $\convert_{\wf \to \rmath}^{\robmaxstrategy}$ preserves the
                last location, i.e. if
                $\tuple{\loc, a}$ is the last location of $\play$ in
                $\C$, then there exists $b \geq a$ such that
                $\tuple{\loc, b}$ is the last location of
                $\convert_{\wf \to \rmath}^{\robmaxstrategy}(\play)$;
              \item\label{itm:rval-eq-val_restrictstrat_last-val} the
                valuation deviation of
                $\convert_{\wf \to \rmath}^{\robmaxstrategy}$ is
                controlled, i.e. if $\val$ is the last valuation of
                $\play$, the last valuation $\val^\C$ of
                $\convert_{\wf \to \rmath}^{\robmaxstrategy}(\play)$ is
                such that $0 \leq \val^\C - \val \leq N \cdot 2 p$;
              \item\label{itm:rval-eq-val_restrictstrat_weight} the
                weight difference is controlled, i.e.
                $\weightC(\convert_{\wf \to
                  r}^{\robmaxstrategy}(\play)) \leq \weightC(\play) + \maxWeightLoc \cdot
                4N \cdot \perturbation \cdot |\play|$.
	\end{enumerate}
\end{lemma}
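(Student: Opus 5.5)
The plan is a single induction on the length of $\play$, with the four items proved simultaneously for every finite play conforming to $\widehat{\robmaxstrategy}$. Items~\ref{itm:rval-eq-val_restrictstrat_last-loc} and~\ref{itm:rval-eq-val_restrictstrat_last-val} together form the invariant that drives everything. It says that the converted play is always \emph{behind or equal} in copy index and \emph{ahead or equal} in valuation, by at most $N\cdot 2p$. Item~\ref{itm:rval-eq-val_restrictstrat_conf} then follows from how $\convert_{\wf \to \rmath}^{\robmaxstrategy}$ is defined. Item~\ref{itm:rval-eq-val_restrictstrat_weight} is obtained by summing the per-step weight errors that the valuation invariant allows.

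The base case $\play = s$ is trivial, since the conversion is the identity. For the inductive step, write $\play = \play_1 \xrightarrow{d,\delay} s$ and split on the owner of $\last(\play_1)$.

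\emph{\MinPl's move.} The converted step takes delay $\max(0,\val-\val_1^\C)$. Without a reset, the new converted valuation is therefore $\max(\val_1^\C,\val)$, so it is at least $\val$. It exceeds $\val$ by at most $\val_1^\C-\val_1 \le N\cdot 2p$, using the induction hypothesis and $\delay\ge 0$. With a reset, both valuations are $0$. The target copies are $\max(a',\leftg{\guard})$ for $\play$ and the same expression with the converted copy for the conversion, and $\max$ is monotone; this gives item~\ref{itm:rval-eq-val_restrictstrat_last-loc}, and a reset sends both plays to copy~$0$. This is a \MinPl step, so conformance to $\robmaxstrategy$ is inherited from $\play_1$. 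Robustness of the converted decision holds because the converted valuation is at least as large as in $\play$ and the guard upper bounds are identical.

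\emph{\MaxPl's move.} By definition the converted play takes exactly $\robmaxstrategy(\convert_{\wf \to \rmath}^{\robmaxstrategy}(\play_1)) = (\trans_{a\to b},\delay^\C)$, so item~\ref{itm:rval-eq-val_restrictstrat_conf} is immediate. The three cases of $\widehat{\robmaxstrategy}$ are handled separately.
\begin{itemize}
\item In the second case, the same delay $\delay^\C$ and the same target copy are used, and the valuation gap is unchanged or reset to $0$.
\item In the third case, $\play$ waits $\max(\delay^\C-(\val-\val^\C),0)$. This aligns $\play$ with the converted valuation whenever possible, and otherwise leaves it below that valuation by no more than the previous gap. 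The copies are unchanged.
\item In the jumping case, $\play$ lands at valuation exactly $i$ in copy $i$, while the converted play lands at some valuation $w$ in copy $b=a$.
\end{itemize}

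The jumping case is where I expect the main obstacle. The trigger condition $\val+\delay\in[i-N\cdot 2p,i)$ has to be read carefully, together with the choice of $\delay$ as the converted delay. The intent is to show that $w\in[i-N\cdot 2p,i]$, or that the jump is taken only when \MinPl can never afterwards use a transition whose upper bound is at most $i$. Either fact would let the copy and valuation comparison be re-established, possibly after re-indexing which of the two plays is treated as the leading one. I would first try to prove directly that $i\le w$ using the fact that $\robmaxstrategy$ is robust: when $\robmaxstrategy$ places the valuation inside a forbidden interval $F_i$ and the path stays r-feasible in $\C$ (by Lemma~\ref{lem:copy}, item~\ref{itm:copy_caracterisation}), the ensuing \MinPl perturbation pushes the converted valuation up to at least $i$ before the next real \MinPl decision. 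If that fails, I would weaken the invariant locally at virtual states and restore it at the next real state.

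Finally, item~\ref{itm:rval-eq-val_restrictstrat_weight} follows by induction. Each step changes $\weightC$ by the same transition weight, since $\C$ copies weights, plus $\weight(\loc)$ times the delay difference. By item~\ref{itm:rval-eq-val_restrictstrat_last-val} applied before and after the step, that delay difference is at most $2\cdot N\cdot 2p=4Np$. Summing over the $|\play|$ steps gives the bound $\maxWeightLoc\cdot 4N\cdot\perturbation\cdot|\play|$.
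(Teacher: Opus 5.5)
\textbf{Verdict: genuine gap.} Your skeleton matches the paper's: simultaneous induction, a split on the owner of the last state, the three cases of $\widehat{\robmaxstrategy}$, and summing per-step delay errors for item~4. However, the jumping case, which you flag and leave open, cannot be closed, because items~2 and~3 as stated fail right after a jump.

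\emph{Counterexample.} Take as first move a non-resetting transition with guard $[0,5]$ from $(\tuple{\loc_0,0},0)$, with $\loc_0\in\LocsMax$. Suppose $\robmaxstrategy$ picks delay $1-N\perturbation$; nothing in the choice of $\robmaxstrategy$ forbids this, e.g.\ if all weights are $0$. Since $0+(1-N\perturbation)\in[1-2N\perturbation,1)$, $\widehat{\robmaxstrategy}$ plays $(\trans^j_{0\to 1},1)$. So $\play$ ends in $(\tuple{\loc_1,1},1)$, while $\convert_{\wf\to\rmath}^{\robmaxstrategy}(\play)$ ends in $(\tuple{\loc_1,0},1-N\perturbation)$. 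This gives $\val^\C-\val=-N\perturbation<0$. The converted copy is $0<1$, whereas item~2 asks for the converted copy to be \emph{at least} the original one. Your paraphrase ``behind or equal'' reverses the statement, and your direction is in fact the one the construction maintains.

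\emph{Your remedies.} This is a real state reached by a \MaxPl move, so no perturbation intervenes: perturbations only follow \MinPl decisions, and they are \MaxPl's choice anyway. There is also nothing to ``restore at the next real state'', since this already is one. Your observation that, until the next reset, \MinPl can then only use guards with $\rightg{\guard}>i$ is true and useful for robustness later, but it does not give $\val^\C\ge\val$. The paper's own computation for this case does not fill the gap either. It obtains $\val^\C-\val\geq\val_1^\C-\val_1-2N\perturbation$ and then asserts that this is $\geq 0$, which does not follow.

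\emph{What the jump step actually gives.} From your hypothesis $0\le\val_1^\C-\val_1\le 2N\perturbation$, you get
$\val^\C-\val=(\val_1^\C-\val_1)+(\val_1+\delay^\C-i)\in[-2N\perturbation,2N\perturbation)$,
so the gap may change sign. A correct argument therefore has to propagate a different invariant and re-check every case under it. That invariant is: the converted copy is at most the original one, and $\val^\C-\val$ is bounded in \emph{absolute} value. Item~4 then still follows by your summation, which only needs an absolute bound on each per-step delay difference.

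\emph{Case arguments to redo.} Several of yours depend on the sign of the gap.
\begin{itemize}
\item In the third case of $\widehat{\robmaxstrategy}$, the delay is $0$ exactly when $\play$ is \emph{ahead}. It therefore leaves $\play$ above, not below, the converted play.
\item In the \MinPl case, ``the converted valuation is at least as large and the upper bounds agree'' does not yield robustness. A larger valuation is precisely the dangerous side for $[\val',\val'+\perturbation]\subseteq\guard$. The argument does go through when the converted play is behind: it then reaches the very same target as $\play$, from a copy no higher.
\item Your \MinPl step ignores the perturbation edge that follows it. That edge is chosen by $\widehat{\robmaxstrategy}$ in $\play$ and by $\robmaxstrategy$ in the converted play. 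The valuation bound needs the two to be matched, and the definition of $\widehat{\robmaxstrategy}$ does not specify this.
\end{itemize}
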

\begin{proof}
  We prove each item successively.\\
	\textbf{Proof of~\cref{itm:rval-eq-val_restrictstrat_conf}}:
	By definition of $\convert_{\wf \to \rmath}$, every decision of \MaxPl follows $\robmaxstrategy$.

	\smallskip

	\noindent
	\textbf{Proof
          of~\cref{itm:rval-eq-val_restrictstrat_last-loc}}: The proof
        is by induction on the length of $\play$.  For the base case,
        suppose that $\play = (\tuple{\loc, 0}, \val)$.
        Then,
        $\convert_{\wf \to \rmath}^{\robmaxstrategy}(\play) = (\tuple{\loc,
          0}, \val) = \play$ and the property holds.

	Suppose now that
        $\play = \play_1 \xrightarrow{\trans_{a \to b}, \delay} s$
        with $\last(\play_1)$ a state of \MinPl.  In this case,
        writing $\val_1^\C$ (resp. $\val$) for the last valuation of
        $\convert_{\wf \to \rmath}^{\robmaxstrategy}(\play_1)$
        (resp. $\play$) and $c$ for the copy reached by the play
        $\convert_{\wf \to \rmath}^{\robmaxstrategy}(\play_1)$, we have
	\begin{displaymath}
		\convert_{\wf \to \rmath}^{\robmaxstrategy}(\play) =
		\convert_{\wf \to \rmath}^{\robmaxstrategy}(\play_1) \xrightarrow{\trans_{c \to b}, \max(0, \val - \val_1^\C)} s'
	\end{displaymath}
	By definition, $\play$ and
        $\convert_{\wf \to \rmath}^{\robmaxstrategy}(\play)$ use a copy of
        the same underlying transition $\delta$ (possibly in an other
        copy).  Applying the induction hypothesis to $\play_1$ allows
        us to conclude.

	Finally, suppose that
        $\play = \play_1 \xrightarrow{\trans_{a \to b}^{(j)}, \delay}
        s$ with $\last(\play_1)$ a state of \MaxPl.  Then
	\begin{displaymath}
		\convert_{\wf \to \rmath}^{\robmaxstrategy}(\play) =
		\convert_{\wf \to \rmath}^{\robmaxstrategy}(\play_1)
		\xrightarrow{\robmaxstrategy(\convert_{\wf \to \rmath}^{\robmaxstrategy}(\play_1))} s'
	\end{displaymath}
	Since, $\play$ conforms to $\widehat{\robmaxstrategy}$, by
        definition of $\widehat{\robmaxstrategy}$,
        $\widehat{\robmaxstrategy}(\play_1) = (\trans_{a \to b}^{(j)},
        \delay)$ with
        $\robmaxstrategy(\convert_{\wf \to \rmath}(\play_1)) = (\trans_{c,
          d}, t')$.  In particular, $\play$ and
        $\convert_{\wf \to \rmath}^{\robmaxstrategy}(\play)$ use the same
        undelying transition (possibly in different copies).  Here
        again applying the induction hypothesis to $\play_1$ allows
        one to conclude.

	\smallskip

	\noindent
	\textbf{Proof of~\cref{itm:rval-eq-val_restrictstrat_last-val}}:
	Here also, the proof is by induction on the length of $\play$.
        
	If $\play = (\tuple{\loc, 0}, \val)$, 
	then $\convert_{\wf \to \rmath}^{\robmaxstrategy}(\play) = (\tuple{\loc, 0}, \val) = \play$
	and the base case holds.

	Assume now that
        $\play = \play_1 \xrightarrow{\trans_{a \to b}, \delay} s$
        with $\last(\play_1)$ a state of \MinPl.  In this case,
        writing $\val$ (resp. $\val^\C$, $\val_1$, and $\val_1^\C$) for
        the last valuation of $\play$ (resp.
        $\convert_{\wf \to \rmath}^{\robmaxstrategy}(\play)$, $\play_1$, and
        $\convert_{\wf \to \rmath}^{\robmaxstrategy}(\play_1)$) and $c$ for
        the copy reached by
        $\convert_{\wf \to \rmath}^{\robmaxstrategy}(\play_1)$, we have
	\begin{displaymath}
		\convert_{\wf \to \rmath}^{\robmaxstrategy}(\play) =
		\convert_{\wf \to \rmath}^{\robmaxstrategy}(\play_1) \xrightarrow{\trans_{c \to b}, \max(0, \val - \val_1^\C)} s' \enspace.
	\end{displaymath}
	Moreover, if transition $\trans$ resets $\clockx$, then
        $0 = \val - \val^\C$ since resets are preserved in $\C$.
        Otherwise, $\val = \val_1 + \delay$ and
        $\val^\C = \val_1^\C + \max(0, \val - \val_1^\C)$, and
        $\val^\C - \val = \val_1^\C + \max(0, \val - \val_1^\C) -
        \val_1 - \delay$.  In particular, since
        $\delay = \val - \val_{1}$ and by definition of a maximum, we
        have
	\begin{displaymath}
		\val^\C - \val\quad =\quad \val_1^\C - \val  + \max(0, \val - \val_1^\C)
		\quad \geq\quad   \val_1^\C - \val  + \val - \val_1^\C = 0
	\end{displaymath}
	Conversely, the induction hypothesis applied to $\play_1$, implies
	\begin{displaymath}
		\val^\C - \val \quad =\quad  \val_1^\C - \val_1 + \max(0, \val - \val_1^\C)   - \delay
		\quad \leq\quad  2 \cdot N \cdot \perturbation + \max(0, \val - \val_1^\C)  - \delay
	\end{displaymath}
	To conclude, we distinguish two cases depending the maximum:
	\begin{itemize}
		\item If $\max(0, \val - \val_1^\C) = 0$, then
		$\val^\C - \val \leq 2 \cdot N \cdot \perturbation  - \delay \leq 2 \cdot N \cdot \perturbation$,
		since $\delay \geq 0$;
		\item Otherwise, $\max(0, \val - \val_1^\C) = \val - \val_1^\C$, and
		since $\delay = \val - \val_1$, we have
		\begin{displaymath}
			\val^\C - \val \leq  2 \cdot N \cdot \perturbation + \val - \val_1^\C - (\val - \val_1)
			= 2 \cdot N \cdot \perturbation + \val_1 - \val_1^\C \,.
		\end{displaymath}
		The induction hypothesis applied to $\play_1$ entails
                that $\val_1 - \val_1^\C \leq 0$, thus
                $\val^\C - \val \leq 2 \cdot N \cdot \perturbation$.
	\end{itemize}

	Finally, assume that
        $\play = \play_1 \xrightarrow{\trans_{a \to b}^{(j)}, \delay}
        s$ with $\last(\play_1)$ a state of \MaxPl.  In this case,
        letting
        $(\trans_{a \to b}, \delay^\C) = \robmaxstrategy(\convert_{\wf
          \to r}^{\robmaxstrategy}(\play_1))$, we have
	\begin{displaymath}
		\convert_{\wf \to \rmath}^{\robmaxstrategy}(\play) =
		\convert_{\wf \to \rmath}^{\robmaxstrategy}(\play_1)
		\xrightarrow{(\trans_{a \to b}, \delay^\C)} s'  \,.
	\end{displaymath}
	As previouly, the non-trivial case is when $\trans$ does not
        reset $\clockx$.  Since $\play$ conforms to
        $\widehat{\robmaxstrategy}$, we distinguish three cases
        according to the definition of $\widehat{\robmaxstrategy}$.
	\begin{itemize}
		\item If $\widehat{\robmaxstrategy}(\play_1) = (\trans_{a \to i}^j, i - \val_1)$, then
		by definition, $\val_1 + \delay^\C \in [i - N \cdot 2 \perturbation, i)$ for $i \in \N^*$.
		In particular, since $\delay  = i - \val_1$ and
		$\delay^\C \geq i - N \cdot 2\perturbation - \val_1$,, we have
		\begin{displaymath}
			\val^\C - \val \ =\ \val_1^\C + \delay^\C - \val_1 - \delay
			\ =\ \val_1^\C + \delay^\C - i \ \geq\ \val_1^\C +  - N \cdot 2\perturbation - \val_1 \,.
		\end{displaymath}
		Because $N$ and $\perturbation$ are positive, by
                applying the induction hypothesis to $\play_1$, we
                obtain $\val_1^\C - \val_1 \geq 0$, hence
                $\val^\C - \val \geq 0$.

		However, also applying the induction hypothesis to
                $\play_1$, we obtain
		\begin{displaymath}
			\val^\C - \val = \val_1^\C + \delay^\C - \val_1 - \delay
			\leq N \cdot 2\perturbation + \delay^\C - \delay \,.
		\end{displaymath}
		Moreover, since $\delay^\C < i - \val_1$ and $\delay = \val_1 - i$, we deduce that
		\begin{displaymath}
			\val^\C - \val < N \cdot 2\perturbation + i - \val_1 - \val_1 - i =  N \cdot 2\perturbation \,
                      \end{displaymath}
                      which is the desired result.

                    \item If
                      $\widehat{\robmaxstrategy}(\play_1) = (\trans_{c
                        \to b}, \delay^\C)$ then $\delay = \delay^\C$
                      and
                      $\val^\C - \val = \val_1^\C + \delay^\C - \val_1 -
                      \delay = \val_1^\C - \val_1$.  Thus, we can
                      conclude by applying the induction hypothesis to
                      $\play_1$.

                    \item Last, if
                      $\widehat{\robmaxstrategy}(\play_1) = (\trans_{c
                        \to c}, \max(\delay^\C - (\val_1 - \val_1^\C),
                      0))$, then we distinguish two cases according
                      $\delay$.  First, if
                      $\delay = \delay^\C - (\val_1 - \val_1^\C)$, then
                      the property trivially holds since
		\begin{displaymath}
			\val^\C - \val = \val_1^\C + \delay^\C - \val_1 - \delay
			= \val_1^\C + \delay^\C - \val_1 - \delay^\C + (\val_1 - \val_1^\C)
			= 0 \,.
		\end{displaymath}
		Otherwise, $\delay = 0$ and, applying the induction
                hypothesis to $\play_1$, we have
                $\val^\C - \val = \val_1^\C + \delay^\C - \val_1 \geq
                \delay^\C \geq 0$.  Conversely, by applying again the
                induction hypothesis to $\play_1$, we have
                $\val^\C - \val = \val_1^\C + \delay^\C - \val_1 \leq N
                \cdot 2 \perturbation + \delay^\C$.  Thus, by
                definition of the maximum,
                $\delay^\C - (\val_1 - \val_1^\C) < 0$, and we deduce
                that
                $\val^\C - \val \leq N \cdot 2 \perturbation + (\val_1
                - \val_1^\C)$.  We can now conclude since
                $\val_1 - \val_1^\C \leq 0$.
	\end{itemize}

	\smallskip

	\noindent
	\textbf{Proof of~\cref{itm:rval-eq-val_restrictstrat_weight}}:
	Here again, the proof is by induction on the length of $\play$.
        
	Assume first that $\play = (\tuple{\loc, 0}, \val)$. Then 
	 $\convert_{\wf \to \rmath}^{\robmaxstrategy}(\play) = (\tuple{\loc, 0}, \val) = \play$
	and $\weightC(\play) = 0 = \weightC(\convert_{\wf \to \rmath}^{\robmaxstrategy}(\play))$.

	Suppose now that
        $\play = \play_1 \xrightarrow{\trans_{a \to b}^{(j)}, \delay}
        s$ with $\tuple{\loc, b}$ the last location of $\play$.  In
        this case, we have
	\begin{displaymath}
		\convert_{\wf \to \rmath}^{\robmaxstrategy}(\play) =
		\convert_{\wf \to \rmath}^{\robmaxstrategy}(\play_1) \xrightarrow{\trans_{c \to d}, \delay^\C} s'
	\end{displaymath}
	Thus,
        $\weightC(\play) = \weightC(\play_1) + \weight(\trans_{a \to
          b}) + \delay \weight(\loc)$. Since the construction of $\C$
        preserves the weight of transition among different copies, we
        deduce, by induction hypothesis applied to $\play_1$, that
	\begin{displaymath}
		\weightC(\play)
		\geq  \weightC(\convert_{\wf \to \rmath}^{\robmaxstrategy}(\play_1)) - \maxWeightLoc \cdot 4N \cdot \perturbation \cdot |\play_1| +
		\weight(\trans_{c \to d}) + \delay \cdot \weight(\loc)
	\end{displaymath}
	Moreover, we remark, by definition of
        $\convert_{\wf \to \rmath}^{\robmaxstrategy}$
        and~\cref{itm:rval-eq-val_restrictstrat_last-val} applied to
        $\play$ and $\play_1$, that
	\begin{displaymath}
		\delay^\C = \val^\C - \val_1^\C = (\val^\C -\val) + (\val - \val_1) + (\val_1 - \val_1^\C)
		\leq N \cdot 2 \perturbation + (\val - \val_1) = \delay + N \cdot 2 \perturbation \,.
	\end{displaymath}
	Thus,
	\begin{displaymath}
		\weightC(\play) \geq
		\weightC(\convert_{\wf \to \rmath}^{\robmaxstrategy}(\play_1)) - \maxWeightLoc \cdot 4N \cdot \perturbation \cdot |\play_1| +
		\weight(\trans_{c \to d}) +  (\delay^\C - 2 N \cdot \perturbation)~ \weight(\loc)
	\end{displaymath}
	Since, $\weight(\loc) \leq \maxWeightLoc$ and,
        by~\cref{itm:rval-eq-val_restrictstrat_last-loc},
        $\convert_{\wf \to \rmath}^{\robmaxstrategy}$ preserves the last
        location of $\game$ (possibly with a different copy index), we
        conclude that
	\begin{align*}
		\weightC(\play) &\geq  \weightC(\convert_{\wf \to \rmath}^{\robmaxstrategy}(\play)) -
		\maxWeightLoc \cdot 4N \cdot \perturbation \cdot |\play_1| - (2 N \cdot \perturbation) \cdot \weight(\loc) \\
		&\geq \weightC(\convert_{\wf \to \rmath}^{\robmaxstrategy}(\play)) - \maxWeightLoc
		\cdot 4N \cdot \perturbation \cdot |\play| \,.
	\end{align*}
	This finishes the proof of~\cref{lem:rval-eq-val_restrictstrat-convert}.
\end{proof}

We are now in a position to, given $\play$ a play that conforms to
$\widehat{\robmaxstrategy}$, define a play $\play'$ that conforms to
$\robmaxstrategy$ and such that
$\weight(\play') \leq \weight(\play) + \varepsilon/2$, thus
establishing~\cref{eq:rval-eq-val_restrictstrat-ineq}.

For $\play$ be a play that conforms to $\widehat{\robmaxstrategy}$, we
let $\play' = \convert_{\wf \to \rmath}^{\robmaxstrategy}(\play)$.
By~\cref{itm:rval-eq-val_restrictstrat_conf}
of~\cref{lem:rval-eq-val_restrictstrat-convert}, $\play'$ does conform
to $\robmaxstrategy$.  Moreover,
applying~\cref{itm:rval-eq-val_restrictstrat_weight}
of~\cref{lem:rval-eq-val_restrictstrat-convert} we derive
\begin{align*}
\weightC(\convert_{\wf \to \rmath}^{\robmaxstrategy}(\play))   &\leq \weightC(\play) + 
	\maxWeightLoc \cdot 4N \cdot \perturbation \cdot |\play| \\
  &\leq \weightC(\play) +
	\maxWeightLoc \cdot 4N \cdot \perturbationBound \cdot \kappa \quad\quad\quad \text{ since $\perturbation \leq \perturbationBound$ and $|\play| \leq \kappa$} \\
	&\leq \weightC(\play) + 
   \dfrac{\maxWeightLoc \cdot 4N  \cdot \varepsilon \cdot \kappa}{8 N \cdot \maxWeightLoc \cdot \kappa} \quad\quad\quad \text{ since $\perturbationBound < \frac{\varepsilon}{8 N \cdot \maxWeightLoc \cdot \kappa}$}  \\
	&\leq \weightC(\play) +  \varepsilon/2
\end{align*}
Finally, \cref{itm:rval-eq-val_restrictstrat_last-loc}
of~\cref{lem:rval-eq-val_restrictstrat-convert}, yields
$\weight(\convert_{\wf \to r}^{\robmaxstrategy}(\play)) \leq
\weight(\play) + \varepsilon/2$ since either $\play$ and
$\convert_{\wf \to \rmath}^{\robmaxstrategy}(\play)$ both reach a
target state, or none of them reach a target.  That concludes the
proof of~\cref{eq:rval-eq-val_restrictstrat-ineq} and thus of
\cref{lem:rval-eq-val_restrictstrat}. \qed

%% file: app_copy_e-to-wf.tex
\section[Proof of Lemma 17]{Proof of~\cref{lem:copy_Min_rstrict}}
\label{app:copy_e-to-wf}

\copyMin*

The proof of this result uses the same techniques as
for the proof of~\cref{lem:rval-eq-val_restrictstrat}.

\medskip

Let $\minstrategy$ be an $\varepsilon/2$-optimal (exact) strategy for
\MinPl in $\C$.  Our aim is to define a $\perturbation$-well-formed
$\varepsilon$-optimal (exact) strategy $\minstrategy_{\wf}$ for
\MinPl.  Intuitively, this strategy will mimick $\minstrategy$ and
only modify delays that would fall in a forbidden interval and thus
are not $\perturbation$-well-formed.

Towards a formal definition, we use a mapping that transforms
$\perturbation$-well-formed plays into exact plays. Through this
mapping, \MinPl compensates the difference induced by the choice of
$\minstrategy_{\wf}$.  Formally, we inductively define
$\convert_{\wf \to \emath}^\minstrategy \colon \FPlays_{\wf} \to
\FPlays$ such that, for every configuration $(\tuple{\loc, a}, \val)$,
we fix
$\convert_{\wf \to \emath}^\minstrategy(\tuple{\loc, a}, \val) =
(\tuple{\loc, a}, \val)$, and for every play
$\play = \play_1 \xrightarrow{\trans, \delay} (\tuple{\loc, a},
\val)$, by letting $(\tuple{\loc_1, b}, \val_1^\C) = \last(\play_1)$
(resp.
$(\tuple{\loc_1, b}, \val_1) = \last(\convert_{\wf \to
  \emath}^\minstrategy(\play_1))$), we fix
$\convert_{\wf \to \emath}^\minstrategy(\play)$ to be equal to
\begin{displaymath}
	\begin{cases}
		\convert_{\wf \to \emath}^\minstrategy(\play_1) \xrightarrow{\trans, \delay} (\tuple{\loc, a}, \val) & \text{if $\trans$  resets } \clockx \\
		\convert_{\wf \to \emath}^\minstrategy(\play_1) \xrightarrow{\trans, \delay^\C} (\tuple{\loc, a}, \val^\C) & \text{if $\loc_1 \in \LocsMin$ and
			$\minstrategy(\convert_{\wf \to \emath}^\minstrategy(\play_1)) = (\trans, \delay^\C)$} \\
		\convert_{\wf \to \emath}^\minstrategy(\play_1) \xrightarrow{\trans, \val - \val_1^\C} (\tuple{\loc, a}, \val) & \text{otherwise}
	\end{cases}
      \end{displaymath}
      Observe that all decisions of $\convert_{\wf \to \emath}$ are
      indeed enabled, since they are induced by transitions in
      $\play$, defined by decisions of $\minstrategy$, or yield to the
      same valuation as $\play$.

Using this mapping, we can define the $\perturbation$-well-formed
strategy $\minstrategy_{\wf}$ for \MinPl.  Precisely, for every
$\perturbation$-well-formed play $\play$ ending in $(\loc, \val)$  a
configuration of \MinPl, and such that
$\minstrategy(\convert_{\wf \to \emath}^\minstrategy(\play)) = (\trans,
\delay)$, we fix
\begin{displaymath}
	\minstrategy_{\wf}(\play) =
	\begin{cases}
		(\trans, \delay) & \text{if $\val + \delay \notin \cup_{i \in \N} F_i$} \\
		(\trans, \max(0, i - N\cdot \perturbation)) & \text{if there exists $i \in \N$ such that $\val + \delay \in F_i$}
	\end{cases}
\end{displaymath}
In other cases, i.e.\ if $\play$ is not a $\perturbation$-well-formed
play, the strategy is undefined.  Here again, all decisions taken by
$\minstrategy_{\wf}$ are enabled: either the decision agrees with one
of $\minstrategy$, or its delay is non-negative and at most equal to
the delay chosen by $\minstrategy$.  Moreover, by definition, all
plays that conform to $\minstrategy_{\wf}$ are
$\perturbation$-well-formed.

To finalise the proof, we show that for evert maximal play $\play$,
starting in configuration $(\loc, \val)$ and that conforms to
$\minstrategy_{\wf}$, we have
$\weight(\play) \leq \Value(\loc, \val) + \varepsilon$. To do so, we
observe that when $\play$ conforms to $\minstrategy_{\wf}$, the mapping
$\convert_{\wf \to \emath}^\minstrategy$ has the following good
properties:
\begin{lemma}
	\label{lem:copy_e-to-wf_convert}
	Let $\play$ be a play that  conforms to $\minstrategy_{\wf}$.
	Letting $(\tuple{\loc, a}, \val) =  \last(\play)$, we have
	\begin{enumerate}
	    \item\label{itm:copy_e-to-wf_convert-conf}
		$\convert_{\wf \to \emath}^\minstrategy(\play)$ conforms to $\minstrategy$;
		
              \item\label{itm:copy_e-to-wf_convert-last-loc}
                $\convert_{\wf \to \emath}^\minstrategy$ preserves the
                last location, i.e. the last location of
                $\convert_{\wf \to \emath}^\minstrategy(\play)$ is
                $\tuple{\loc, a}$;

              \item\label{itm:copy_e-to-wf_convert-last-val} the
                valuation deviation of
                $\convert_{\wf \to \emath}^\minstrategy$ is
                controlled. Formally, if $\val^{c}$ is last valuation
                of $\convert_{\wf \to \emath}^\minstrategy(\play)$,
                then
                $0 \leq \val^\C - \val \leq N \cdot \perturbation$;

              \item\label{itm:copy_e-to-wf_convert-weight} the weight
                difference is controlled: i.e.\
                $\weightC(\play) \leq \weightC(\convert_{wf \to
                  e}^\minstrategy(\play)) + N \cdot \perturbation
                \cdot \maxWeightLoc \cdot |\play|$.
	\end{enumerate}
      \end{lemma}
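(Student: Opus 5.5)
The plan is to prove the four items by a simultaneous induction on the length of $\play$, mirroring the proof of \cref{lem:rval-eq-val_restrictstrat-convert}. The base case $\play = (\tuple{\loc, a}, \val)$ is immediate, since $\convert_{\wf \to \emath}^\minstrategy(\play) = \play$: the play conforms trivially, the location is the same, the deviation is $0$ and both weights are $0$. For the inductive step I write $\play = \play_1 \xrightarrow{\trans, \delay} (\tuple{\loc, a}, \val)$, apply the induction hypothesis to $\play_1$, and split along the three cases of the definition of $\convert_{\wf \to \emath}^\minstrategy$.

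\textbf{Items~\ref{itm:copy_e-to-wf_convert-conf} and~\ref{itm:copy_e-to-wf_convert-last-loc}.} For conformance, only \MinPl-steps matter. In the second case of the definition the step is exactly $\minstrategy(\convert_{\wf \to \emath}^\minstrategy(\play_1))$, by construction. For a resetting \MinPl-step I would check that, because $\play$ conforms to $\minstrategy_{\wf}$, the transition $\trans$ is the one chosen by $\minstrategy$. The delay can then differ from $\minstrategy$'s, but this is harmless after a reset; to be safe, I would redefine the resetting case so that it uses $\minstrategy$'s delay when the source is a \MinPl location. The last location is preserved because $\minstrategy_{\wf}$ copies the transition of $\minstrategy$, and \MaxPl steps copy the transition of $\play$ itself. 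Since both plays then take the same transition of $\C$, they reach the same copy.

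\textbf{Item~\ref{itm:copy_e-to-wf_convert-last-val}.} Resets give deviation $0$, and \MaxPl steps of the third case realign the two plays to the same valuation. The \MinPl step is the only one creating a gap. If $\val_1 + \delay^\C \notin \bigcup_i F_i$, the two delays agree. Otherwise $\val_1 + \delay^\C \in F_i = (i - N\perturbation, i)$ and $\minstrategy_{\wf}$ stops at $i - N\perturbation$, so the gap is below $N\perturbation$ and nonnegative. I expect to need the reading ``reach valuation $\max(\val, i - N\perturbation)$'' of the delay in $\minstrategy_{\wf}$, which is what the definition intends. Because every \MaxPl move realigns the plays, the gap does not accumulate, and the deviation stays in $[0, N\perturbation]$ at every step. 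The point I would double-check is the sign convention: $\val$ is the last valuation of the well-formed play, and $\val^\C$ that of the converted exact play. I also need that the converted play's \MaxPl step ``$\val - \val_1^\C$'' is a nonnegative delay. This follows from the induction hypothesis once the roles of the valuations are fixed consistently.

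\textbf{Item~\ref{itm:copy_e-to-wf_convert-weight} and main obstacle.} Each step changes the cumulated weight by $\weight(\trans)$, which is identical in both plays, plus $\delay \cdot \weight(\loc_1)$. The two delays differ by at most the valuation gaps before and after the step, so the difference per step is bounded by $N\perturbation \cdot \maxWeightLoc$. Summing over $|\play|$ steps gives the bound $N\perturbation\maxWeightLoc|\play|$. The main obstacle is precisely this per-step delay comparison. A gap at step $k$ is absorbed at step $k+1$, so naively the difference could be counted twice. I would handle this by a telescoping argument: write $\delay_k - \delay^\C_k = (\val_{k+1} - \val^\C_{k+1}) - (\val_k - \val^\C_k)$ and bound the weighted sum using $|\val_k - \val_k^\C| \le N\perturbation$. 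If a constant factor $2$ appears, I would accept it and absorb it into the choice of $\perturbationBound$, which has slack.
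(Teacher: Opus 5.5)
Your overall plan is the paper's own: a simultaneous induction along the three cases defining $\mathsf{convert}^{\sigma}_{\mathrm{wf}\to\mathrm{e}}$. However, the step your item~3 rests on fails. You claim that the delay $\nu-\nu_1^{c}$ used by the converted play at a \MaxPl step is non-negative ``by the induction hypothesis''. But with the lemma's convention the hypothesis says the exact play is \emph{ahead}, $\nu_1^{c}\geq\nu_1$. This delay therefore equals $t-(\nu_1^{c}-\nu_1)$, and it is negative as soon as the delay $t$ chosen by \MaxPl in $\rho$ is smaller than the current gap. This does happen. Just after a truncated \MinPl step the well-formed play sits at $i-Np$ while the exact play is in $(i-Np,i)$. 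If the next location belongs to \MaxPl, delay $0$ is allowed in a well-formed play, since well-formedness only restricts positive delays. No orientation convention rescues this: truncation puts the well-formed play behind, whereas realignment needs it ahead. So ``every \MaxPl move realigns the plays'' and ``the gap does not accumulate'' are unsupported. The paper's proof passes over the same point by simply writing $\nu=\nu^{c}$ in that case; you found the right pressure point but dismissed it with a false justification.

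A repair is to let the exact play use delay $\max(0,\nu-\nu_1^{c})$ and keep the residual gap. From $i-Np$, a positive \MaxPl delay must reach at least $i>\nu_1^{c}$, so realignment, including jumps, works there. For a zero delay, the guard contains the non-integer $i-Np$ (recall $Np<1$), so it is not a jumping guard $\{b\}$. Its integer upper bound is then at least $i$, so the guard still holds at $\nu_1^{c}\in(i-Np,i)$. Resetting \MaxPl steps need the same treatment: there the definition keeps the delay of $\rho$ although it starts from a different valuation.

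The truncation bound has a related flaw. You test $\nu_1+t^{c}$ (well-formed valuation plus the delay of $\sigma$) against the forbidden intervals, but the exact play lands at $\nu_1^{c}+t^{c}$. Since \MinPl steps can be consecutive, a gap $g>0$ may already exist. The well-formed play can then be truncated to $i-Np$ while the exact play is already past $i$, so the gap exceeds $Np$. The well-formed target can even violate a guard that is left-open at an integer.

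The fix is to test the exact target $\nu_1^{c}+t^{c}$ instead, and to let the well-formed play reach that same valuation when it is not forbidden (it can, being behind). This yields an invariant: either the gap is $0$, or the well-formed play is at $i-Np$ and the exact one is in $F_i$.

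Once $e_k=\nu_k^{c}-\nu_k\in[0,Np]$ holds throughout, item~4 needs neither telescoping nor a factor~$2$. For a non-resetting step, $t_k-t_k^{c}=e_k-e_{k+1}$ is a difference of two numbers in $[0,Np]$, so each step contributes at most $Np\,\maxWeightLoc$. A telescoping or Abel summation with varying location weights only gives the weaker factor. For resetting steps, use the pre-reset gap in place of $e_{k+1}$.

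Finally, your redefinition of the resetting \MinPl case to use the delay of $\sigma$ is a genuine correction. The paper's argument for item~1 tacitly uses the second case of the definition, although the reset case takes precedence.
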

\begin{proof}
  We prove each item successively.\\
  
	\noindent \textbf{Proof of~\cref{itm:copy_e-to-wf_convert-conf}}:
	The proof is by induction on the length of $\play$.
        
	If $\play = (\tuple{\loc, a}, \val)$, then
        $\convert_{\wf \to \emath}^\minstrategy(\play) = (\tuple{\loc,
          a},\val)$ and the property trivially holds.

	Otherwise, assume that
        $\play = \play_1 \xrightarrow{\trans, \delay} (\tuple{\loc,
          a}, \val)$.  We distinguish two cases according to which
        player owns the last location of $\play_1$.
	\begin{itemize}
        \item If $\play_1$ ends in a configuration of \MaxPl, then the
          property holds by induction hypothesis applied to $\play_1$.
        \item Otherwise, $\play_1$ ends in a configuration of \MinPl.
          Now, by induction hypothesis applied to $\play_1$, we know
          that $\convert_{\wf \to \emath}^\minstrategy(\play_1)$
          conforms to $\minstrategy$.  Moreover, since $\play$
          conforms to $\minstrategy_{\wf}$, we have that
          $\minstrategy(\convert_{\wf \to
            \emath}^\minstrategy(\play_1)) = (\trans, \delay^\C)$ and
          $\convert_{\wf \to \emath}^\minstrategy(\play) =
          \convert_{wf \to e}^\minstrategy(\play_1)
          \xrightarrow{\trans, \delay^\C} (\tuple{\loc, a}, \val^\C)$.
          Thus, $\convert_{\wf \to \emath}^\minstrategy(\play)$
          conforms to $\minstrategy$.
	\end{itemize}

	\noindent\textbf{Proof of~\cref{itm:copy_e-to-wf_convert-last-loc}}:
	Since $\convert_{\wf \to \emath}^\minstrategy$ preserves the
        sequence of original transitions (i.e. transitions from the
        underlying path), $\play$ and
        $\convert_{\wf \to \emath}^\minstrategy(\play)$ end in the
        same location.

	\medskip
	\noindent\textbf{Proof of~\cref{itm:copy_e-to-wf_convert-last-val}}:
	The proof is also here by induction on the length of $\play$.
        
	If $\play = (\tuple{\loc, a}, \val)$, then
        $\convert_{\wf \to \emath}^\minstrategy(\play) = (\tuple{\loc,
          a},\val)$ and the property trivially holds.

	Otherwise, assume that
        $\play = \play_1 \xrightarrow{\trans, \delay} (\tuple{\loc,
          a}, \val)$.  We write $\val_1$ (resp. $\val_1^\C$) for the
        last valuation of $\play_1$ (resp. of
        $\convert_{\wf \to \emath}^\minstrategy(\play_1)$), and we
        distinguish three cases according to the definition of the
        mapping $\convert_{\wf \to \emath}^\minstrategy$.
	\begin{itemize}
        \item If $\trans$ resets $\clockx$, then
          $\convert_{\wf \to \emath}^\minstrategy(\play) = \convert_{wf \to
            e}^\minstrategy(\play_1) \xrightarrow{\trans, \delay}
          (\tuple{\loc, a}, \val)$ and $\val = \val^\C = 0$.  Thus, the
          property holds.

        \item If $\loc_1 \in \LocsMin$ and
          $\minstrategy(\convert_{\wf \to
            \emath}^\minstrategy(\play_1)) = (\trans, \delay^\C)$,
          then
          $\convert_{\wf \to \emath}^\minstrategy(\play) =
          \convert_{\wf \to \emath}^\minstrategy(\play_1)
          \xrightarrow{\trans, \delay^\C} (\tuple{\loc, a}, \val^\C)$.
          Thus, in this case,
          $\val^\C - \val = \val^\C - \val_1 + \delay$ where $\delay$
          is chosen by $\minstrategy_{\wf}(\play_1)$.  In particular,
          by definition of $\minstrategy_{\wf}(\play_1)$, we have
          $\delay \in \{\delay^\C, \max(0, i - N \cdot \perturbation)\}$.\\
          First, if we suppose that $\delay = \delay^\C$, then
          $\val^\C - \val = \val_1^\C + \delay^\C - \val_{1} -
          \delay^\C = \val_1^\C - \val_1$
          and we conclude by applying the induction hypothesis to $\play_1$.\\
          Otherwise, $\delay = \max(0, i - N \cdot \perturbation)$ and
          $\val^\C - \val = \val^\C - \val_1 - \max(0, i - N \cdot
          \perturbation)$.  In particular, we have
		\begin{displaymath}
			\val^\C - \val = \val_1^\C + \delay^\C - \val_{1} - \max(0, i - N \cdot \perturbation)
			= \underbrace{\val_1^\C - \val_{1}}_{\geq 0} + \underbrace{\delay^\C - \max(0, i - N \cdot \perturbation)}_{\geq 0}
			\geq 0
		\end{displaymath}
		by induction hypothesis applied to $\play_1$ and
                since, in this case, $\delay^\C \geq \delay$. Indeed,
                by definition of $\minstrategy_{\wf}$,
                $\delay = \max(0, i - N \cdot \perturbation)$ when
                $\val_1 + \delay^\C > i - N \cdot \perturbation$.

		Moreover, since $- \max(0, i - N \cdot \perturbation) \leq -i + N \cdot \perturbation$, $\val_1^\C \geq 0$, and
		by definition of $\minstrategy_{\wf}$,  $\val^\C \leq i$, we deduce that
		\begin{displaymath}
			\val^\C - \val = \val^\C - \val_{1} - \max(0, i - N \cdot \perturbation)
			\leq i - i + N \cdot \perturbation = N \cdot \perturbation \,.
		\end{displaymath}

              \item In the last case,
                $\convert_{\wf \to \emath}^\minstrategy(\play) =
                \convert_{\wf \to \emath}^\minstrategy(\play_1)
                \xrightarrow{\trans, \delay} (\tuple{\loc, a}, \val)$,
                i.e. $\val = \val^\C$.  Thus, the property holds.
	\end{itemize}

	\noindent
	\noindent\textbf{Proof of~\cref{itm:copy_e-to-wf_convert-weight}}:
	The proof is again by induction on the length of $\play$.
        
	If $\play = (\tuple{\loc, a}, \val)$, then
        $\convert_{\wf \to \emath}^\minstrategy(\play) = (\tuple{\loc,
          a},\val)$ and the property trivially holds.

	Otherwise, assume that
        $\play = \play_1 \xrightarrow{\trans, \delay} (\tuple{\loc,
          a}, \val)$.  We write $\tuple{\loc_1, b}$
        (resp. $\tuple{\loc_1^\C, c}$) for the last location of
        $\play_1$ (resp. of
        $\convert_{\wf \to \emath}^\minstrategy(\play_1)$), and we
        distinguish three cases corresponding to the case distinction
        in the definition of the mapping
        $\convert_{\wf \to \emath}^\minstrategy$.
	\begin{itemize}
        \item If $\trans$ resets $\clockx$, then
          $\convert_{\wf \to \emath}^\minstrategy(\play) = \convert_{wf \to
            e}^\minstrategy(\play_1) \xrightarrow{\trans, \delay}
          (\tuple{\loc, a}, \val)$ and
		\begin{align*}
			\weightC(\play) &= \weightC(\play_1) + \weight(\trans) + \delay \cdot \weight(\tuple{\loc, a}) \\
			&\leq  \weightC(\convert_{\wf \to \emath}^\minstrategy(\play_1)) +
                   N \cdot \perturbation \cdot \maxWeightLoc \cdot |\play_1| + \weight(\trans) + \delay \cdot \weight(\tuple{\loc, a})\\
                  			&\hspace{5cm}\text{(by induction hypothesis)}\\
				&= \weightC(\convert_{\wf \to \emath}^\minstrategy(\play_1))  +
			N \cdot \perturbation \cdot \maxWeightLoc \cdot |\play_1| + \weight(\trans) + \delay \cdot \weight(\tuple{\loc_1^\C, c})
			\\
			&\hspace{5cm}\text{(by~\cref{def:copy} and~\cref{itm:rval-eq-val_restrictstrat_last-val})}\\
			&= \weightC(\convert_{\wf \to \emath}^\minstrategy(\play)) + N \cdot \perturbation \cdot \maxWeightLoc \cdot |\play|
		\end{align*}

		\item If $\loc_1 \in \LocsMin$ and
		$\minstrategy(\convert_{\wf \to \emath}^\minstrategy(\play_1)) = (\trans, \delay^\C)$,
		then $\convert_{\wf \to \emath}^\minstrategy(\play) =
		\convert_{\wf \to \emath}^\minstrategy(\play_1) \xrightarrow{\trans, \delay^\C} (\tuple{\loc, a}, \val^\C)$.
		where $\delay \in \{\delay^\C,  \max(0, i - N \cdot \perturbation)\}$ (as in~\cref{itm:rval-eq-val_restrictstrat_last-val}).
		Thus, if $\delay = \delay^\C$, then
		\begin{align*}
			\weightC(\play) &= \weightC(\play_1) + \weight(\trans) + \delay \cdot \weight(\tuple{\loc, a}) \\
			&\leq  \weightC(\convert_{\wf \to \emath}^\minstrategy(\play_1)) + N \cdot \perturbation \cdot \maxWeightLoc \cdot |\play_1|
                   + \weight(\trans) + \delay \cdot \weight(\tuple{\loc, a})\\
                  	&\hspace{5cm}\text{(by induction hypothesis)}\\
			&= \weightC(\convert_{\wf \to \emath}^\minstrategy(\play_1))  + N \cdot \perturbation \cdot \maxWeightLoc \cdot
			|\play_1| + \weight(\trans) + \delay \cdot \weight(\tuple{\loc_1^\C, c})
			\\
			&\hspace{5cm}\text{(by~\cref{def:copy} and $\delay = \delay^\C$)}\\
			&= \weightC(\convert_{\wf \to \emath}^\minstrategy(\play)) + N \cdot \perturbation \cdot \maxWeightLoc \cdot |\play|
		\end{align*}
		Otherwise, $\delay = \max(0, i - N \cdot \perturbation)$, then
		we distinguish two cases.
		If $\delay = 0$, then
		\begin{align*}
			\weightC(\play) &= \weightC(\play_1) + \weight(\trans) \\
			&\leq  \weightC(\convert_{\wf \to \emath}^\minstrategy(\play_1)) + N \cdot \perturbation \cdot \maxWeightLoc \cdot
			|\play_1| + \weight(\trans) + \delay^\C \cdot \weight(\tuple{\loc, a})
                   -\delay^\C \cdot \weight(\tuple{\loc, a})\\
                  	&\hspace{5cm}\text{(by induction hypothesis)}\\
			&= \weightC(\convert_{\wf \to \emath}^\minstrategy(\play_1))  + N \cdot \perturbation \cdot \maxWeightLoc \cdot
			|\play_1| + \weight(\trans) + \delay^\C \cdot \weight(\tuple{\loc_1^\C, c})
			-\delay^\C \cdot \weight(\tuple{\loc, a})\\
			&\hspace{5cm}\text{(by~\cref{def:copy})}\\
			&= \weightC(\convert_{\wf \to \emath}^\minstrategy(\play)) + N \cdot \perturbation \cdot \maxWeightLoc \cdot
			|\play_1| -\delay^\C \cdot \weight(\tuple{\loc, a}) \\
			&\leq \weightC(\convert_{\wf \to \emath}^\minstrategy(\play)) + N \cdot \perturbation \cdot \maxWeightLoc \cdot
			|\play_1| + \delay^\C \cdot \maxWeightLoc
			\qquad \text{(since $\weight(\tuple{\loc, a}) \leq  \maxWeightLoc$)} \\
			& \leq \weightC(\convert_{\wf \to \emath}^\minstrategy(\play)) + N \cdot \perturbation
			\cdot \maxWeightLoc \cdot |\play_1| + N \cdot \perturbation \cdot \maxWeightLoc \\
			&\hspace{3cm} \text{(by definition of $\minstrategy_{\wf}$ - it is the unique case where $\trans = 0$)} \\
			& \leq \weightC(\convert_{\wf \to \emath}^\minstrategy(\play)) + N \cdot \perturbation \cdot \maxWeightLoc \cdot |\play|
			\qquad\text{(by definition of $\perturbation$)}
		\end{align*}
		Otherwise $\delay = i - N \cdot \perturbation$, then
		\begin{align*}
			\weightC(\play) &= \weightC(\play_1) + \weight(\trans)+  (i - N \cdot \perturbation) \cdot \weight(\tuple{\loc, a})\\
			&\leq  \weightC(\convert_{\wf \to \emath}^\minstrategy(\play_1)) +  (i - N \cdot \perturbation) \cdot \weight(\tuple{\loc, a})
			+ N \cdot \perturbation \cdot \maxWeightLoc \cdot
                   |\play_1| + \weight(\trans) )\\
                  	&\hspace{5cm}\text{(by induction hypothesis)}\\
			&= \weightC(\convert_{\wf \to \emath}^\minstrategy(\play_1)) +  (i - N \cdot \perturbation) \cdot \weight(\tuple{\loc, a})
			-\delay^\C \cdot \weight(\tuple{\loc, a})\\
			&\hspace{5cm}\text{(by the same reasoning as above)}\\
			&= \weightC(\convert_{\wf \to \emath}^\minstrategy(\play_1)) +
			(i - N \cdot \perturbation + \delay^\C) \cdot \maxWeightLoc
			\qquad\text{(since $|\weight(\tuple{\loc, a})| \leq \maxWeightLoc$)}\\
			&\leq \weightC(\convert_{\wf \to \emath}^\minstrategy(\play_1)) +
			(i - N \cdot \perturbation + N \cdot \perturbation) \cdot \maxWeightLoc
			\\
			&\hspace{3cm}\text{(since $\delay^\C \leq i - \val_1 + N \cdot \perturbation$ by~\cref{itm:rval-eq-val_restrictstrat_last-val}
				and $\val_1 \leq i - N \cdot  \perturbation$ by $\minstrategy_{\wf}$) }\\
			& \leq \weightC(\convert_{\wf \to \emath}^\minstrategy(\play)) + N \cdot \perturbation \cdot \maxWeightLoc \cdot |\play|
			\qquad\text{(by definition of $\perturbation$)}
		\end{align*}

		\item Otherwise, $\convert_{\wf \to \emath}^\minstrategy(\play) =
		\convert_{\wf \to \emath}^\minstrategy(\play_1) \xrightarrow{\trans, \delay} (\tuple{\loc, a}, \val)$,
		i.e. $\val = \val^\C$.
		Thus,
		\begin{align*}
			\weightC(\play) &= \weightC(\play_1) + \weight(\trans) + \delay \cdot \weight(\tuple{\loc, a}) \\
			&\leq  \weightC(\convert_{\wf \to \emath}^\minstrategy(\play_1)) +
                   N \cdot \perturbation \cdot \maxWeightLoc \cdot |\play_1| + \weight(\trans) + \delay \cdot \weight(\tuple{\loc, a})\\
                  	&\hspace{5cm}\text{(by induction hypothesis)}\\
			&= \weightC(\convert_{\wf \to \emath}^\minstrategy(\play_1))  +
			N \cdot \perturbation \cdot \maxWeightLoc \cdot |\play_1| + \weight(\trans) + \delay \cdot \weight(\tuple{\loc_1^\C, c})
			\\
			&\hspace{5cm}\text{(by~\cref{def:copy} and~\cref{itm:rval-eq-val_restrictstrat_last-val})}\\
			&= \weightC(\convert_{\wf \to \emath}^\minstrategy(\play)) + N \cdot \perturbation \cdot \maxWeightLoc \cdot |\play|
		\end{align*}
              \end{itemize}
              This concludes the proof
              of~\cref{lem:copy_e-to-wf_convert}.  \qedhere
\end{proof}

To wrap up the proof of~\cref{lem:copy_Min_rstrict}, given a play
$\play$ that conforms to $\minstrategy_{\wf}$, we have,
by~\cref{itm:copy_e-to-wf_convert-weight},
$\weightC(\play) \leq \weightC(\convert_{wf \to
  e}^\minstrategy(\play)) + N \cdot \perturbation \cdot \maxWeightLoc
\cdot |\play|$.  Moreover, by definition of $\perturbation$ and since
$|\play| \leq \kappa$, we deduce that
$\weightC(\play) \leq \weightC(\convert_{wf \to
  e}^\minstrategy(\play)) + \varepsilon/2$.  Now, since $\play$ and
$\weightC(\convert_{\wf \to \emath}^\minstrategy(\play))$ end in the same
location (by~\cref{itm:rval-eq-val_restrictstrat_last-loc}), we deduce
that
$\weight(\play) \leq \weight(\convert_{\wf \to \emath}^\minstrategy(\play))
+ \varepsilon/2$.  Finally, by hypothesis on $\minstrategy$ and since
$\convert_{\wf \to \emath}^\minstrategy(\play)$ conforms to $\minstrategy$
(by~\cref{itm:rval-eq-val_restrictstrat_conf}), we deduce that
$\weight(\play) \leq \Value(\loc, \val) + \varepsilon$ where
$(\loc, \val)$ is the initial configuration of $\play$.  This
concludes the proof of~\cref{lem:copy_Min_rstrict}. 
\qed

%% file: app_copy-rob-eq-val.tex
\section[Proof of Lemma 18]{Proof of~\cref{lem:copy_wf-to-r}}
\label{app:copy_wf-to-r}

\copyWFtoR*

\bigskip

\noindent
\textbf{\cref{itm:copy_wf-to-r_last-loc}} is immediate by definition
of $\convert_{\wf \to \rmath}^{\robmaxstrategy}$.

\medskip

\noindent
\textbf{Proof of~\cref{itm:copy_wf-to-r_last-val}}:
We reason by induction on the length of $\play$. Assume first that
$\play = (\tuple{\loc, 0}, \val)$.  In this case,
$\convert_{\wf \to \rmath}^{\robmaxstrategy}(\play) = (\tuple{\loc,
  0}, \val) = \play$.  Thus, the property holds.

\smallskip

Then, suppose that $\play = \play_1 \xrightarrow{\trans, \delay} s$
with $\last(\play_1)$ a real state of \MinPl,
and we let $\val$ (resp. $\val^{c}$, $\val_1$, and
$\val_1^{c}$) be the last valuation of $\play$
(resp. of $\convert_{\wf \to \rmath}^{\robmaxstrategy}(\play) $, $\play_1$, and
$\convert_{\wf \to \rmath}^{\robmaxstrategy}(\play_1) $). In this case,
we distinguish two cases according the definition of $\convert_{\wf \to \rmath}^{\robmaxstrategy}(\play_1)$.
First, consider that $\val - \val^c_1 > 0$, i.e. $\trans$ does not contain a reset since otherwise
$\val = 0$ and $\val - \val^c_1 \leq 0$, and we have
$\convert_{\wf \to \rmath}^{\robmaxstrategy}(\play) =
	\convert_{\wf \to \rmath}^{\robmaxstrategy}(\play_1) \xrightarrow{\trans, \val - \val_1^c} s'
	\xrightarrow{\trans, \delay^p} s''$.
In particular, $\val^{c} - \val = \val^{c}_1 + (\val - \val^c_1) + \delay^p - \val = \delay^p$.
Thus, since $0 \leq \delay^p \leq \perturbation$, we deduce that
$0 \leq \val^{c} - \val \leq \perturbation \leq \perturbation \cdot |\play|$.

Otherwise, we have
$\convert_{\wf \to \rmath}^{\robmaxstrategy}(\play) = \convert_{\wf
  \to \rmath}^{\robmaxstrategy}(\play_1) \xrightarrow{\trans, \delay}
s' \xrightarrow{\trans, \delay^p} s''$.  In particular,
$\val^{c} - \val = \val^{c}_1 + \delay + \delay^p - \val_{1} - \delay
= \val^c_1 - \val_1 + \delay^p$.  Thus, since $0 \leq \delay^p$ and
$0 \leq \val^c_1 - \val_1$ (by induction hypothesis), we deduce that
$\val^{c} - \val \geq 0$.  Conversely, since
$\delay^p \leq \perturbation$ and
$\val^c_1 - \val_1 \leq \perturbation \cdot |\play_1|$ (by induction
hypothesis), we deduce that
$\val^{c} - \val \leq \perturbation + \perturbation \cdot |\play_1| =
\perturbation \cdot |\play|$.  This concludes this case.

\smallskip

Finally, we suppose that
$\play = \play_1 \xrightarrow{\trans, \delay} s$ with $\last(\play_1)$
a real state of \MaxPl and we let $\val$ (resp. $\val^{c}$, $\val_1$,
and $\val_1^{c}$) be the last valuation of $\play$ (resp. of
$\convert_{\wf \to \rmath}^{\robmaxstrategy}(\play) $, $\play_1$, and
$\convert_{\wf \to \rmath}^{\robmaxstrategy}(\play_1) $). In this
case, we distinguish two subcases according to the definition of
$\convert_{\wf \to \rmath}^{\robmaxstrategy}(\play_1)$.  First,
consider that
$\robmaxstrategy(\convert_{\wf \to \rmath}^{\robmaxstrategy}(\play_1))
= (\trans, \delay - \val^c_1 + \val_1)$ and we have
$\convert_{\wf \to \rmath}^{\robmaxstrategy}(\play) = \convert_{\wf
  \to \rmath}^{\robmaxstrategy}(\play_1)
\xrightarrow{\robmaxstrategy(\convert_{\wf \to
    \rmath}^{\robmaxstrategy}(\play_1) } s$.  In particular,
$\val^c - \val = \val_c^1 + (\delay - \val^c_1 + \val_1) - \val_1 -
\delay = 0$.  Thus,
$0 \leq \val^{c} - \val \leq \perturbation \cdot |\play|$.

Otherwise, we have $\convert_{\wf \to \rmath}^{\robmaxstrategy}(\play) =
\convert_{\wf \to \rmath}^{\robmaxstrategy}(\play_1) \xrightarrow{\trans, \delay} s'$.
In particular, $\val^{c} - \val = \val^{c}_1 + \delay - \val_{1} - \delay = \val^c_1 - \val_1$.
Thus, by induction hypothesis applied to $\play_1$, we deduce that
$0 \leq \val^{c} - \val \leq \perturbation \cdot |\play_1| \leq \perturbation \cdot |\play|$.
Thus, the property holds.

\medskip

\noindent
\textbf{Proof of~\cref{itm:copy_wf-to-r_weight}}:
Here again, we reason by induction on the length of $\play$.
First, we suppose that $\play = (\tuple{\loc, 0}, \val)$.
In this case, $\convert_{\wf \to \rmath}^{\robmaxstrategy}(\play) = (\tuple{\loc, 0}, \val) = \play$
and $\weightC(\play) = 0 = \weightC(\convert_{\wf \to \rmath}^{\robmaxstrategy}(\play))$.

Then, we suppose that
$\play = \play_1 \xrightarrow{\trans, \delay} s$, with $s$ a real
state of \MinPl, and we distinguish two cases. By letting $\val$
(resp. $\val_1^c$) be the last valuation of $\play$
(resp. $\convert_{\wf \to \rmath}^{\robmaxstrategy}(\play_1)$), we
first consider that $\val - \val_1^c > 0$, i.e. $\trans$ does not
contain a reset since otherwise $\val = 0$ and
$\val - \val^c_1 \leq 0$. In this case,
$\convert_{\wf \to \rmath}^{\robmaxstrategy}(\play) = \convert_{\wf
  \to \rmath}^{\robmaxstrategy}(\play_1) \xrightarrow{\trans, \val -
  \val_1^c} s' \xrightarrow{\trans, \delay^p} s''$.  In particular, we
have
\begin{align*}
	\weightC(\convert_{\wf \to \rmath}^{\robmaxstrategy}(\play))
	&= \weightC(\convert_{\wf \to \rmath}^{\robmaxstrategy}(\play_1)) +
	(\val - \val_1^c + \delay^{\perturbation}) \cdot \weight(\loc) + \weight(\trans) \\
	&\leq \weightC(\play_1) +
	(\val - \val_1^c + \delay^{\perturbation}) \cdot \weight(\loc) + \weight(\trans) + \maxWeightLoc \cdot \perturbation \cdot |\play_1|^2 \\
	& \hspace{5cm} \text{(by induction hypothesis)}\\
	&= \weightC(\play_1) +
	(\val_1 + \delay - \val_1^c + \delay^{\perturbation}) \cdot \weight(\loc) + \weight(\trans) +
	\maxWeightLoc \cdot \perturbation \cdot |\play_1|^2 \\
	& \hspace{5cm} \text{(since $\trans$ does not contain a reset)}\\
	&= \weightC(\play) +
	(\val_1  - \val_1^c + \delay^{\perturbation}) \cdot \weight(\loc)  +
	\maxWeightLoc \cdot \perturbation \cdot |\play_1|^2
\end{align*}
By~\cref{itm:copy_wf-to-r_last-val}, we note that
$-\perturbation \cdot |\play_1| \leq \val_1 - \val_1^c \leq 0$,
i.e.
$-\perturbation \cdot |\play_1| \leq \val_1 - \val_1^c + \delay^p \leq
\perturbation$.  To conclude this case, we distinguish two subcases
according to the sign of $ \weight(\loc)$.
\begin{itemize}
	\item First, we suppose that $\weight(\loc) \geq 0$, and we deduce
	\begin{align*}
		\weightC(\convert_{\wf \to \rmath}^{\robmaxstrategy}(\play))
		&\leq \weightC(\play) +
		\delay^{\perturbation} \cdot \weight(\loc)  +
		\maxWeightLoc \cdot \perturbation \cdot |\play_1|^2 \\
		&\leq \weightC(\play) +
		\perturbation \cdot \maxWeightLoc  +
		\maxWeightLoc \cdot \perturbation \cdot |\play_1|^2 \\
		&\leq \weightC(\play) + \maxWeightLoc \cdot \perturbation \cdot |\play|^2
	\end{align*}
	as claimed.
	\item Otherwise $\weight(\loc) < 0$, and we have
	\begin{align*}
		\weightC(\convert_{\wf \to \rmath}^{\robmaxstrategy}(\play))
		&\leq \weightC(\play) -
		(\val_1  - \val_1^c + \delay^{\perturbation}) \cdot |\weight(\loc)|  +
		\maxWeightLoc \cdot \perturbation \cdot |\play_1|^2 \\
		&= \weightC(\play) +
		(\val_1^c - \val_1 - \delay^{\perturbation}) \cdot |\weight(\loc)|  +
		\maxWeightLoc \cdot \perturbation \cdot |\play_1|^2 \\
		&\leq \weightC(\play) +
		(\perturbation  \cdot |\play_1| - \perturbation) \cdot |\weight(\loc)|  +
		\maxWeightLoc \cdot \perturbation \cdot |\play_1|^2
		\qquad\quad\text{(by~\cref{itm:copy_wf-to-r_last-val})}\\
		&\leq \weightC(\play) +
		\perturbation \cdot |\play_1|  \cdot |\weight(\loc)|  +
		\maxWeightLoc \cdot \perturbation \cdot |\play_1|^2 \\
		&\leq \weightC(\play) +
		\perturbation \cdot |\play_1|  \cdot \maxWeightLoc  +
		\maxWeightLoc \cdot \perturbation \cdot |\play_1|^2 \\
		&\leq \weightC(\play) + \maxWeightLoc \cdot \perturbation \cdot |\play|^2
	\end{align*}
	as claimed.
\end{itemize}

Now, we suppose that
$\convert_{\wf \to \rmath}^{\robmaxstrategy}(\play) =
\convert_{\wf \to \rmath}^{\robmaxstrategy}(\play_1) \xrightarrow{\trans,\delay}
s' \xrightarrow{\trans, \delay^p} s''$, and we have
\begin{align*}
	\weightC(\convert_{\wf \to \rmath}^{\robmaxstrategy}(\play))
	&= \weightC(\convert_{\wf \to \rmath}^{\robmaxstrategy}(\play_1)) +
	(\delay + \delay^{\perturbation}) \cdot \weight(\loc) + \weight(\trans) \\
	&\leq \weightC(\play_1) +
	(\delay + \delay^{\perturbation}) \cdot \weight(\loc) + \weight(\trans) + \maxWeightLoc \cdot \perturbation \cdot |\play_1|^2 \\
	& \hspace{5cm} \text{(by induction hypothesis)}\\
	&= \weightC(\play_1) + \delay \cdot \weight(\loc) + \weight(\trans) +
	\delay^{\perturbation} \cdot \weight(\loc) +
	\maxWeightLoc \cdot \perturbation \cdot |\play_1|^2 \\
	&= \weightC(\play) +
	\delay^{\perturbation} \cdot \weight(\loc)  +
	\maxWeightLoc \cdot \perturbation \cdot |\play_1|^2 \\
	&\leq \weightC(\play) +
	\delay^{\perturbation} \cdot \maxWeightLoc  +
	\maxWeightLoc \cdot \perturbation \cdot |\play_1|^2 \\
	&\leq \weightC(\play) +
	\perturbation \cdot \maxWeightLoc  +
	\maxWeightLoc \cdot \perturbation \cdot |\play_1|^2 \\
	&\leq \weightC(\play) +
	\maxWeightLoc \cdot \perturbation \cdot |\play|^2
\end{align*}
as claimed.
Thus, the property holds in this case.

\medskip

Finally, we suppose that $\last(\play) \in \ConfMax$ and, as before,
we distinguish two cases according to the definition of
$\convert_{\wf \to \rmath}^{\robmaxstrategy}(\play_1)$.  First,
consider that
$\robmaxstrategy(\convert_{\wf \to \rmath}^{\robmaxstrategy}(\play_1))
= (\trans, \delay - \val^c_1 + \val_1)$ and
$\convert_{\wf \to \rmath}^{\robmaxstrategy}(\play) = \convert_{\wf
  \to \rmath}^{\robmaxstrategy}(\play_1)
\xrightarrow{\robmaxstrategy(\convert_{\wf \to
    \rmath}^{\robmaxstrategy}(\play_1) } s$.  In particular, we have
\begin{align*}
	\weightC(\convert_{\wf \to \rmath}^{\robmaxstrategy}(\play))
	&= \weightC(\convert_{\wf \to \rmath}^{\robmaxstrategy}(\play_1)) +
	\delay^c \cdot \weight(\loc) + \weight(\trans)  \\
	&\leq \weightC(\play_1) +
	\delay^c \cdot \weight(\loc) + \weight(\trans) +
	\maxWeightLoc \cdot \perturbation \cdot |\play_1|^2  \\
	& \hspace{5cm} \text{(by induction hypothesis)}\\
	&= \weightC(\play_1) +
	(\delay - \val_1^c + \val_1) \cdot \weight(\loc) + \weight(\trans) + \maxWeightLoc \cdot \perturbation \cdot |\play_1|^2 \\
	&=  \weightC(\play)   +  (\val_1 - \val_1^c) \cdot \weight(\loc) +
	\maxWeightLoc \cdot \perturbation \cdot |\play_1|^2
\end{align*}
Similar to the case of states of \MinPl,
by~\cref{itm:copy_wf-to-r_last-val}, we get that
$-\perturbation \cdot |\play_1| \leq \val_1 - \val_1^c \leq 0$.  Thus,
to conclude this case, we distinguish two subcases according to the
sign of $ \weight(\loc)$.
\begin{itemize}
	\item First, we suppose that $\weight(\loc) \geq 0$, and we deduce
	\begin{align*}
		\weightC(\convert_{\wf \to \rmath}^{\robmaxstrategy}(\play))
		&\leq \weightC(\play) +
		\maxWeightLoc \cdot \perturbation \cdot |\play_1|^2
		\leq \weightC(\play) + \maxWeightLoc \cdot \perturbation \cdot |\play|^2
	\end{align*}
	as claimed.
	\item Otherwise $\weight(\loc) < 0$, and we have
	\begin{align*}
		\weightC(\convert_{\wf \to \rmath}^{\robmaxstrategy}(\play))
		&\leq \weightC(\play) -
		(\val_1  - \val_1^c) \cdot |\weight(\loc)|  +
		\maxWeightLoc \cdot \perturbation \cdot |\play_1|^2 \\
		&= \weightC(\play) +
		(\val_1^c - \val_1) \cdot |\weight(\loc)|  +
		\maxWeightLoc \cdot \perturbation \cdot |\play_1|^2 \\
		&\leq \weightC(\play) +
		(\perturbation  \cdot |\play_1|) \cdot |\weight(\loc)|  +
		\maxWeightLoc \cdot \perturbation \cdot |\play_1|^2
		\qquad\quad\text{(by~\cref{itm:copy_wf-to-r_last-val})}\\
		&\leq \weightC(\play) +
		\perturbation \cdot |\play_1|  \cdot \maxWeightLoc  +
		\maxWeightLoc \cdot \perturbation \cdot |\play_1|^2 \\
		&\leq \weightC(\play) + \maxWeightLoc \cdot \perturbation \cdot |\play|^2
	\end{align*}
	as claimed.
\end{itemize}

Now, we suppose that
$\convert_{\wf \to \rmath}^{\robmaxstrategy}(\play) =
\convert_{\wf \to \rmath}^{\robmaxstrategy}(\play_1) \xrightarrow{\trans,\delay}
s' $, and we have
\begin{align*}
	\weightC(\convert_{\wf \to \rmath}^{\robmaxstrategy}(\play))
	&= \weightC(\convert_{\wf \to \rmath}^{\robmaxstrategy}(\play_1)) +
	\delay \cdot \weight(\loc) + \weight(\trans) \\
	&\leq \weightC(\play_1) +
	\delay  \cdot \weight(\loc) + \weight(\trans) + \maxWeightLoc \cdot \perturbation \cdot |\play_1|^2 \\
	& \hspace{5cm} \text{(by induction hypothesis)}\\
	&= \weightC(\play) +
	\maxWeightLoc \cdot \perturbation \cdot |\play_1|^2 \\
	&\leq \weightC(\play) +
	\maxWeightLoc \cdot \perturbation \cdot |\play|^2
\end{align*}
as claimed.  Thus, the property also holds in this case.  This
concludes the proof of this lemma. \qed

\newpage
\section[Proof of Lemma 19]{Proof of~\cref{lem:rval-eq-val_to-exact_conf}}
\label{app:rval-eq-val_to-exact_conf}

\eqValConf*

We reason by induction on the length of $\play$.  If
$\play = (\tuple{\loc, 0}, \val)$, then
$\convert_{\wf \to \rmath}^{\robmaxstrategy}(\play) = \play$ and the
property trivially holds.  Otherwise, we write
$\play = \play_1 \xrightarrow{\trans, \delay} (\tuple{\loc, a}, \val)$
and we distinguish two cases depending on the player owning the last
configuration.

If $\last(\play_1) \in \ConfMin$, then
$\last(\convert_{\wf \to \rmath}^{\robmaxstrategy}(\play_1)) \in
\ConfMin$ (by~\cref{itm:copy_wf-to-r_last-loc}
of~\cref{lem:copy_wf-to-r} applied to $\play_1$).  Applying the
induction hypothesis to $\play_1$ and using the fact that
$\convert_{\wf \to \rmath}^{\robmaxstrategy}$ follows
$\robmaxstrategy$ to choose the perturbation, we can conclude.

Otherwise, $\last(\play_1) \in \ConfMax$. Then,
\cref{itm:copy_wf-to-r_last-loc} of~\cref{lem:copy_wf-to-r} applied to
$\play_1$ entails
$\last(\convert_{\wf \to \rmath}^{\robmaxstrategy}(\play_1)) \in
\ConfMax$.  Applying the induction hypothesis to $\play_1$ and using
the fact that $\convert_{\wf \to \rmath}^{\robmaxstrategy}$ follows
$\robmaxstrategy$ to choose the delay $t$ and the transition $\trans$,
we can conclude. \qed